\documentclass{amsart}
\usepackage[dvips]{graphicx}
\usepackage{amscd}
\usepackage{amsxtra}
\usepackage{enumerate}
\usepackage{amsfonts}
\usepackage{amsmath}
\usepackage{amssymb}
\usepackage{amsthm}

\pagestyle{plain}

\newtheorem{theorem}{Theorem}[section]
\newtheorem{corollary}[theorem]{Corollary}
\newtheorem{lemma}[theorem]{Lemma}
\newtheorem{proposition}[theorem]{Proposition}
\theoremstyle{definition}
\newtheorem{definition}[theorem]{Definition}

\newtheorem{example}[theorem]{Example}
\newtheorem{claim}[theorem]{Claim}

\theoremstyle{remark}
\newtheorem{remark}[theorem]{Remark}
\newtheorem*{acknowledgements}{Acknowledgements}

\numberwithin{equation}{section}



\usepackage{graphicx}

\newcommand{\Iidentity}{\hbox{\upshape \small1\kern-3.3pt\normalsize1}}
\def\blfootnote{\xdef\@thefnmark{}\@footnotetext}

\hyphenation{wave-lets}\hyphenation{ in-fi-nite}\hyphenation{ con-vo-lu-tion}


\begin{document}

\title[Spectral Theory of Discrete Processes]{Spectral Theory of Discrete Processes}

\author{Palle E. T. Jorgensen}
\address{Department of Mathematics, The University of Iowa, Iowa City, IA52242, USA}
\email{jorgen@math.uiowa.edu}
\urladdr{http://www.math.uiowa.edu/~jorgen}
\thanks{Work supported in part by the U.S. National Science Foundation}

\author{Myung-Sin Song}
\address{Department of Mathematics and Statistics, Southern Illinois University Edwardsville, Edwardsville, IL62026, USA}
\curraddr{}
\email{msong@siue.edu}
\urladdr{http://www.siue.edu/~msong}

\subjclass{Primary 42C40, 47S50, 62B5, 68U10, 94A08}
\date{Mar, 2009}


\keywords{Hilbert space, spectrum, encoding, transfer operator, infinite graphs, wavelets, fractals, stochastic process, path measures, transition probability.}



\begin{abstract}
We offer a spectral analysis for a class of transfer operators. These 
transfer operators arise for a wide range of stochastic processes, ranging 
from random walks on infinite graphs to the processes that govern signals 
and recursive wavelet algorithms; even spectral theory for fractal measures. 
In each case, there is an associated class of harmonic functions which we 
study. And in addition, we study three questions in depth:

In specific applications, and for a specific stochastic process, how do we 
realize the transfer operator $T$ as an operator in a suitable Hilbert space? 
And how to spectral analyze $T$ once the right Hilbert space $\mathcal{H}$ 
has been selected? Finally we characterize the stochastic processes that are 
governed by a single transfer operator.

In our applications, the particular stochastic process will live on an 
infinite path-space which is realized in turn on a state space $S$. In the 
case of random walk on graphs $G$, $S$ will be the set of vertices of $G$. 
The Hilbert space $\mathcal{H}$ on which the transfer operator $T$ acts will 
then be an $L^{2}$ space on $S$, or a Hilbert space defined from an 
energy-quadratic form. 

This circle of problems is both interesting and non-trivial as it turns out 
that $T$ may often be an unbounded linear operator in $\mathcal{H}$; but even 
if it is bounded, it is a non-normal operator, so its spectral theory is not 
amenable to an analysis with the use of von Neumann's spectral theorem. 
While we offer a number of applications, we believe that our spectral 
analysis will have intrinsic interest for the theory of operators in Hilbert 
space. 
\end{abstract}

\maketitle \tableofcontents

\section{Introduction}
\label{sec:1}
In this paper, we consider infinite configurations of vectors $(f_{k})_{k \in \mathbb{Z}}$ 
in a Hilbert space $\mathcal{H}$. Since our Hilbert spaces $\mathcal{H}$ are 
typically infinite-dimensional, this can be quite complicated, and it will be 
difficult to make sense of finite and infinite linear combinations 
$\sum_{k \in \mathbb{Z}}c_{k}f_{k}$.  

In case the system $(f_{k})$ is orthogonal, the problem is easy, but 
non-orthogonality serves as an encoding of statistical correlations, which 
in turn motivates our study.  In applications, a 
particular system of vectors $f_{k}$ may often be analyzed with the use of 
a single unitary operator $U$ in $\mathcal{H}$.  This happens if there is a 
fixed vector $\varphi \in \mathcal{H}$ such that $f_{k} = U^{k}\varphi$ 
for all $k \in \mathbb{Z}$.  When this is possible, the spectral theorem 
will then apply to this unitary operator.  A key idea in our paper is to 
identify a spectral density function and a transfer operator, both computed 
directly from the pair $(\varphi, U)$.  

We show that the study of linear expressions $\sum_{k}c_{k}f_{k}$ may be done 
with the aid of the spectral function for a pair $(\varphi, U)$.  A 
spectral function for a unitary operator $U$ is really a system of 
functions $(p_{\varphi})$, one for each cyclic subspace 
$\mathcal{H}(\varphi)$.  In each cyclic subspace, the function $p_{\varphi}$ 
is a complete unitary invariant for $U$ restricted to 
$\mathcal{H}(\varphi)$: by this we mean that the function $p_{\varphi}$ 
encodes all the spectral data coming from the vectors $f_{k}=U^{k}\varphi$, 
$k \in \mathbb{Z}$. For background literature on spectral function and 
their applications we refer to \cite{BH05, JP05, LWW04, PSWX04, Sad06, TT07}. 

In summary, the spectral representation theorem is the assertion 
that commuting unitary operators in Hilbert space may be represented as 
multiplication operators in an $L^{2}$-Hilbert space. The understanding 
is that this representation is defined as a unitary equivalence, and 
that the $L^{2}$-Hilbert space to be used allows arbitrary measures, and 
$L^{2}$ will be a Hilbert space of vector valued functions, 
see e.g., \cite{Hel86}. Because of applications, our systems of vectors will
be indexed by an arbitrary discrete set rather than merely integers 
$\mathbb{Z}$.  

We will attack this problem via an isometric embedding of $\mathcal{H}$ into 
and $L^{2}$-space built on infinite parths in such a way that the vectors 
$f_{k}$ in $\mathcal{H}$ transform into a system of random variables $Z_{k}$.  
Specifically, via certain encodings we build a path-space $\Omega$ for the 
particular problem at hand as well as a path space measure $\mathbb{P}$ 
defined on a $\sigma$-algebra of subsets of $\Omega$.

If $\mathcal{H}$ consists of a space of functions $f$ on a state space $S$, 
we will need the covariance numbers
\[
  \mathbb{E}((f_{1}\circ Z_{n})\cdot(f_{2}\circ Z_{m})):=
  \int_{\Omega}f_{1}(Z_{n}(\gamma))f_{2}(Z_{m}(\gamma))d\mathbb{P}(\gamma),
\]
where $Z_{n}:\Omega \to S$, i.e., where the stochastic process is $S$-valued. 
The set $S$ is called the state space.

The paper is organized as follows. In section \ref{sec:SP}, for later use, 
we present our path-space approach, and we discuss the path-space measures 
that we will use in computing transitions for stochastic processes. We prove 
two theorems making the connection between our path-space measures on the 
one hand, and the operator theory on the other. Several preliminary results 
are established proving how the transfer operator governs the process and 
its applications.

The applications we give in sections \ref{sec:G} and \ref{sec:STO} are 
related. In fact, we unify these applications with the use of an encoding 
map which is also studied in detail. It is applied to transitions on 
certain infinite graphs, to dynamics of (non-invertible) endomorphisms 
(measures on solenoids), to digital filters and their use in wavelets and 
signals, and to harmonic analysis on fractals.

The remaining sections deal primarily with applications to a sample of 
concrete cases.

\section{Stochastic Processes}
\label{sec:SP}
A key tool in our analysis is the construction of path-space measures on 
infinite paths, primarily in the case of discrete paths, but the fundamental 
ideas are the same in the continuous case. Both viewpoints are used in 
\cite{JoPe08}. Readers who wish to review the ideas behind there 
constructions (stochastic processes and consistent families of measures) are 
referred to \cite{Hi80, Hi93, Hi94} and \cite{Ne73}. 

Let $(\Omega, \mathcal{F}, \mathbb{P})$ be a Borel probablity space, 
$\Omega$ compact Hausdorff space. (Expectation 
$\mathbb{E}(\cdot)=\int_{\Omega}\cdot d\mathbb{P}$.)

Let $(Z_{k})_{k \geq 0}$ be a stochastic process, and 
\begin{equation}
\label{E:SP.1}
  \mathcal{F}_{n}=\sigma \text{-alg.}\{Z_{k} | k \leq n\} 
\end{equation}
the corresponding filtration.  Let 
$\mathcal{A}_{n}:=$ the subspace in $L^{2}(\Omega, \mathbb{P})$ generated by 
$\mathcal{F}_{n}$.  Let $P_{n}$ be the orthogonal projection of 
$L^{2}(\Omega, \mathbb{P})$ onto $\mathcal{A}_{n}$; then the conditional 
expectations $\mathbb{E}(\cdot | \mathcal{F}_{n})$ is simply $=P_{n}$.

We say that $(Z_{k})_{k \geq 0}$ has the generalized Markov property if and 
only if there exists a state space $S$ (also a compact Borel space):
\[
  Z_{k}: \Omega \to S 
\]
such that for all bounded functions $f$ on $S$, for all 
$n \in \mathbb{N}_{\geq 0}$, 
$\mathbb{E}(f|\mathcal{F}_{n})=\mathbb{E}(f|Z_{n})$.

To make precise the operator theoretic tools going into our construction, 
we must first introduce the ambient Hilbert spaces. We are restricting here 
to $L^{2}$ processes, so the corresponding stochastic integrals will take 
values in an ambient $L^{2}$-space of random variables: For our analysis, we 
must therefore specify a fixed probability space, with $\sigma$-algebra and 
probability measure.

We will have occasion to vary this initial probability space, depending on 
the particular transition operator that governs the process.

In the most familiar case of Brownian motion, or random walk, the 
probability space amounts to a somewhat standard construction of Wiener and 
Kolmogorov, but here with some modification for our problem at hand: The
essential axiom in Wiener's case is that all finite samples are jointly 
Gaussian, but we will drop this restriction and consider general stochastic 
processes, and so we will not make restricting assumptions on the sample 
distributions and on the underlying probability space. For more details, and 
concrete applications, regarding this stochastic approach and its 
applications, see sections \ref{sec:SP} and \ref{sec:STO} below.

We begin here with a particular case of a process taking values in the set of 
vertices in a fixed infinite graph G: \cite{JoSo09}

\subsection{Starting Assumptions and Constructions.}
\label{sec:SP.1}
\begin{enumerate}[(a) ]
  \item $G=(G_{0}, G_{1})$ a graph, $G_{0}=$ the set of vertices, $G^{1}=$ 
  the set of edges.
  \item $(S, \mathcal{B}_{S}, \mu)$ a probability space.
  \item The \textit{transition matrix} is the function 
    \[
      p(x,y):=\mathbb{P}(\{\gamma\in\Omega |Z_{n}(\gamma)=x, 
      Z_{n+1}(\gamma)=y\})
    \] 
    defined for all $(x,y)\in G^{1}$, and we assume that it is 
    independent of $n$. 
  \item From (a) and (b), we construct the path space 
    \[
      \Omega:=\{\gamma=(x_{0}x_{1}x_{2}\cdots)|(x_{i-1}x_{i})\in G^{1}, 
      \forall i \in \mathbb{N}\},
    \] 
    and the path-measure 
    $\mathbb{P}=\mathbb{P}_{\mu}$.  The cylinder sets given by the following 
    data: For 
    $E_{i}\in\mathcal{B}_{S}$, $E_{i}\subset S$, set 
    \begin{align*}
      &\mathbb{P}(C(E_{1}, \cdots, E_{n})) \\
      &:=\int_{E_{0}}\int_{E_{1}}\cdots \int_{E_{n}}p(x_{0},x_{1})p(x_{1},x_{2})
      \cdots p(x_{n-1},x_{n})d\mu(x_{0})d\mu(x_{1})\cdots d\mu(x_{n}) 
    \end{align*}
  \item Starting with $(\Omega, \mathcal{F}, \mathbb{P})$, if 
    $\mathcal{G} \subset \mathcal{F}$ is a subsigma algebra, let 
    $\mathbb{E}(\cdot | \mathcal{G})$ be the conditional expectation, 
    conditioned by $\mathcal{G}$. \\
    If $(X_{i})$ is a family of random variables, and $\mathcal{G}$ is the 
    $\sigma$-algebra generated by $(X_{i})$ we write 
    $\mathbb{E}(\cdot |(X_{i}))$ in place of $\mathbb{E}(\cdot | \mathcal{G})$.
  \item Let $(\Omega, \mathcal{F}, \mathbb{P}, (Z_{n}))$ be as above. We
    say that $(Z_{n})$ is \textit{Markov} if and only if 
    \[
      \mathbb{E}(f\circ Z_{n+1}|\{Z_{0},\cdots Z_{n}\})
      =\mathbb{E}(f\circ Z_{n+1}|Z_{n}) \quad 
      \mbox{for all $n\in \mathbb{N}_{0}$}.
    \]
  \item From (b) and (d) we define the \textit{transfer operator} $T$ by
    \begin{equation}
    \label{E:SP.0.1}
      (Tf)(x)=\int_{S}p(x,y)f(y)d\mu(y)
    \end{equation}
    for measurable functions $f$ on $S$.
    If $\Iidentity$ denote the constant function $1$ on $S$, then 
    $T\Iidentity=\Iidentity$.
  \item Let $(S, \mathcal{B}_{S}, \mu)$ and $T$ be as in (g), 
    see(\ref{E:SP.0.1}). 
    A measure $\mu_{0}$ on $S$ is said to be a 
    \textit{Perron-Frobenius measure} if and only if 
    \begin{equation}
    \label{E:SP.0.2}
      \int_{S}(Tf)(x)d\mu_{0}(x)=\int_{S}f(x)d\mu_{0}(x), \quad 
      \mbox{abbreviated $\mu_{0}\circ T=\mu_{0}$}.
    \end{equation}
  \item Let $(\Omega, \mathcal{F}, \mathbb{P})$ be as above, and let $T$ be 
    the transfer operator.  If $\mu_{0}$ is a Perron-Frobenius measure, let 
    $\mathbb{P}^{(\mu_{0})}$ be the measure on $\Omega$ determined by using 
    $\mu_{0}$ as the first factor, i.e., 
    \begin{align*}
      &\mathbb{P}^{(\mu_{0})}(C(E_{1}, \cdots, E_{n})) \\
      &=\int_{E_{0}}\int_{E_{1}}\cdots \int_{E_{n}}p(x_{0},x_{1})p(x_{1},x_{2})
      \cdots p(x_{n-1},x_{n})d\mu_{0}(x_{0})d\mu(x_{1})\cdots d\mu(x_{n}) \\
      &=\int_{E_{0}}\mathbb{P}_{x_{0}}(C(E_{1}, \cdots, E_{n}))d\mu(x_{0}). 
    \end{align*}
\end{enumerate}

In many cases, it is possible to choose specific Perron-Frobenius measures 
$\mu_{0}$, i.e., measures $\mu_{0}$ satisfying
\[
  \mu_{0}(S)=1  \mbox{ and } 
  \int_{S}(Tf)(x)d\mu_{0}(x)=\int_{S}f(x)d\mu_{0}(x).
\]
(Note the normalization!)

\begin{theorem}
\label{T:SP.0.1}
(D. Ruelle) \cite{Ba00} Suppose there is a norm $\|\cdot \|$ on bounded 
measurable functions $f$ on $S$ such that the $\|\cdot \|$-completion 
$L(S)$ is embedded in $L^{\infty}(S)$, and that there are constants 
$\alpha \in (0,1)$, $M \in \mathbb{R}_{+}$ such that 
\[
  \|Tf\| \leq \alpha\|f\|+M\|f\|_{\infty},
\]
where $\|\cdot\|_{\infty}$ is the 
essential supremum-norm.  Then $T$ has a Perron-Frobenius measure.
\end{theorem}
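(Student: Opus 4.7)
The plan is to interpret the hypothesis as a Doeblin--Fortet (Lasota--Yorke) inequality and then extract a fixed functional for the adjoint $T^{*}$ by a Cesaro-averaging and weak-$*$ compactness argument. I would first iterate the hypothesis. Since the kernel satisfies $p(x,y)\geq 0$ and $\int p(x,y)\,d\mu(y)=1$, the transfer operator $T$ contracts the essential supremum norm, $\|Tf\|_{\infty}\leq \|f\|_{\infty}$; an induction using the inequality then gives
\[
  \|T^{n}f\| \leq \alpha^{n}\|f\| + \frac{M}{1-\alpha}\|f\|_{\infty}\qquad (n\in\mathbb{N}).
\]
In particular the spectral radius of $T$ on $L(S)$ is at most $1$ (and equals $1$ because $T\Iidentity=\Iidentity$ by (g)), and the Cesaro operators $A_{N}:=\frac{1}{N}\sum_{n=0}^{N-1}T^{n}$ are uniformly bounded as maps $L(S)\to L(S)$.

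Next I would pass to the dual side. Fix any probability measure $\nu$ on $S$ (for instance the reference measure $\mu$ itself) and form $\nu_{N}:=\frac{1}{N}\sum_{n=0}^{N-1}\nu\circ T^{n}$. Because $T\Iidentity=\Iidentity$, each $\nu_{N}$ is again a probability measure; viewed as an element of $L(S)^{*}$, its norm is bounded by the embedding constant of $L(S)\hookrightarrow L^{\infty}(S)$. By Banach--Alaoglu the ball of this radius in $L(S)^{*}$ is weak-$*$ compact, so a subnet $\nu_{N_{k}}\to\mu_{0}$ converges weak-$*$. The telescoping identity
\[
  \nu_{N}\circ T-\nu_{N}=\tfrac{1}{N}(\nu\circ T^{N}-\nu),
\]
together with the uniform bound on $\|\nu\circ T^{N}\|_{L(S)^{*}}$ supplied by the iterated inequality, shows that the right side tends to $0$ in operator norm; since $T:L(S)\to L(S)$ is bounded, $T^{*}$ is weak-$*$ continuous on bounded sets, and the limit satisfies $\mu_{0}\circ T=\mu_{0}$. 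Positivity and total mass $1$ both pass to weak-$*$ limits, so $\mu_{0}$ is a $T^{*}$-invariant positive functional with $\mu_{0}(\Iidentity)=1$.

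The main obstacle is the last step: upgrading the fixed \emph{functional} $\mu_{0}$ on $L(S)$ to a genuine Borel probability \emph{measure} on $(S,\mathcal{B}_{S})$ in the sense of (h). This is precisely where the structural assumption on $L(S)$ is needed. In Ruelle's original setting, and in all of the standard examples (Lipschitz, H\"older, BV, or Sobolev classes on a compact Hausdorff state space) $L(S)$ contains $C(S)$ densely in the sup norm, so the Riesz representation theorem identifies the positive functional $\mu_{0}$ with a Borel probability measure satisfying (\ref{E:SP.0.2}). A stronger conclusion --- that $1$ is an isolated eigenvalue of $T$ on $L(S)$, with a spectral gap and quasi-compactness --- would follow from the same inequality via the Ionescu-Tulcea--Marinescu theorem whenever the embedding $L(S)\hookrightarrow L^{\infty}(S)$ is compact, but for the bare existence statement the Cesaro extraction above suffices.
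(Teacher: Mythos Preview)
The paper does not prove this theorem; it is stated as a known result of Ruelle and only referenced to \cite{Ba00}. Your sketch is the standard Doeblin--Fortet extraction and is correct as an outline: iterate the inequality, use the sup-norm contraction $\|Tf\|_{\infty}\le\|f\|_{\infty}$ coming from $p\ge 0$ and $\int p(x,\cdot)\,d\mu=1$, Ces\`aro-average a probability measure under $T^{*}$, and pass to a weak-$*$ limit in $L(S)^{*}$ via Banach--Alaoglu.

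Two comments. First, in your argument the Lasota--Yorke bound is not actually doing any work in the bare existence step: the uniform control on $\nu\circ T^{n}$ in $L(S)^{*}$ that you invoke already follows from $\|T^{n}f\|_{\infty}\le\|f\|_{\infty}$ together with the embedding $L(S)\hookrightarrow L^{\infty}(S)$, without ever using $\alpha<1$. The inequality $\|Tf\|\le\alpha\|f\|+M\|f\|_{\infty}$ becomes essential only for the stronger quasi-compactness and spectral-gap conclusions (Ionescu-Tulcea--Marinescu) that you correctly relegate to a parenthetical remark; this is in fact the content of the result as proved in the cited reference, and the statement here is a somewhat informal summary. Second, you are right to flag the last step as the real obstacle: upgrading the weak-$*$ limit from a positive normalized functional on $L(S)$ to a genuine Borel probability measure satisfying (\ref{E:SP.0.2}) needs more than the bare hypothesis provides, and in Ruelle's setting this is supplied by the additional structure (e.g.\ $C(S)\subset L(S)$ on a compact $S$, so that Riesz applies).
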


\begin{theorem}
\label{T:SP.0.2}
Let $(S, \mu)$ be a probability space with $S$ carrying a separate 
$\sigma$-algebra $\mathcal{B}_{S}$ and $\mu$ defined on $\mathcal{B}_{S}$. 
Let $\Omega$ be the path space, and supposed the transfer operator $T$ has 
a Perron-Frobenius measure $\mu_{0}$, then 
\begin{equation}
\label{E:SP.1.1}
  \mathbb{E}^{(\mu_{0})}((\overline{\varphi}\circ Z_{n})(\psi \circ Z_{n+1}))
  =\langle \varphi, T\psi \rangle_{L^{2}(\mu_{0})}
\end{equation}
for all $\varphi, \psi \in L^{2}(\mu)$, and all $n \in \mathbb{N}_{0}$.  
Here $\mathbb{E}(F):=\int_{\Omega}F(\omega)d\mathbb{P}(\omega)$ for all 
integrable random variables $F:\Omega \to \mathbb{C}$; $\mathbb{E}$ for 
expectation.
\end{theorem}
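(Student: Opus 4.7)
The plan is to compute the left-hand side directly from the cylinder-set definition of $\mathbb{P}^{(\mu_0)}$ in item (i), first on a dense class of bounded measurable functions, and then to extend by continuity to $L^{2}(\mu)$.

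First, I would take $\varphi,\psi$ bounded and Borel, so that $(\overline{\varphi}\circ Z_n)(\psi\circ Z_{n+1})$ is bounded and the cylinder formula applies directly. Because the integrand depends only on the coordinates $x_n$ and $x_{n+1}$, the expectation unfolds as
\[
  \int_{S^{n+2}} \overline{\varphi}(x_n)\,\psi(x_{n+1})\,p(x_0,x_1)\cdots p(x_n,x_{n+1})\,d\mu_0(x_0)\,d\mu(x_1)\cdots d\mu(x_{n+1}).
\]

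Next, I would peel off the integrals one coordinate at a time via the identity $\int p(x_{k-1},x_k)\,g(x_k)\,d\mu(x_k) = (Tg)(x_{k-1})$ taken from (g). Integrating out $x_{n+1}$ converts the factor $\psi(x_{n+1})$ into $(T\psi)(x_n)$; setting $F := \overline{\varphi}\cdot(T\psi)$ and iterating the same maneuver on $x_n, x_{n-1}, \ldots, x_1$ collapses the expression to $\int_S (T^n F)(x_0)\,d\mu_0(x_0)$. Applying the Perron--Frobenius identity (\ref{E:SP.0.2}) $n$ times then yields $\int_S F\,d\mu_0 = \langle \varphi, T\psi\rangle_{L^2(\mu_0)}$, which is the desired formula in the bounded case.

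Finally, I would extend from bounded Borel functions to $L^{2}(\mu)$. Taking $\varphi=\psi=\Iidentity$ in the same computation, together with $T\Iidentity=\Iidentity$, shows that each $Z_k$ is $\mu_0$-distributed under $\mathbb{P}^{(\mu_0)}$, so Cauchy--Schwarz controls the left-hand side by the $L^2(\mu_0)$-norms of $\varphi$ and $\psi$. The main obstacle, and the only genuinely non-routine point, is reconciling the hypothesis $\varphi,\psi\in L^2(\mu)$ with a right-hand side naturally controlled in $L^2(\mu_0)$: one needs the Perron--Frobenius measure $\mu_0$ to be sufficiently compatible with $\mu$ so that bounded Borel representatives used in the approximation step are well defined modulo both null-set ideals, after which a standard density argument promotes the cylinder-level identity to the full $L^2$ statement.
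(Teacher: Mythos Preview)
Your proposal is correct and follows exactly the same computational route as the paper: expand via the cylinder definition of $\mathbb{P}^{(\mu_0)}$, integrate out $x_{n+1}$ to produce $(T\psi)(x_n)$, iterate to obtain $\int_S T^{n}(\overline{\varphi}\cdot T\psi)\,d\mu_0$, and then invoke the Perron--Frobenius property of $\mu_0$. The paper's proof is precisely this chain of equalities, written out formally without your preliminary restriction to bounded functions or the subsequent density argument; the tension you flag between $L^{2}(\mu)$ and $L^{2}(\mu_0)$ is not addressed in the proof itself but is acknowledged in a remark immediately following it, to the effect that all one really needs is existence of $\int_S \overline{\varphi}\,(T\psi)\,d\mu_0$.
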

\begin{proof}
\begin{align*}
  &\mathbb{E}^{(\mu_{0})}((\overline{\varphi}\circ Z_{n})(\psi 
  \circ Z_{n+1})) \\
  &=\int_{S}\int_{S}\cdots \int_{S}p(x_{0},x_{1})\cdots p(x_{n-1},x_{n})
    p(x_{n},x_{n+1})\overline{\varphi}(x_{n})\psi(x_{n+1})d\mu_{0}(x_{0})
    d\mu(x_{1})\cdots d\mu(x_{n+1}) \\
  &=\int_{S}\int_{S}\cdots \int_{S}p(x_{0},x_{1})\cdots p(x_{n-1},x_{n})
    \overline{\varphi}(x_{n})(T\psi)(x_{n})d\mu_{0}(x_{0})
    d\mu(x_{1})\cdots d\mu(x_{n}) \\
  &=\int_{S}T^{n}(\overline{\varphi}\cdot(T\psi))(x_{0})d\mu_{0}(x_{0}) \\
  &=\int_{S}\overline{\varphi}(x)(T\psi)(x)d\mu_{0}(x) 
    \quad \mbox{ by Perron-Frobenius}\\
  &=\langle \varphi, T\psi \rangle_{L^{2}(\mu_{0})}.
\end{align*}
\end{proof}

It is not necessary in (\ref{E:SP.1.1}) to restrict attention to functions 
$\varphi, \psi$ in $L^{2}(\mu_{0})$.  The important thing is that the integral 
$\int_{S}\overline{\varphi(x)}(T\psi)(x)d\mu_{0}(x)$ exists, and this quantity 
may then be used instead on the RHS in (\ref{E:SP.1.1}).

Let $(Z_{n})_{n\in \mathbb{N}_{0}}$ be a stochastic process, and let 
$\mathcal{F}_{n}$ be the $\sigma$-algebra generated by 
$\{Z_{k}| \text{ } 0 \leq k \leq n\}$.  Futhermore, let 
$\mathbb{E}(\cdot |\mathcal{F}_{n})$ be the conditioned expectation 
conditioned by $\mathcal{F}_{n}$.

\begin{theorem}
\label{T:SP.1}
Let $(Z_{n})_{n\in \mathbb{N}_{0}}$ be a stochastic process with 
stationary transitions and operator $T$.  Then 
\begin{equation}
\label{E:SP.2}
  \mathbb{E}(f\circ Z_{n+1} | \mathcal{F}_{n})=(Tf)\circ Z_{n}
\end{equation}
for all bounded measurable functions $f$ on $S$, and all $n\in \mathbb{N}_{0}$
\end{theorem}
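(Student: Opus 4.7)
The plan is to verify the identity by the standard characterization of conditional expectation: the function $(Tf)\circ Z_n$ is manifestly $\mathcal{F}_n$-measurable (since $Z_n$ is $\mathcal{F}_n$-measurable and $Tf$ is a measurable function on $S$), so it suffices to show that for every bounded $\mathcal{F}_n$-measurable random variable $G$,
\[
\mathbb{E}\bigl((f\circ Z_{n+1})\cdot G\bigr)=\mathbb{E}\bigl(((Tf)\circ Z_n)\cdot G\bigr).
\]
By a monotone class / Dynkin argument, it is enough to check this equality for $G$ ranging over a multiplicative family generating $\mathcal{F}_n$. A convenient choice is $G=\prod_{k=0}^{n}(h_k\circ Z_k)$, where $h_0,\dots,h_n$ are bounded measurable functions on $S$, since the cylinder events of (d) generate $\mathcal{F}_n$.

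Next, I would expand both sides explicitly using the formula for $\mathbb{P}$ from part (d) of the setup (or, if one wants a Perron-Frobenius initial distribution, from part (i)). The left-hand side becomes an iterated integral over $x_0,\dots,x_{n+1}$ in $S$ with integrand
\[
p(x_0,x_1)\cdots p(x_n,x_{n+1})\,h_0(x_0)\cdots h_n(x_n)\,f(x_{n+1})
\]
against $d\mu(x_0)\cdots d\mu(x_{n+1})$ (or $d\mu_0(x_0)$ in the first slot). The innermost integration over $x_{n+1}$ is precisely
\[
\int_S p(x_n,x_{n+1})f(x_{n+1})\,d\mu(x_{n+1})=(Tf)(x_n),
\]
by the definition of the transfer operator in (\ref{E:SP.0.1}). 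Collapsing that integral turns the expression into
\[
\int\cdots\int p(x_0,x_1)\cdots p(x_{n-1},x_n)\,h_0(x_0)\cdots h_n(x_n)\,(Tf)(x_n)\,d\mu(x_0)\cdots d\mu(x_n),
\]
which is exactly the path-space representation of $\mathbb{E}(((Tf)\circ Z_n)\cdot G)$. This establishes the identity for all $G$ in the generating family, and hence, by the monotone class theorem, for all bounded $\mathcal{F}_n$-measurable $G$.

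The step that deserves the most care is the reduction to product test functions $G$: one must verify that the class of $G$ for which the equality holds is a monotone vector space containing an algebra that generates $\mathcal{F}_n$, which in turn relies on the fact that $(Z_0,\dots,Z_n)$ generates $\mathcal{F}_n$ (this is built into the definition of $\mathcal{F}_n$ in (\ref{E:SP.1})). Once that measure-theoretic bookkeeping is in place, the computation itself is just a one-line collapse of the innermost integral via (\ref{E:SP.0.1}), entirely parallel to the calculation already used in the proof of Theorem \ref{T:SP.0.2}; in fact, the case $h_0=\cdots=h_{n-1}=\Iidentity$, $h_n=\overline{\varphi}$, $f=\psi$ recovers (\ref{E:SP.1.1}) after integration, which is a useful sanity check.
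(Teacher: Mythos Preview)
Your proposal is correct and follows essentially the same route as the paper: both arguments verify the defining property of conditional expectation by integrating against $\mathcal{F}_n$-measurable test functions and collapsing the $x_{n+1}$-integral via the definition of $T$. The only difference is cosmetic: the paper invokes the Doob--Dynkin representation $\varphi=\Phi(x_0,\dots,x_n)$ for a general $\varphi\in\mathcal{A}_n$ directly, whereas you first reduce to product test functions and then appeal to a monotone class argument; your version is slightly more explicit about the measure-theoretic bookkeeping, but the underlying computation is identical.
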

\begin{proof}
We may assume that $f$ is a real valued function on $S$. Let 
$\mathcal{A}_{n}:=$ all bounded $\mathcal{F}_{n}$-measurable functions.  Then
the assertion in (\ref{E:SP.2}) may be restated as:
\begin{equation}
\label{E:SP.3}
  \int_{\Omega}\varphi (f\circ Z_{n+1})d\mathbb{P} 
  = \int_{\Omega}\varphi ((Tf)\circ Z_{n}) d\mathbb{P}
\end{equation}
for all $\varphi \in \mathcal{A}_{n}$.

If $\varphi \in \mathcal{A}_{n}$, 
$\varphi(\cdot)=\Phi(x_{0},x_{1}, \cdots x_{n})$; and then the LHS in 
(\ref{E:SP.3}) may be written as 
\begin{align*}
  &\int_{S}\int_{S}\cdots \int_{S}p(x_{0},x_{1})\cdots p(x_{n},x_{n+1})
    \Phi(x_{0},x_{1}\cdots x_{n})f(x_{n+1})
    d\mu_{0}(x_{0})d\mu(x_{1})\cdots d\mu(x_{n+1}) \\
  &=\int_{S}\int_{S}\cdots \int_{S}p(x_{0},x_{1})\cdots p(x_{n-1},x_{n})
    \Phi(x_{0},x_{1}\cdots x_{n})(Tf)(x_{n})
    d\mu_{0}(x_{0})d\mu(x_{1})\cdots d\mu(x_{n}) \\
  &=\int_{\Omega}\varphi\cdot (Tf)\circ Z_{n}\text{ } d\mathbb{P}.
\end{align*}
Hence (\ref{E:SP.2}) follows.
\end{proof}

\begin{corollary}
\label{C:SP.0}
Let $(\Omega, \mathcal{F}, \mathbb{P}, (Z_{n}))$ be as in the theorem.  Then 
the process $(Z_{n})$ is Markov.
\end{corollary}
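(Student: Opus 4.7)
The plan is to obtain the Markov property as an almost immediate consequence of Theorem \ref{T:SP.1}, by using the characterization of conditional expectation and the fact that $(Tf)\circ Z_n$ is a function of $Z_n$ alone.

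First, I would recall the definition from (f): the process $(Z_n)$ is Markov iff for every bounded measurable $f$ on $S$ and every $n\in\mathbb{N}_0$,
\[
  \mathbb{E}(f\circ Z_{n+1}\mid\{Z_0,\dots,Z_n\})
  =\mathbb{E}(f\circ Z_{n+1}\mid Z_n).
\]
Since $\mathcal{F}_n = \sigma\{Z_0,\dots,Z_n\}$ by (\ref{E:SP.1}), the left-hand side is exactly $\mathbb{E}(f\circ Z_{n+1}\mid\mathcal{F}_n)$, which Theorem \ref{T:SP.1} identifies with $(Tf)\circ Z_n$.

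Next I would verify that $(Tf)\circ Z_n$ equals $\mathbb{E}(f\circ Z_{n+1}\mid Z_n)$. Two things must be checked: (i) $(Tf)\circ Z_n$ is $\sigma(Z_n)$-measurable, which is clear since $Tf$ is a measurable function on $S$; and (ii) for every bounded $\sigma(Z_n)$-measurable $\varphi$,
\[
  \int_{\Omega}\varphi\cdot(f\circ Z_{n+1})\,d\mathbb{P}
  =\int_{\Omega}\varphi\cdot((Tf)\circ Z_n)\,d\mathbb{P}.
\]
Since $\sigma(Z_n)\subset\mathcal{F}_n$, every such $\varphi$ lies in $\mathcal{A}_n$, and this identity is precisely the content of (\ref{E:SP.3}) in the proof of Theorem \ref{T:SP.1}. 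Thus $(Tf)\circ Z_n$ satisfies the defining properties of $\mathbb{E}(f\circ Z_{n+1}\mid Z_n)$, and by essential uniqueness of conditional expectation the two are equal $\mathbb{P}$-a.e.

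Combining the two identifications, both sides of the Markov equality coincide with $(Tf)\circ Z_n$, so the corollary follows. There is no real obstacle here; the only subtlety is making explicit that $\sigma(Z_n)\subset\mathcal{F}_n$ so that the integral identity from Theorem \ref{T:SP.1} can be applied to the smaller $\sigma$-algebra, and that this suffices to identify $(Tf)\circ Z_n$ as the $\sigma(Z_n)$-conditional expectation. The result is essentially a restatement of Theorem \ref{T:SP.1} in terms of the generalized Markov property introduced in the preliminaries.
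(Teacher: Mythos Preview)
Your proposal is correct and follows essentially the same route as the paper: both use Theorem~\ref{T:SP.1} to identify $\mathbb{E}(f\circ Z_{n+1}\mid\mathcal{F}_n)$ with $(Tf)\circ Z_n$, and then verify that $(Tf)\circ Z_n$ is also $\mathbb{E}(f\circ Z_{n+1}\mid Z_n)$. Your verification is in fact slightly more direct---you simply invoke $\sigma(Z_n)\subset\mathcal{F}_n$ so that (\ref{E:SP.3}) applies to test functions of the form $g\circ Z_n$, whereas the paper routes the same computation through the identity $\int_\Omega(g\circ Z_n)(f\circ Z_{n+1})\,d\mathbb{P}=\langle g,Tf\rangle_{L^2(\mu)}$ and back.
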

\begin{proof}
We must show that 
\[
  \mathbb{E}(f\circ Z_{n+1} | \mathcal{F}_{n})
  =\mathbb{E}(f\circ Z_{n+1} | Z_{n})
\]
By the theorem, we only need to show that 
\[
  \mathbb{E}(f\circ Z_{n+1} | Z_{n})=(Tf)\circ Z_{n}.
\]
In checking this we use the transition operator $T$.  As a result we may now
assume that $\varphi$ has the form $\varphi=g\circ Z_{n}$ for $g$ a 
measurable function on $S$.  Hence 
\begin{align*}
  \int_{\Omega}\varphi (f\circ Z_{n+1})d\mathbb{P}
  &= \int_{\Omega}(g\circ Z_{n})(f\circ Z_{n+1})d\mathbb{P} 
  = \langle g, Tf\rangle_{L^{2}(\mu)} \\ 
  &= \int_{S}g(Tf)d\mu 
  = \int_{\Omega}(g\circ Z_{n})((Tf)\circ Z_{n})d\mathbb{P} \\
  &= \int_{\Omega}\varphi ((Tf)\circ Z_{n})d\mathbb{P} 
\end{align*}
which is the desired conclusion.
\end{proof}

\begin{definition}
\label{D:SP.2}
We say that a measurable function $f$ on $S$ is harmonic if $Tf=f$.
\end{definition}

\begin{definition}
\label{D:SP.3}
A sequence of random variables $(F_{n})$ is said to be a 
martingale if and only if $\mathbb{E}(F_{n+1} | \mathcal{F}_{n})=F_{n}$ for 
all $n\in \mathbb{N}_{0}$.
\end{definition}

\begin{corollary}
\label{C:SP.1}
Let $(Z_{n})_{n\in \mathbb{N}_{0}}$ be a stochastic process with stationary 
transitions and operator $T$.  Let $f$ be a measurable function on $S$.

Then $f$ is harmonic if and only if $(f \circ Z_{n})_{n\in \mathbb{N}_{0}}$ 
is a martingale.
\end{corollary}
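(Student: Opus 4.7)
The plan is to deduce both directions directly from Theorem \ref{T:SP.1}, which identifies the one-step conditional expectation with the transfer operator: $\mathbb{E}(f \circ Z_{n+1} \mid \mathcal{F}_n) = (Tf) \circ Z_n$. Once this identity is in hand, the corollary reduces to comparing $(Tf) \circ Z_n$ with $f \circ Z_n$.

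For the forward implication, I would assume $Tf = f$ and simply substitute into Theorem \ref{T:SP.1} to obtain
\[
\mathbb{E}(f \circ Z_{n+1} \mid \mathcal{F}_n) = (Tf) \circ Z_n = f \circ Z_n,
\]
which is exactly the martingale condition of Definition \ref{D:SP.3} applied to the sequence $F_n := f \circ Z_n$. One should note in passing that $f \circ Z_n$ is $\mathcal{F}_n$-measurable (so the sequence is adapted) and that integrability can be assumed (e.g.\ $f$ bounded, or in $L^1(\mu_0)$).

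For the converse, assume $(f \circ Z_n)$ is a martingale. By definition, $\mathbb{E}(f \circ Z_{n+1} \mid \mathcal{F}_n) = f \circ Z_n$. Combining this with Theorem \ref{T:SP.1} gives
\[
(Tf) \circ Z_n = f \circ Z_n \qquad \mathbb{P}^{(\mu_0)}\text{-a.s.}
\]
Taking $n = 0$ and using that $Z_0$ has distribution $\mu_0$ under $\mathbb{P}^{(\mu_0)}$ (by construction of the path measure in Subsection \ref{sec:SP.1}(i)), this yields $Tf = f$ as elements of $L^2(\mu_0)$, i.e.\ $\mu_0$-almost everywhere, which is the appropriate sense in which $f$ is harmonic.

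The only subtlety I anticipate is this last identification step: the equality $(Tf) \circ Z_n = f \circ Z_n$ is a statement about random variables on $\Omega$, and one must push it forward to an equality of functions on $S$. Using the initial distribution of $Z_0$ (rather than $Z_n$ for large $n$) is the cleanest way, since it avoids any ergodicity or irreducibility hypothesis on the chain and works solely from the Perron--Frobenius property $\mu_0 \circ T = \mu_0$ assumed on the state space.
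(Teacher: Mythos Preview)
Your proposal is correct and follows exactly the route the paper takes: the paper's proof is the single line ``This follows from (\ref{E:SP.2}) combined with Definition \ref{D:SP.3},'' and you have simply unpacked both implications of that line. Your extra care in the converse direction---pushing the $\mathbb{P}$-a.s.\ identity $(Tf)\circ Z_0 = f\circ Z_0$ forward via the law of $Z_0$ to obtain $Tf=f$ $\mu_0$-a.e.---is a detail the paper leaves implicit, but it is the right way to close the argument.
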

\begin{proof}
This follows from (\ref{E:SP.2}) combined with Definition \ref{D:SP.3}.
\end{proof}

\begin{corollary}
\label{C:SP.2}
Suppose a process $(Z_{n})_{n\in \mathbb{N}_{0}}$ is stationary with a 
fixed transition operator $T:L^{2}(\mu) \to L^{2}(\mu)$.  Then 
$\mu = \mathbb{P} \circ Z_{n}^{-1}$ for all $n\in \mathbb{N}_{0}$.
\end{corollary}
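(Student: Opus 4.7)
The plan is to peel off the stationarity hypothesis first to reduce to the case $n=0$, and then to read $\mathbb{P}\circ Z_0^{-1}=\mu$ straight off the cylinder-set formula in item (d) of Subsection \ref{sec:SP.1}, using only the normalization $T\Iidentity=\Iidentity$ recorded in item (g).

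Concretely, I would introduce the notation $\nu_n:=\mathbb{P}\circ Z_n^{-1}$ for the law of $Z_n$ on $(S,\mathcal{B}_S)$. Stationarity of the process $(Z_n)$ forces all one-dimensional marginals to coincide, so $\nu_n=\nu_0$ for every $n\in\mathbb{N}_0$, and it is therefore enough to establish $\nu_0=\mu$.

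To do that, I would fix $E\in\mathcal{B}_S$ and represent the event $\{Z_0\in E\}$ as the cylinder $C(E,S,\ldots,S)$ with any convenient number $n$ of trailing copies of $S$. Item (d) then gives
\[
  \mathbb{P}(Z_0\in E) = \int_E\!\int_S\cdots\int_S p(x_0,x_1)p(x_1,x_2)\cdots p(x_{n-1},x_n)\,d\mu(x_0)d\mu(x_1)\cdots d\mu(x_n).
\]
By Fubini I would integrate out $x_n$ using $\int_S p(x_{n-1},x_n)\,d\mu(x_n)=(T\Iidentity)(x_{n-1})=1$, and iterate the same step to eliminate $x_{n-1},\ldots,x_1$ in turn; what remains is $\int_E d\mu(x_0)=\mu(E)$. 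Hence $\nu_0=\mu$, and combined with the reduction from the previous paragraph this gives $\mathbb{P}\circ Z_n^{-1}=\mu$ for every $n$.

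There is no real obstacle here: the proof is essentially an unpacking of the construction of the path measure, and the identity $T\Iidentity=\Iidentity$ is doing double duty — it is the Kolmogorov-type consistency that makes the cylinder-set prescription of $\mathbb{P}$ well defined in the first place, and it is precisely what collapses the iterated integral down to $\mu(E)$. The only point worth flagging is that the integer $n$ used to present $\{Z_0\in E\}$ as a cylinder is arbitrary, so one must observe (trivially, by the same telescoping) that the value of the iterated integral does not depend on it.
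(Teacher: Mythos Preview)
Your argument is correct. The only caveat is the word ``stationary'': in this paper it is sometimes used for \emph{stationary transitions} (Theorem~\ref{T:SP.1}) rather than for a stationary process; your reduction $\nu_n=\nu_0$ needs the latter, so you are implicitly using that the initial measure $\mu$ is Perron--Frobenius for $T$. That is consistent with the hypothesis as intended here, but worth saying.

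The organization differs from the paper's. The paper does not reduce to $n=0$; it simply quotes the identity
\[
  \int_S g\,f\,d\mu \;=\; \int_\Omega (g\circ Z_n)(f\circ Z_n)\,d\mathbb{P}
\]
for all $n$, established in the course of proving Corollary~\ref{C:SP.0} (via the iterated integral and $\mu\circ T=\mu$), and reads off $\mu=\mathbb{P}\circ Z_n^{-1}$. In other words, the paper feeds the invariance $\mu\circ T^n=\mu$ into the multi-step cylinder integral, whereas you use stationarity at the level of marginals to reduce to the single step $n=0$ and then only need $T\Iidentity=\Iidentity$. Both routes rest on the same telescoping of the cylinder formula; your version has the small advantage of making explicit that the $n=0$ case needs nothing beyond the normalization, while the passage to general $n$ is where invariance of $\mu$ actually enters.
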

\begin{proof}
Let $f$ and $g$ be a pair of functions on $S$ as specified above.  Then we 
showed that
\[
  \int_{S}gfd\mu = \int_{\Omega}(g\circ Z_{n})(f\circ Z_{n})d\mathbb{P}
\]
which is the desired conclusion.
\end{proof}


\subsection{Martingales and Boundaries}
\label{sec:SP.1a}
Let $G=(G^{0}, G^{1})$ be an infinite graph with a fixed conductance $c$, and 
let the corresponding operators be $\Delta_{c}$ and $T_{c}$.

Let $h:G^{0}\to \mathbb{R}$ is a harmonic function, i.e., $\Delta_{c}h=0$, or 
equivalently $T_{c}h=h$.

As an application of Corollary \ref{C:SP.1}, we may then apply a theorem of 
J. Doob to the associated martingale $h \circ Z_{n}$, $n \in \mathbb{N}_{0}$. 
This means that the sequence $(h \circ Z_{n})$ will then have $\mathbb{P}$- a. 
e. limit i.e.,
\begin{equation}
\label{E:SP.5.1}
  \lim_{n\to \infty}h \circ Z_{n}=v \quad \mbox{pointwise} \quad \mathbb{P} 
  \text{ a.e.}
\end{equation}

The limit function $v:\Omega \to \mathbb{R}$ will satisfy 
$v(x_{0}x_{1}x_{2}\cdots)=v(x_{1}x_{2}x_{3}\cdots)$, or equivalently, 
\begin{equation}
\label{E:SP.5.2}
  v=v\circ \sigma.
\end{equation}

The existence of the limit in (\ref{E:SP.5.1}) holds if one or the other of 
the two conditions is satisfied:
\begin{enumerate}[(i) ]
  \item $h\in L^{\infty}$; or 
  \item $\sup_{n}\int_{\Omega}|h \circ Z_{n}|^{2}d\mathbb{P} < \infty$.
\end{enumerate}

\begin{proposition}
\label{P:SP.1}
\cite{Jor06a} If $h: G^{0} \to \mathbb{R}$ is harmonic and if (i) or (ii) 
hold, then
\begin{equation}
\label{E:SP.5.3}
  h(x)=\int_{\Omega}v\text{ }d\mathbb{P}_{x} \quad \mbox{for all } x\in G^{0},
\end{equation}
where $\mathbb{P}_{x}=$ the measure $\mathbb{P}$ conditioned with 
$Z_{0}(\gamma)=x$.  The converse implication holds as well.
\end{proposition}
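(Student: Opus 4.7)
The plan is to use the martingale framework of Corollary \ref{C:SP.1} together with Doob's martingale convergence theorem, and then combine it with conditioning on the initial point $Z_0 = x$. First I would note that Corollary \ref{C:SP.1} gives that $(h\circ Z_n)_{n\in\mathbb{N}_0}$ is a martingale with respect to the filtration $(\mathcal{F}_n)$. Under hypothesis (i), this martingale is uniformly bounded, and under (ii) it is $L^2$-bounded; in either case Doob's theorem provides both $\mathbb{P}$-a.e.\ convergence to a limit $v$ and uniform integrability of the sequence $(h\circ Z_n)$. Shift-invariance $v = v\circ\sigma$ of the limit follows from (\ref{E:SP.5.2}) which is already recorded in the excerpt, since removing a single initial coordinate does not affect a tail limit.

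For the forward implication I would next establish the identity
\[
  \int_{\Omega} (h\circ Z_n)\, d\mathbb{P}_x = h(x)
  \quad\text{for all } n\in\mathbb{N}_0.
\]
Under $\mathbb{P}_x$ we have $Z_0(\gamma) = x$ almost surely, so $h\circ Z_0 = h(x)$ a.s.; the martingale property together with $\mathbb{E}_x(\,\cdot\mid\mathcal{F}_0) = \mathbb{E}_x(\,\cdot\mid Z_0)$ then yields $\mathbb{E}_x(h\circ Z_n) = \mathbb{E}_x(h\circ Z_0) = h(x)$. Combining this constancy with the uniform integrability from Step 1, I can pass to the limit on the left-hand side (bounded convergence under (i), or $L^1$-convergence of $L^2$-bounded martingales under (ii)) to conclude $h(x) = \int_\Omega v\, d\mathbb{P}_x$, which is (\ref{E:SP.5.3}).

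For the converse, I would start from a shift-invariant bounded (or $L^2$) function $v$ on $\Omega$ and define $h(x) := \int_\Omega v\, d\mathbb{P}_x$. Using $v = v\circ\sigma$ and the explicit cylinder-set description of $\mathbb{P}_x$ from item (d) of Section \ref{sec:SP.1}, one decomposes
\[
  h(x) = \int_\Omega v\circ\sigma\, d\mathbb{P}_x
       = \int_S p(x,y)\Bigl(\int_\Omega v\, d\mathbb{P}_y\Bigr)\,d\mu(y)
       = \int_S p(x,y) h(y)\, d\mu(y) = (Th)(x),
\]
so $Th = h$; that is, $h$ is harmonic. The boundedness of $h$ (respectively the $L^2$-bound on $h\circ Z_n$) is then inherited from that of $v$ via Jensen's inequality, so one of the two regularity hypotheses is automatically preserved.

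The main obstacle will be the disintegration step in the converse: one must justify carefully that integration of $v\circ\sigma$ against $\mathbb{P}_x$ equals the iterated integral $\int p(x,y)\,d\mu(y)\int v\,d\mathbb{P}_y$. This is essentially the Markov/consistency property built into the definition of $\mathbb{P}_x$, but it requires writing out the cylinder-set formula from (d) and verifying that tail-invariance of $v$ makes the shifted measure $\mathbb{P}_x\circ\sigma^{-1}$ disintegrate as $\int p(x,y)\,\mathbb{P}_y\,d\mu(y)$; the remainder of the argument, being formal manipulation of martingales and conditional expectations already set up in Theorem \ref{T:SP.1} and Corollary \ref{C:SP.1}, should then go through cleanly.
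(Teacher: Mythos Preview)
Your proposal is correct and follows essentially the same route as the paper: the forward direction uses the martingale property of $(h\circ Z_n)$ together with Doob's convergence and Dominated Convergence to pass to the limit $v$, and the converse computes $T_c h = h$ directly from the shift-invariance $v=v\circ\sigma$ via the Markov disintegration $\mathbb{P}_x\circ\sigma^{-1}=\sum_{y\sim x}p(x,y)\mathbb{P}_y$. Your write-up is in fact more explicit than the paper's on both the constancy $\mathbb{E}_x(h\circ Z_n)=h(x)$ and the disintegration step, which the paper records only as a short chain of conditional-expectation identities.
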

\begin{proof}
Starting with $h$ harmonic, if the Doob-limit $v$ in (\ref{E:SP.5.1}) exists, 
then it is clear that $v$ satisfies (\ref{E:SP.5.2}).  By Dominated 
Convergence, (\ref{E:SP.5.3}) will be satisfied.  

Conversely, suppose some measurable $v:\Omega \to \mathbb{R}$ satisfies 
(\ref{E:SP.5.2}), and the integral in (\ref{E:SP.5.3}) exists then
\begin{align*}
  (T_{c}h)(x)&=\sum_{y\sim x}p(x,y)h(y)  \\
  &=\sum_{y\sim x}\mathbb{P}(Z_{0}=x, Z_{1}=y)\mathbb{E}(v|Z_{0}(\cdot)=y) \\
  &\underset{\text{by (\ref{E:SP.5.2})}}{=}  
  \sum_{y\sim x}p(x,y)\mathbb{E}_{x}(v|Z_{1}(\cdot)=y)  \\
  &=\sum_{y\sim x}p(x,y)\mathbb{E}(v|Z_{0}=x, Z_{1}=y) \\
  &=\mathbb{P}_{x}(v(\cdots))  \\
  &=h(x),
\end{align*}
showing that $h$ is harmonic.

\end{proof}

\subsection{Solenoids}
\label{sec:SP.2}

\begin{example}
\label{Ex:SP.0}
Let $S$ be a compact Hausdorff space, and $\sigma:S \to S$ a finite-to-one 
endomorphism onto $S$.  Let $X_{\sigma}(S)$ be the corresponding solenoid:
\[
  X_{\sigma}(S) \subset \prod_{n\in \mathbb{N}_{0}}S, \quad \mbox{where }
  \mathbb{N}_{0}=\{0\}\cup \mathbb{N}=\{0, 1, 2, 3, \cdots\},
\]
\begin{equation}
\label{E:SP.6}
  X_{\sigma}(S)=\{(x_{k})_{k\in \mathbb{N}_{0}}| \sigma (x_{k+1})=x_{k}\}.
\end{equation}

One advantage of a choice of solenoid over the initial endomorphism 
$\sigma:S \to S$ is that $\sigma$ induces an \textit{automorphism} 
$\widehat{\sigma}:X_{\sigma}(S) \to X_{\sigma}(S)$ as follows:
\[
  \widehat{\sigma}((x_{0}x_{1}x_{2}\cdots))
  =(\sigma(x_{0})x_{0}x_{1}x_{2}\cdots),
  \quad \mbox{with inverse} \quad 
  \widehat{\sigma}^{-1}((x_{0}x_{1}x_{2}\cdots))=(x_{1}x_{2}x_{3}\cdots).
\]

Let $W:S \to [0,1]$ be a Borel measurable function, and set 
\begin{equation}
\label{E:SP.7}
  (T_{W}f)(x)=\sum_{\underset{\sigma (y)=x}y}W(y)f(y), \quad f\in B(S), x\in S.
\end{equation}
Assume
\begin{equation}
\label{E:SP.8}
  \sum_{\sigma (y)=x}W(y)\equiv 1, \forall x\in S.
\end{equation}

For points $x\in S$, set $D(x):=\sharp \{y|\sigma(y)=x\}$.  A measure $\mu$ 
on $S$ is said to be \textit{strongly invariant} if 
\[
  \int_{S}\frac{1}{D(x)}\underset{\sigma(y)=x}{\sum_{y}}f(y)d\mu(x)
  =\int_{S}f(x)d\mu(x).
\]

\begin{lemma}
\label{L:SP.1}
Assume a measure $\mu$ on $S$ is strongly invariant, and let $m$ be a function 
on $S$.  Set $Vf(x)=m(x)f(\sigma(x))$.  Then the adjoint operator 
\[
  V^{*}:L^{2}(\mu)\to L^{2}(\mu) \quad \mbox{is} \quad 
  (V^{*}f)(x)=\frac{1}{D(x)}\underset{\sigma(y)=x}{\sum_{y}}\overline{m}(y)f(y)
\]
\end{lemma}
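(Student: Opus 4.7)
The plan is a direct computation of the inner product $\langle Vf, g\rangle_{L^2(\mu)}$ followed by a single application of strong invariance to move the sum to the correct side. First I would write out
\[
  \langle Vf, g\rangle_{L^2(\mu)} = \int_S m(x)\, f(\sigma(x))\, \overline{g(x)}\, d\mu(x),
\]
which is the only way to start given the explicit formula for $V$.

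Next I would apply the strong invariance identity to the integrand, using as the function being averaged
\[
  h(y) := m(y)\, f(\sigma(y))\, \overline{g(y)}.
\]
Strong invariance rewrites $\int_S h(y)\,d\mu(y)$ as $\int_S \frac{1}{D(x)} \sum_{\sigma(y)=x} h(y)\, d\mu(x)$. The key simplification is that inside the inner sum $\sigma(y)=x$, so $f(\sigma(y))=f(x)$ factors out of the sum over $y$:
\[
  \langle Vf, g\rangle_{L^2(\mu)} = \int_S f(x)\left(\frac{1}{D(x)}\sum_{\sigma(y)=x} m(y)\,\overline{g(y)}\right) d\mu(x).
\]

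Finally I would identify the parenthesized factor as $\overline{(V^*g)(x)}$ for the claimed formula, noting that complex conjugation turns $m(y)$ into $\overline{m}(y)$ and $\overline{g(y)}$ back into $g(y)$. Since the identity $\langle Vf,g\rangle = \langle f, V^*g\rangle$ then holds for all $f,g$ in (say) a dense subspace of bounded functions in $L^2(\mu)$, the adjoint is as stated. The only subtlety worth flagging is the bookkeeping of complex conjugates in $\langle\cdot,\cdot\rangle_{L^2(\mu)}$ and the implicit assumption that $V$ (hence $V^*$) is a well-defined densely defined operator, which requires $m$ to be bounded (or at least locally integrable enough); given the hypotheses of the lemma this is the only mild point that needs to be acknowledged, not a genuine obstacle.
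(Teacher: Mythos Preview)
Your computation is correct and is exactly the standard argument: one application of strong invariance to the integrand $m(y)f(\sigma(y))\overline{g(y)}$, followed by factoring $f(x)$ out of the fiber sum. The paper does not actually prove this lemma---its proof reads ``See \cite{Jor06a}''---so your write-up in fact supplies what the paper omits; the only cosmetic mismatch is that the paper's inner-product convention is conjugate-linear in the first slot rather than the second, which changes where the bars sit in the intermediate steps but not the final formula.
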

\begin{proof}
See \cite{Jor06a}.
\end{proof}

Set $\Omega:=X_{\sigma}(S)$ and equip it with the $\sigma$-algebra 
$\mathcal{F}$ and the topology which is generated by the cylinder sets.

Set $Z_{k}:\Omega \to S$, 
\begin{equation}
\label{E:SP.9}
  Z_{k}(x_{0}x_{1}x_{2} \cdots):=x_{k}, \quad k \in \mathbb{N}_{0}.
\end{equation}

Let $E \subset S$ be a Borel set, and consider 
\begin{equation}
\label{E:SP.10}
  Z_{k}^{-1}(E)=\{\omega \in \Omega | Z_{k}(\omega) \in E\}.
\end{equation}
Then the $\sigma$-algebra $\mathcal{F}$ on $\Omega$ is generated by the sets 
\begin{equation}
\label{E:SP.11}
  Z_{k}^{-1}(E) \quad \mbox{as $k$ and $E$ vary}.
\end{equation}
Set
\begin{equation}
\label{E:SP.12}
  \mathcal{F}_{n}:=\sigma\text{-algebra}\ll Z_{k}|k \leq n\gg 
\end{equation}
where $\ll \cdot \gg$ refers to the $\sigma$-algebra as specified in 
(\ref{E:SP.10}).

In $\Omega=X_{\sigma}(S)$, consider the following random walk: For points 
$x,y \in S$, a transition $x \to y$ is possible if and only if $\sigma(y)=x$; 
and in this case the transition probability is $p_{W}(x,y):=W(y)$.

Let $\mu$ be a probability measure on $S$.  In $\Omega$ we introduce the 
following Kolmogorov measure $\mathbb{P}:=\mathbb{P}_{W}$ which is 
determined on cylinder sets as follows
\begin{align}
  \mathbb{P}(C_{n})&:=\mathbb{P}(C(E_{0}, E_{1}, E_{2} \cdots, E_{n})) \\
       &=\int_{E_{0}}\int_{E_{1}} \cdots \int_{E_{n}}
       W(x_{1})W(x_{2})\cdots W(x_{n})d\mu(x_{0})d\mu(x_{1})\cdots d\mu(x_{n})
\end{align}
More specifically, $\mathbb{P}$ is a measure on infinite paths, and 
\begin{equation}
\label{E:SP.15}
  C_{n}=\{\omega=(\omega_{0}\omega_{1}\omega_{2}\cdots)|
  \sigma(\omega_{i+1})=\omega_{i}, Z_{k}(\omega) \in E_{k}, \text{ for } 
  0 \leq k \leq n\}. 
\end{equation}

\end{example}

\begin{example}
\label{E:SP.0.1}
The following is a solenoid which is used in both number theory (the study 
of algebraic irrational numbers) and in ergodic systems. \cite{BrJo91}. For
this family of examples, the solenoids are associated with specific 
polynomials $p \in \mathbb{Z}[x]$.

Let $S:=\mathbb{T}^{s}$ where $s \in \mathbb{N}$ is fixed; and let 
$p(x)=a_{0}x^{s}+a_{1}x^{s-1}+\cdots +a_{s}$; $a_{0}\neq 0$, be a polynomial, 
$p \in \mathbb{Z}[x]$.  Set
\[
  F=F_{p}:=
  \begin{pmatrix}
    0 & a_{0} & 0 & 0 & \cdots & 0 \\
    0 & 0 & a_{0} & 0 & \cdots & 0 \\
    0 & 0 & 0 & a_{0} & \cdots & 0 \\
    \vdots & \vdots & \vdots & \vdots & \ddots & \vdots \\
    0 & 0 & 0 & 0 & \cdots & a_{0}  \\
    -a_{s} & -a_{s-1} & \cdots & \cdots & -a_{2} & -a_{1}  
  \end{pmatrix}.
\]

Consider the shift $\sigma$ on the infinite torus 
$\prod_{\mathbb{Z}}\mathbb{T}^{s}=(\mathbb{T}^{s})^{\mathbb{Z}}$, and set 
\[
  X_{\sigma}:=\{(z_{n})_{n \in \mathbb{Z}} \in (\mathbb{T}^{s})^{\mathbb{Z}} | 
  a_{0}z_{n+1}=Fz_{n}\}.  
\]
Then it follows that $X_{\sigma}(p)$ is $\sigma$-invariant and closed.  As a
result, $X_{\sigma}(p)$ is a compact solenoid.

\end{example}

\section{Graphs}
\label{sec:G}
One additional application of these ideas is to infinite graph systems 
$(G, c)$ where $G$ is a graph and $c$ is a positive conductance function.  
A comprehensive study of this class of examples was carried out in the 
paper \cite{JoPe08}.  We will adapt the convention from that paper: 
\begin{itemize}
  \item[$G^{0}$ :] the set of vertices in $G$; 
  \item[$G^{1}$ :] the set of edges in $G$; 
  \item[and]$c$ : $G^{1} \to \mathbb{R}_{+}$ the conductance function.
\end{itemize}

\subsubsection*{Assumptions}
\begin{enumerate}[(i) ]
\item \textit{Edge symmetry.} If $x, y \in G^{0}$ and $(x,y) \in G^{1}$, 
then we assume that $c_{x,y}=c_{y,x}$.  Moreover, $(x,y)\in G^{1} 
\Leftrightarrow (y,x)\in G^{1}$.
\item \textit{Finite neighborhoods.} For all $x \in G^{0}$, the set 
$Nbh(x)=\{y \in G^{0} | (x,y) \in G^{1}\}$ is finite. 
\item \textit{No self-loops.} If $x \in G^{0}$, then $x \notin Nbh(x)$. \\
      Convention: If $x,y \in G^{0}$, we write $x \sim y$ iff 
      $(x,y) \in G^{1}$. 
\item \textit{Connectedness.} For all $x, y \in G^{0}$ there exists 
  $\{ x_{i}\}_{i=0}^{n} \subset G_{0}$ such that $(x_{i},x_{i+1}) \in G^{1}$, 
  $i=0, 1, \cdots, n-1$ $x_{0}=x$ and $x_{n} = y$. 
\item \textit{Choice of origin.} We select an origin $o \in G^{0}$.
\end{enumerate}

\begin{definition}
\begin{itemize}
\item The Laplace operator $\Delta = \Delta_{c}$:
\[
  (\Delta f)(x):=\sum_{y \sim x}c_{x,y}(f(x)-f(y)).
\]
\item Hilbert spaces:
  \begin{enumerate}[(i) ]
    \item $l^{2}(G^{0})$: functions $f: G^{0} \to \mathbb{C}$ such that
    $\|f\|_{2}^{2} = \sum_{x \in G^{0}} |f(x)|^{2} < \infty$. 
    Set $\langle f_{1}, f_{2}\rangle_{2}:=\sum_{x \in G^{0}}\overline{f_{1}(x)}f_{2}(x)$. For every $x \in G^{0}$, set $\delta_{x}:G^{0} \to \mathbb{R}$,
    \[
      \delta_{x}(y)=
      \begin{cases}
        1 & \text{if $y=x$} \\
        0 & \text{if $y \neq x$}
      \end{cases}
    \]
    Note that $\{\delta_{x}\}$ is an orthonormal basis (ONB) in 
    $l^{2}(G^{0})$.
    \item $\mathcal{H}_{E}$: finite energy functions module constants: 
    \begin{equation}
    \label{E:1.1} 
      \|f\|_{E}^{2} = \frac{1}{2}\sum_{\text{all}}
      \sum_{x \sim y}c_{x,y}|f(x)-f(y)|^{2}.
    \end{equation}
    Set
    \begin{equation}
    \label{E:1.2} 
    \langle f_{1}, f_{2}\rangle_{E} := \frac{1}{2}\sum \sum_{x \sim y}c_{x,y}
    (\overline{f_{1}(x)}-\overline{f_{1}(y)})(f_{2}(x)-f_{2}(y)).
    \end{equation}
  \end{enumerate}  
\item Dipoles. 
  For all $x \in G^{0}$ there is a unique 
  $v_{x} \in \mathcal{H}_{E}$ such that 
  \[
    \langle v_{x},f\rangle_{E}=f(x)-f(o), \quad \forall f\in \mathcal{H}_{E}.
  \]
  In this case, $v_{x}$ satisfies
  $\Delta v_{x} = \delta_{x}-\delta_{o}$,
  and we make the choice $v_{x}(o)=0$.  The function 
  $v_{x}: G^{0} \to \mathbb{R}$ is called a \textit{dipole}.
\end{itemize}
\end{definition}

\begin{example}
\textit{The dyadic tree.}
\begin{itemize}
\item $\mathcal{A}=$ the alphabet of two letters, bits 
      $\{0,1\}\simeq \mathbb{Z}_{2}$. 
\item $G^{0}$: the set of all finite words in $\mathcal{A}: o=\emptyset=$ 
the empty word, 
$x=(a_{1}a_{2} \cdots a_{n})\in G^{0}$, $a_{i} \in \mathcal{A}$, a word of 
length $n$; $l(x)=n$.
\item $G^{1}:=$ the edges in the dyadic tree.  If $x = \emptyset$, 
$Nbh(x)=\{0,1\}$ two one-letter words.  If $l(x)=n > 0$, 
$x=(a_{1}a_{2} \cdots a_{n})$, 
$Nbh(x)=\{(a_{1} \cdots a_{n-1}), (x0), (x1)\}$. Set 
$x^{*}:=(a_{1} \cdots a_{n-1})$. 
\item \textit{Constant conductance.} \\
This is the restriction $c \equiv 1$ on $G^{1}$.  Then
\[
  (\Delta f)(o)=2f(o)-f(0)-f(1),  \quad \mbox{and}
\]
\[
  (\Delta f)(x)=3f(x)-f(x^{*})-f(x0)-f(x1), 
\]
if $x \in G^{0}$, and $l(x) >0$. 
\item \textit{Paths in the tree.} 
If $x=(a_{1}a_{2} \cdots a_{n}) \in G^{0}$, there is a unique path $\gamma(x)$ 
from $\emptyset$ to $x$: the path is 
\[
  \gamma(x)=\{(o,a_{1}), (a_{1}, (a_{1}a_{2})), \cdots ((a_{1} \cdots a_{n-1}), x)\}
\] and consists of $n$ edges.
\item \textit{Concatenation of words:} For $x=(a_{1}a_{2} \cdots a_{n})$, 
$y=(b_{1}b_{2} \cdots b_{m}) \in G^{0}$.  Set 
$z=z(xy)=(a_{1} \cdots a_{n} b_{1} \cdots b_{m})$.
\end{itemize}
\end{example}

The dipoles $(v_{x})$ are indexed by $x \in G^{0}\setminus (o)$, and 
$v_{x}(o)=0$ where $o$ is the chosen origin.  If $G=$ the tree, then 
$o = \emptyset =$ the empty word.

\begin{lemma}
\label{L:1.3}
\cite{JoPe08} Let $x=(a_{1}a_{2} \cdots a_{n})$, $a_{i} \in A$, $n=l(x)$;
and $y=(b_{1}b_{2} \cdots b_{m})$, $b_{i} \in A$, $m=l(y)$.  Then
\begin{enumerate}[(i) ]
  \item 
  \[
    v_{x}(y):=
    \begin{cases}
      0 & \text{if $y=o$} \\
      2^{n-m}\cdot(2^{m}-1)-\frac{2^{n}-1}{2} & \text{if $m \leq n$} \\
      \frac{2^{n}-1}{2} & \text{if $m > n$}
    \end{cases}
  \] 
  \item $v_{x} \in \mathcal{H}_{E}$, and 
  $\|v_{x}\|_{E}^{2}=\frac{2}{3}(2^{2n}-1)$. \\
  \item $\langle v_{x}, v_{y}\rangle_{E} 
  = \frac{2}{3}(2^{2\min(l(x)l(y))}-1)=\#(\gamma(x) \cap \gamma(y))$, 
  for all $x, y \in G^{0}\setminus (o)$.
\end{enumerate}

\end{lemma}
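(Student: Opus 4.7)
The strategy is to observe that parts (ii) and (iii) both collapse to part (i) once one invokes the Riesz-type reproducing identity that characterises the dipole in $\mathcal{H}_E$, namely
\[
\langle v_x, f\rangle_E = f(x) - f(o) \qquad \text{for all } f\in \mathcal{H}_E.
\]
Granting (i), specialise this identity to $f = v_y$: since $v_y(o)=0$, one obtains $\langle v_x, v_y\rangle_E = v_y(x)$, so (iii) reduces to checking that the piecewise formula in (i), evaluated at $x$ with $y$ in the role of the ``source'', equals $\tfrac{2}{3}(2^{2\min(\ell(x),\ell(y))}-1)$ and coincides with $\#(\gamma(x)\cap\gamma(y))$. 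Specialising further to $y=x$ gives $\|v_x\|_E^2 = v_x(x)$, which identifies (ii) as the $m=n$ instance of (i). In both cases the remaining work is a symbolic computation.

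So the real content is part (i). Let $w=w_x$ denote the right-hand side of the piecewise formula, regarded as a function on the vertex set. Because the dipole is uniquely characterised in $\mathcal{H}_E$ by the reproducing identity (Riesz representation), it suffices to show that (a) $w\in\mathcal{H}_E$, and (b) $\langle w, f\rangle_E = f(x) - f(o)$ for every $f\in\mathcal{H}_E$.

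For (a) I would exploit that $w(y)$ depends only on the level $\ell(y)$: the energy sum (\ref{E:1.1}) reorganises by level, with $2^{k+1}$ edges joining words of length $k$ to words of length $k+1$, each contributing the same squared increment $|w(k+1)-w(k)|^2$; since $w$ is eventually constant past level $n$, only finitely many levels contribute and the total is finite. For (b) I would verify the stronger pointwise identity $\Delta w = \delta_x-\delta_o$ by a case analysis using the formulas $(\Delta f)(o)=2f(o)-f(0)-f(1)$ and $(\Delta f)(z)=3f(z)-f(z^*)-f(z0)-f(z1)$ supplied in the example, splitting into four cases: $z=o$, $z$ strictly interior to $\gamma(x)$, $z=x$, and $z$ off $\gamma(x)$. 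Then I would promote this pointwise identity to the desired inner-product identity by discrete summation-by-parts, first on finitely supported test functions and then by density in $\mathcal{H}_E$, invoking the framework of \cite{JoPe08}.

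The principal obstacle I anticipate is the Laplacian case analysis in (b): because the formula for $w$ does not distinguish vertices at a common level, the vanishing of $(\Delta w)(z)$ at ``sibling'' vertices $z$ that branch off $\gamma(x)$ has to be forced by a delicate interplay between the two non-zero branches of the piecewise formula together with the normalising constant $\tfrac{1}{2}(2^n-1)$. Once that is in hand, matching the resulting value of $v_x(x)$ with $\tfrac{2}{3}(2^{2n}-1)$ and of $v_y(x)$ with $\tfrac{2}{3}(2^{2\min(\ell(x),\ell(y))}-1) = \#(\gamma(x)\cap\gamma(y))$ is a routine algebraic check that I would defer until the dipole has been identified.
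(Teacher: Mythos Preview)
Your plan for (i)---verify $\Delta w=\delta_x-\delta_o$ by a level-by-level case analysis---is exactly the paper's; it checks $z=o$, the interior range $0<\ell(z)<n$, and $z=x$, and leaves $\ell(z)>n$ to the reader.

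For (ii) and (iii) you take a different and conceptually cleaner route. The paper computes $\|v_x\|_E^2$ and $\langle v_x,v_y\rangle_E$ directly from the energy sums (\ref{E:1.1})--(\ref{E:1.2}), whereas you invoke the reproducing identity to reduce them to the single values $v_x(x)$ and $v_y(x)$ and defer a ``routine algebraic check''. That check does not pass. Substituting $m=n$ in the formula of (i) gives
\[
v_x(x)=2^{0}(2^{n}-1)-\tfrac{2^{n}-1}{2}=\tfrac{2^{n}-1}{2},
\]
not $\tfrac{2}{3}(2^{2n}-1)$; similarly $v_y(x)$ computed from (i) equals neither $\tfrac{2}{3}(2^{2\min(n,m)}-1)$ nor $\#(\gamma(x)\cap\gamma(y))$ in general. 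So your reduction, though perfectly valid as a method, would establish values different from those asserted. This is the same phenomenon as the obstacle you anticipate in (i): since the formula there depends only on $\ell(y)$, the Laplacian computation at any $z$ with $\ell(z)=n$ is identical to the one at $z=x$ and returns $1$, so $(\Delta w)(z)$ cannot vanish at the level-$n$ siblings of $x$. What you describe as a ``delicate interplay'' to be worked out is in fact an inconsistency in the stated formula---neither your case analysis nor the paper's covers the case $\ell(z)=n$, $z\neq x$. Compare Theorem~\ref{T:3.2}, where the dipole for the same tree is given as $v_x(y)=\#(\gamma(x)\cap\gamma(y))$ with $\|v_x\|_E^2=\ell(x)$, incompatible with (i)--(ii) here.
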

\begin{proof}
\begin{enumerate}[(i) ]
  \item By the uniqueness in Lemma \ref{L:1.3}, it is enough to prove that the 
  function $v_{x}$ in (i) satisfies
  $\langle v_{x},f\rangle_{E}=f(x)-f(o)$  for all $f\in \mathcal{H}_{E}$, 
  and therefore also
  \begin{equation}
  \label{eq:1.1}
    \Delta v_{x}=\delta_{x} - \delta_{o};
  \end{equation}
  and that (ii)-(iii) hold.

  Specifically, we must prove that 
  \begin{align*}
    (\Delta v_{x})(o) &= -1,  \\
    (\Delta v_{x})(x) &= 1, \text{ and }  \\
    (\Delta v_{x})(y) &= 0, \text{ if $y \notin \{o, x \}$}.
  \end{align*}  
  Each is a computation:
  \begin{align*}
    (\Delta v_{x})(o) &= 2v_{x}(o)-v_{x}(0)-v_{x}(1) \\
      &= 0-(2\cdot2^{n-1}-(2^{n}-1)) \\
      &= 1  \\
      &= \delta_{o}(o). 
  \end{align*}
  And if $y \neq o$, but $m < n$, then
  \begin{align*}
    (\Delta v_{x})(y) &= 3v_{x}(y)-v_{x}(y^{*})-v_{x}(y0)-v_{x}(y1) \\
      &= 3\cdot2^{n-m}\cdot(2^{m}-1) - 2^{n-m+1}\cdot(2^{m-1}-1)
         - 2\cdot2^{n-m-1}\cdot(2^{m+1}-1) \\
      &= 0.
  \end{align*}
  
  Finally, we compute the case $y=x$ as follows:
  \begin{align*}
    (\Delta v_{x})(x) &= 3v_{x}(x)-v_{x}(x^{*})-v_{x}(x0)-v_{x}(x1) \\
      &= 3\cdot(2^{n}-1) - 2\cdot(2^{n-1}-1) - 2\cdot(2^{n}-1) \\
      &= 0-3+2+2 = 1 \\
      &= \delta_{x}(x) - \delta_{o}(x).
  \end{align*}
  We leave the case $m=l(y)>n$ to the reader. 
  \item  A computation using (\ref{E:1.1}) yields
  \begin{align*}
    \|v_{x}\|^{2}_{E} &= \frac{1}{2}\sum_{m \leq n} (2^{n-m})^{2} \\
      &= \frac{1}{2} \cdot 2^{2n} \cdot \left( \frac{1-2^{-2n}}
      {1-2^{-2}}\right) \\
      &= \frac{2}{3}(2^{2n}-1)
  \end{align*}
  proving (ii). 
  \item Suppose $m=l(y)<n=l(x)$, $x,y \in G^{0} \setminus (o)$.  
  From (\ref{E:1.2}), we see that the contribution to 
  $\langle v_{x}, v_{y}\rangle_{E}$ only includes words $z$ with $l(z)\leq m$. 
\end{enumerate}

The desired conclusion 
\[
  \langle v_{x}, v_{y}\rangle_{E} = 2^{-2m} \#(\gamma(x) \cap \gamma(y))
\] 
follows as in (ii).  The possibilities may be illustrated in Figure 1 below.


\end{proof}

\section{Specific Transition Operators}
\label{sec:STO}
\subsection{Transition on Graphs}
\label{sec:STO.1}
Let $G=(G^{0}, G^{1})$ be a graph with conductance function 
$c:G^{1} \to \mathbb{R}^{+}$, and transition probabilities
\[
  p(x,y):=\frac{c(x,y)}{c(x)}, \quad \forall (x,y)\in G^{1}.
\]

Note that $c(x)p(x,y)=p(x,y)c(y)$, which makes the corresponding $p$-random 
walk reversible.

\begin{lemma}
\label{L:STO.1}
Assume that $\# Nbh(x)<\infty$ for all $x\in G^{0}$.  Set 
\[
  (Tf)(x):=\sum_{y \sim x}p(x,y)f(y), 
\]
and let $(Z_{n})$ be the random walk on 
$G^{0}$ with transition probabilities $p(x,y)$ on edges $(xy)$ in $G$, i.e.,
\[
  \mathbb{P}(\{\gamma |Z_{n}(\gamma)=x, Z_{n+1}(\gamma)=y\})=p(x,y) \quad
  \mbox{for $(xy) \in G^{1}$} 
\]

Let $T$ be the transition operator, and for $\varphi \in l^{1}(G^{0})$, set 
\[
  \langle \varphi \rangle := \sum_{x \in G^{0}}\varphi(x)
\]
then for pairs of functions $f_{1}$ and $f_{2}$ on $G^{0}$, we have
\[
  \mathbb{E}((f_{1}\circ Z_{n})\cdot (f_{2}\circ Z_{n+1}))
  =\langle T^{n}(f_{1}\cdot Tf_{2}) \rangle
\]
with $f_{1}$ and $f_{2}$ are restricted to make the last sum convergent.
\end{lemma}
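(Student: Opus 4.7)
The plan is to verify the identity by unfolding $\mathbb{E}$ in terms of cylinder-set path integrals and then peeling off one factor of $T$ per coordinate. The underlying path measure is the one in Assumption (d) of Section \ref{sec:SP.1}, specialized to $S = G^{0}$ with $\mu$ taken to be counting measure; this is what turns $\int d\mu$ into $\sum_{x\in G^{0}}$, and it is also what identifies the final answer with $\langle\,\cdot\,\rangle$. Under that identification, the cylinder-set measure of $\{Z_{0}=x_{0},\ldots,Z_{n+1}=x_{n+1}\}$ is exactly $p(x_{0},x_{1})\cdots p(x_{n},x_{n+1})$, so
\[
\mathbb{E}((f_{1}\circ Z_{n})(f_{2}\circ Z_{n+1})) = \sum_{x_{0},x_{1},\ldots,x_{n+1}} p(x_{0},x_{1})\cdots p(x_{n},x_{n+1})\, f_{1}(x_{n})\, f_{2}(x_{n+1}).
\]

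First I would collapse the innermost sum, using the very definition of $T$ given in Lemma \ref{L:STO.1}: $\sum_{x_{n+1}} p(x_{n},x_{n+1}) f_{2}(x_{n+1}) = (Tf_{2})(x_{n})$. Setting $h := f_{1}\cdot (Tf_{2})$ reduces the expression to
\[
\sum_{x_{0},\ldots,x_{n}} p(x_{0},x_{1})\cdots p(x_{n-1},x_{n})\, h(x_{n}).
\]
Next I would iterate. The sum over $x_{n}$ gives $\sum_{x_{n}} p(x_{n-1},x_{n}) h(x_{n}) = (Th)(x_{n-1})$; the sum over $x_{n-1}$ then gives $(T^{2}h)(x_{n-2})$; and so on. After $n$ such contractions only the outer sum over $x_{0}$ survives, acting on $(T^{n}h)(x_{0})$, which by definition of $\langle\,\cdot\,\rangle$ equals $\langle T^{n}(f_{1}\cdot Tf_{2})\rangle$. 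This matches the desired right-hand side.

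The one step that requires actual care, and which I expect to be the main obstacle, is justifying the repeated interchange of summation in the telescoping step. The hypothesis placed at the end of the statement, that $f_{1}$ and $f_{2}$ are restricted so as to make the final sum converge, is exactly what is needed: replacing $f_{1},f_{2}$ by $|f_{1}|,|f_{2}|$ and running the same computation produces a nonnegative expression with value $\langle T^{n}(|f_{1}|\cdot T|f_{2}|)\rangle < \infty$. By Tonelli the iterated sum converges absolutely, and Fubini then legitimates the rearrangement for the signed (or complex-valued) integrands. The finite-neighborhood assumption (ii) on $G$ is what ensures that each individual application of $T$ is a finite sum at a point, so the only global condition in play is this $l^{1}$ control.

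Conceptually, this is a counting-measure analogue of Theorem \ref{T:SP.0.2}: there the Perron-Frobenius relation $\mu_{0}\circ T = \mu_{0}$ collapses the leftmost sum, here the summation $\langle\,\cdot\,\rangle$ is simply kept to the very end and the full operator $T^{n}$ appears explicitly in the answer instead of being absorbed.
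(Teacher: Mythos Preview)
Your proof is correct and follows essentially the same route as the paper: expand the expectation as a nested sum $\sum_{x_0}\cdots\sum_{x_{n+1}} p(x_0,x_1)\cdots p(x_n,x_{n+1})f_1(x_n)f_2(x_{n+1})$, collapse the innermost sum to $(Tf_2)(x_n)$, then iterate to reach $\sum_{x_0} T^n(f_1\cdot Tf_2)(x_0)=\langle T^n(f_1\cdot Tf_2)\rangle$. Your added Fubini/Tonelli justification and the remark relating the result to Theorem~\ref{T:SP.0.2} are welcome elaborations, but the underlying argument is the same direct computation the paper gives.
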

\begin{proof}
Let $f_{1}$, $f_{2}$ be a pair of functions (real valued) on $G^{0}$ such 
that the pointwise product $f_{1}\cdot (Tf_{2})$ is in $l^{1}(G^{0})$.  Then 
for $n\in \mathbb{N}_{0}$, we now compute the $Z_{n}$-expectations: For the 
$\mathbb{P}$-integration on path space $\Omega$, we have:
\begin{align*}
  &\mathbb{E}((f_{1}\circ Z_{n})\cdot (f_{2}\circ Z_{n+1}))\\
  &=\int_{\Omega}(f_{1}\circ Z_{n})\cdot (f_{2}\circ Z_{n+1})d\mathbb{P} \\
  &=\underset{\text{such that } x_{i-1} \sim x_{i}} 
    {\sum_{x_{0}}\sum_{x_{1}}\cdots \sum_{x_{n+1}}} 
    p(x_{0},x_{1})p(x_{1},x_{2})\cdots p(x_{n},x_{n+1})f_{1}(x_{n})
    f_{2}(x_{n+1}) \\
  &=\sum_{x_{0}}\sum_{x_{1}}\cdots \sum_{x_{n}} 
    p(x_{0},x_{1})p(x_{1},x_{2})\cdots p(x_{n-1},x_{n})f_{1}(x_{n})
    (Tf_{2})(x_{n}) \\
  &=\sum_{x_{0}\in G^{0}} T^{n}(f_{1}\cdot Tf_{2})(x_{0}) \\ 
  &=\langle T^{n}(f_{1}\cdot Tf_{2}) \rangle
\end{align*}
\end{proof}

\begin{theorem}
\label{T:STO.1}
Let $(G,c)$ be a graph with conductance $c:G^{1}\to \mathbb{R}_{+}$.  Assume 
that $\sharp Nbh(x) < \infty$ for all $x\in G^{0}$, when 
$Nbh(x):=\{y\in G^{0}|y \sim x\}$.  Set 
\[
  p(x,y):=\frac{c(x,y)}{c(x)} \quad \text{and} \quad 
  (Tf)(x):=\sum_{y\sim x}p(x,y)f(y).
\]  
Set 
\[
  l^{1}(G^{0},\mu_{c})=\{f:G^{0}\to \mathbb{R}| x\to c(x)f(x)\in l^{1}(G^{0})\},
  \quad \mbox{and} \quad \langle f\rangle_{c}:=\sum_{x\in G^{0}}c(x)f(x).
\]
Let $\mathbb{P}^{(c)}=\mathbb{P}^{(\mu_{c})}$ be the cylinder path-measure 
on
\[
  \Omega:=\{(x_{0}x_{1}x_{2}\cdots)|x_{i}\in G^{0}, x_{i-1}\sim x_{i}, 
  i \in \mathbb{N}\}
\]
where we use $\mu_{c}$ in the first variable $x_{0}$, and counting measure 
on the remaining variables.  Then
\[
  \mathbb{E}^{(\mu_{c})}((f_{1}\circ Z_{n})\cdot (f_{2}\circ Z_{n+1})) 
  =\langle f_{1}\cdot Tf_{2} \rangle_{c}
\]
\end{theorem}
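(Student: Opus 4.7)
The key observation is that Theorem \ref{T:STO.1} is essentially an instance of Theorem \ref{T:SP.0.2}, with the graph-theoretic measure $\mu_c$ (density $c(x)$) playing the role of the Perron-Frobenius measure $\mu_0$, and counting measure on $G^0$ playing the role of $\mu$. So the plan is to reduce to Theorem \ref{T:SP.0.2} by verifying the Perron-Frobenius condition (\ref{E:SP.0.2}) for $\mu_c$.

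First I would check that $\mu_c$ is Perron-Frobenius for the transition operator $T$. Using $c(x)p(x,y) = c(x,y)$ and the edge symmetry assumption $c(x,y) = c(y,x)$, one computes
\begin{align*}
  \int_{G^0}(Tf)(x)\,d\mu_c(x)
  &= \sum_{x\in G^0} c(x) \sum_{y\sim x} p(x,y) f(y) \\
  &= \sum_{x\in G^0}\sum_{y\sim x} c(x,y) f(y) \\
  &= \sum_{y\in G^0}\Big(\sum_{x\sim y} c(y,x)\Big) f(y)
   = \sum_{y\in G^0} c(y) f(y) = \int_{G^0} f\,d\mu_c,
\end{align*}
where I used finiteness of $Nbh(x)$ to interchange the sums (and $f_1, f_2$ are taken so that everything is absolutely convergent). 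Thus $\mu_c \circ T = \mu_c$, so $\mu_c$ is a Perron-Frobenius measure in the sense of (\ref{E:SP.0.2}).

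Next I would identify the path measure $\mathbb{P}^{(c)}$ of the theorem with the measure $\mathbb{P}^{(\mu_0)}$ built in item (i) of Subsection \ref{sec:SP.1} from $\mu_0 = \mu_c$ and $\mu = $ counting measure on $G^0$: the cylinder formulas coincide because integration against counting measure is summation, and $p(x,y)$ is the transition kernel in both cases. Applying Theorem \ref{T:SP.0.2} with $\varphi = f_1$ (taken real-valued, so $\bar\varphi = f_1$) and $\psi = f_2$ then yields
\[
  \mathbb{E}^{(\mu_c)}\bigl((f_1\circ Z_n)(f_2\circ Z_{n+1})\bigr)
  = \langle f_1, Tf_2\rangle_{L^2(\mu_c)}
  = \sum_{x\in G^0} c(x)\,f_1(x)\,(Tf_2)(x)
  = \langle f_1\cdot Tf_2\rangle_c,
\]
which is exactly the claimed identity.

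The only real content beyond citing Theorem \ref{T:SP.0.2} is the reversibility computation above, so I do not expect a serious obstacle. The one minor point to watch is integrability: the identity should be stated for $f_1, f_2$ with $f_1\cdot Tf_2 \in l^1(G^0,\mu_c)$, paralleling the convergence caveat already noted after (\ref{E:SP.1.1}) in the text; the proof via Fubini/Tonelli then goes through because all terms in the cylinder integrals are nonnegative when $f_1, f_2 \geq 0$, and the general case follows by splitting into positive and negative parts.
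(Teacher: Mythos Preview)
Your proposal is correct and essentially matches the paper's proof. The paper does the path-sum computation inline (arriving at $\langle T^{n}(f_{1}\cdot Tf_{2})\rangle_{c}$ and then reducing via the Perron--Frobenius identity $\langle T\varphi\rangle_{c}=\langle\varphi\rangle_{c}$), whereas you package that computation as an invocation of Theorem~\ref{T:SP.0.2}; the one substantive step in both---the reversibility calculation showing $\mu_{c}\circ T=\mu_{c}$---is identical.
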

\begin{proof}
\begin{align*}
  &\mathbb{E}^{(\mu_{c})}((f_{1}\circ Z_{n})\cdot (f_{2}\circ Z_{n+1})) \\
  &=\underset{\text{such that } x_{i-1} \sim x_{i}} 
    {\sum_{x_{0}}\sum_{x_{1}}\cdots \sum_{x_{n+1}}}c(x_{0}) 
    p(x_{0},x_{1})p(x_{1},x_{2})\cdots p(x_{n},x_{n+1})f_{1}(x_{n})
    f_{2}(x_{n+1}) \\
  &=\sum_{x_{0}}\sum_{x_{1}}\cdots \sum_{x_{n}}c(x_{0}) 
    p(x_{0},x_{1})p(x_{1},x_{2})\cdots p(x_{n-1},x_{n})f_{1}(x_{n})
    (Tf_{2})(x_{n}) \\
  &=\sum_{x_{0}}c(x_{0})T^{n}(f_{1}\cdot Tf_{2})(x_{0}) \\ 
  &=\langle T^{n}(f_{1}\cdot Tf_{2}) \rangle_{c} \\
  &=\langle f_{1}\cdot Tf_{2} \rangle_{c}
\end{align*}

In the multiple summations $\sum_{x_{0}}\sum_{x_{1}}\cdot \sum_{x_{n+1}}$, it 
is just the first $\sum_{x_{0}}$-summation that is possibly infinite; in case 
the vertex-set $G^{0}$ is infinite.  Note that the combined summations in the 
beginning of the proof contribute the integration over the set $\Omega$ of 
all infinite paths $\gamma=(x_{0}x_{1}x_{2}\cdots)$ specified by 
$x_{0} \sim x_{1}$, $x_{1} \sim x_{2}$, $x_{2} \sim x_{3}$, $\cdots,$ at 
each step, moving from $x_{i}$ to the next variable, note that $x_{i+1}$ 
ranges over the finite set $Nbh(x_{i})$.  For more details on this point, see 
(\ref{E:STO.1}), below.

In the last step, we used the following formula which is valid on 
$l^{1}(\mu_{c})$:
\begin{equation}
\label{E:STO.1}
  \langle T\varphi \rangle_{c}=\langle \varphi \rangle_{c}, \quad
  \varphi \in l^{1}(\mu_{c}). 
\end{equation}
We prove (\ref{E:STO.1}):
\begin{align*}
  \langle T\varphi \rangle_{c}
  &=\sum_{x\in G^{0}}c(x)\sum_{y\sim x}p(x,y)\varphi(y) \\
  &=\sum_{y\in G^{0}}\varphi(y)\sum_{x\sim y}c(x,y) \\
  &=\sum_{y\in G^{0}}\varphi(y)c(y) \\
  &=\langle \varphi \rangle_{c}.
\end{align*}
\end{proof}

\subsection{Transfer Operators}
\label{sec:STO.2}
In section \ref{sec:SP}, we showed that a stochastic process 
$(Z_{n})_{n\in \mathbb{N}_{0}}$ on a probability space 
$(\Omega, \mathcal{F}, \mathbb{P})$ induces a transfer operator $T$.  The 
derivation of $T$ is then essentially canonical.

Here, the strategy will be reversed; but now, starting with $T$, there is a 
variety of choices of associated processes $(Z_{n})_{n\in \mathbb{N}_{0}}$.

\subsubsection{Setting}
\label{sec:STO.2.1}
Let $S$ be a compact Hausdorff space.  Let $(S, \mathcal{B})_{S}, \mu)$ be 
a Borel probability measure space, and let $p:S\times S\to \mathbb{R}_{\geq 0}$
be a continuous function such that
\begin{equation}
\label{E:STO.2.1}
  \int_{S}p(x,y)d\mu(y)\equiv 1 \quad \mbox{$\mu$ a.e. $x$}.
\end{equation}
Set
\begin{equation}
\label{E:STO.2.2}
  (Tf)(x):=\int_{S}p(x,y)d\mu(y) \quad \mbox{for all $f\in L^{\infty}(S)$}.
\end{equation}
Set
\begin{equation}
\label{E:STO.2.3}
  \Omega:=\Omega_{p}=\{\gamma=(x_{0}x_{1}x_{2}\cdots )| x_{i}\in S, 
  \text{ s.t. } p(x_{i-1}, x_{i})>0\},
\end{equation}
so an infinite path-space with path transitions governed by te function $p$.

Let $\mathbb{P}=(\mathbb{P}_{p})$ be the associated cylinder measure on 
$\Omega_{p}$ as defined in section \ref{sec:SP}.  For $n\in \mathbb{N}_{0}$ 
and $\gamma=(x_{0}x_{1}x_{2}\cdots ) \in \Omega_{p}$, set
\begin{equation}
\label{E:STO.2.4}
  Z_{n}(\gamma):=x_{n}; \quad \mbox{i.e., $Z_{n}:\Omega_{p}\to S$}
\end{equation}
is an $S$ valued random variable for all $n \in \mathbb{N}_{0}$.

\begin{theorem}
\label{T:STO.1}
Let $p:S\times S \to \mathbb{R}_{\geq 0}$ be as stated in (\ref{E:STO.2.1}) 
above.  Let $T$ be the transfer operator (\ref{E:STO.2.2}).  Then the 
stochastic process $(Z_{n})_{n\in \mathbb{N}_{0}}$ in (\ref{E:STO.2.4}) 
satisfies
\begin{equation}
\label{E:STO.2.5}
  \mathbb{E}^{(p)}((f_{1}\circ Z_{n})\cdot (f_{2}\circ Z_{n+1})) 
  =\int_{S}(T^{n}(f_{1}\cdot Tf_{2}))(x)d\mu (x) \quad 
  \mbox{for all $f_{1}, f_{2} \in L^{\infty}(S)$}.
\end{equation}
\end{theorem}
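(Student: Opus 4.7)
The plan is to mirror the strategy already used in Theorem~\ref{T:SP.0.2} and in the graph version of Theorem~\ref{T:STO.1}: unfold the path-space expectation into an iterated integral against the cylinder measure $\mathbb{P}^{(p)}$, and then peel off the innermost variables one at a time using the definition of $T$ in~(\ref{E:STO.2.2}).

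First I would apply the definition of $\mathbb{P}=\mathbb{P}_{p}$ on cylinder sets (from section \ref{sec:SP}, adapted to this continuous setting so that the measure $\mu$ plays the role of the first factor as well) to rewrite
\[
  \mathbb{E}^{(p)}\!\bigl((f_{1}\circ Z_{n})\cdot(f_{2}\circ Z_{n+1})\bigr)
  =\int_{S}\!\cdots\!\int_{S} p(x_{0},x_{1})\cdots p(x_{n},x_{n+1})\,
  f_{1}(x_{n})f_{2}(x_{n+1})\,d\mu(x_{0})\cdots d\mu(x_{n+1}).
\]
Since everything in sight is bounded and each $d\mu$ is a probability measure, Fubini applies and I may integrate the variables in any order.

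Next I would integrate out $x_{n+1}$ using~(\ref{E:STO.2.2}) to replace $\int p(x_{n},x_{n+1})f_{2}(x_{n+1})\,d\mu(x_{n+1})$ by $(Tf_{2})(x_{n})$, leaving
\[
  \int_{S}\!\cdots\!\int_{S} p(x_{0},x_{1})\cdots p(x_{n-1},x_{n})\,
  f_{1}(x_{n})(Tf_{2})(x_{n})\,d\mu(x_{0})\cdots d\mu(x_{n}).
\]
Then I would iterate: integrate out $x_{n}, x_{n-1},\ldots,x_{1}$ in turn, each step collapsing one kernel factor into an application of $T$. After $n$ such steps the integrand becomes $T^{n}(f_{1}\cdot Tf_{2})(x_{0})$, integrated against $d\mu(x_{0})$, which is exactly the right-hand side of~(\ref{E:STO.2.5}).

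The only point that needs a second's thought is the absence of a Perron--Frobenius hypothesis here: in Theorem~\ref{T:SP.0.2} the outer integral could be collapsed further using $\mu_{0}\circ T=\mu_{0}$, but in the present statement the measure $\mu$ is simply the one appearing in~(\ref{E:STO.2.1}), and no such invariance is asserted, so the identity is stated with $T^{n}$ still sitting inside the integral. Boundedness of $f_{1},f_{2}$ plus the stochasticity condition $\int p(x,y)d\mu(y)\equiv 1$ guarantees $T^{n}(f_{1}\cdot Tf_{2})\in L^{\infty}(S)$, hence everything is finite. There is no genuine obstacle here; the whole argument is a bookkeeping exercise around Fubini and the definition of $T$, and if anything requires care it is only the justification that the cylinder measure $\mathbb{P}^{(p)}$ on $\Omega_{p}$ really does produce the iterated integral above, which is the standard Kolmogorov consistency construction already invoked in section~\ref{sec:SP}.
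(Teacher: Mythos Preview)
Your proposal is correct and follows essentially the same approach as the paper: expand the expectation as an iterated integral over the cylinder measure, integrate out $x_{n+1}$ to produce $(Tf_{2})(x_{n})$, then iterate $n$ more times to collapse the remaining kernels into $T^{n}(f_{1}\cdot Tf_{2})(x_{0})$ integrated against $d\mu$. Your added remarks on Fubini, boundedness, and the absence of a Perron--Frobenius hypothesis are accurate and go slightly beyond what the paper writes out explicitly.
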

\begin{proof}
The details in the computation for (\ref{E:STO.2.5}) follow those in section
\ref{sec:SP}, but the reasoning is now reversed.  Indeed,
\begin{align*}
  &\mathbb{E}^{(p)}((f_{1}\circ Z_{n})\cdot (f_{2}\circ Z_{n+1}))  \\
  &=\int_{S}\int_{S}\cdots \int_{S}p(x_{0},x_{1})\cdots p(x_{n},x_{n+1})
    f_{1}(x_{n})f_{2}(x_{n+1})d\mu(x_{0})d\mu(x_{1})\cdots d\mu(x_{n+1}) \\
  &=\int_{S}\int_{S}\cdots \int_{S}p(x_{0},x_{1})\cdots p(x_{n-1},x_{n})
    f_{1}(x_{n})(Tf_{2})(x_{n})d\mu_{0}(x_{0})d\mu(x_{1})\cdots d\mu(x_{n}) \\
  &=\int_{S}(T^{n}(f_{1}\cdot Tf_{2}))(x)d\mu (x).  
\end{align*}
\end{proof}

\begin{definition}
\label{D:STO.1}
Let $T$ be a transition operator satisfying the conditions (\ref{E:STO.2.1}) 
and (\ref{E:STO.2.2}), and suppose there is a Perron-Frobenius measure 
$\mu_{0}$ on $S$, i.e.,
\begin{equation}
\label{E:STO.2.6}
  \mu_{0} \circ T=\mu_{0}.
\end{equation}
We say that $T$ is \textit{ergodic} if there is only one probability measure 
$\mu_{0}$ on $(S, \mathcal{B}_{S})$ which solves (\ref{E:STO.2.6}).  

If $T$ is ergodic, and $\mu_{0}$ is the (unique) Perron-Frobenius measure, 
then it follows from the Pointwise Ergodic Theorem that for all
$f \in L^{\infty}(S)$, the limit
\begin{equation}
\label{E:STO.2.7}
  \lim_{n\to \infty}T^{n}(f)=\mu_{0}(f)\Iidentity
\end{equation}
pointwise a.e. exits on $S$, where $\Iidentity$ denotes the constant 
function $1$ on $S$.
\end{definition}

\begin{corollary}
\label{C:STO.1}
Let $p$, $T$, $S$, $\mathcal{B}_{S}$, $\mu$, and $(Z_{n})$ satisfy the 
conditions of the theorem.  Further assume $T$ is ergodic with 
Perron-Frobenius measure $\mu_{0}$.  Then 
\begin{equation}
\label{E:STO.2.8}
  \lim_{n\to \infty} \mathbb{E}^{(p)}((f_{1}\circ Z_{n})\cdot 
  (f_{2}\circ Z_{n+1})) = \mu_{0}(f_{1}\cdot Tf_{2})
\end{equation}
is satisfied for all $f_{1}, f_{2} \in L^{\infty}(S)$.
\end{corollary}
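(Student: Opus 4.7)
The plan is to combine Theorem \ref{T:STO.1} with the pointwise ergodic limit recorded in equation (\ref{E:STO.2.7}) via dominated convergence. By Theorem \ref{T:STO.1} applied to the pair $(f_1,f_2)$, the covariance on the left of (\ref{E:STO.2.8}) equals
\[
  \mathbb{E}^{(p)}((f_{1}\circ Z_{n})(f_{2}\circ Z_{n+1}))
  = \int_{S} T^{n}(f_{1}\cdot Tf_{2})(x)\,d\mu(x),
\]
so the problem reduces to analyzing the iterates $T^{n}g$, where $g := f_{1}\cdot Tf_{2}$. Since $f_{2}\in L^{\infty}(S)$ and $T$ is a Markov operator (condition (\ref{E:STO.2.1}) implies $T\Iidentity=\Iidentity$ and $|Tf_{2}|\le\|f_{2}\|_{\infty}$), we get $g\in L^{\infty}(S)$, and moreover $\|T^{n}g\|_{\infty}\le\|g\|_{\infty}$ uniformly in $n$.

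Next, I would invoke the ergodicity hypothesis in the form of (\ref{E:STO.2.7}): because $T$ is ergodic with Perron-Frobenius measure $\mu_0$, the Pointwise Ergodic Theorem gives
\[
  \lim_{n\to\infty} T^{n}g(x) = \mu_{0}(g)\,\Iidentity(x) \qquad \text{a.e.\ } x\in S,
\]
where $\mu_{0}(g)=\int_S g\,d\mu_0=\mu_0(f_{1}\cdot Tf_{2})$. Combined with the uniform $L^{\infty}$ bound above and the fact that $\mu$ is a probability measure, the Dominated Convergence Theorem applies to yield
\[
  \lim_{n\to\infty}\int_{S} T^{n}g(x)\,d\mu(x)
  = \int_{S}\mu_{0}(g)\,d\mu(x)
  = \mu_{0}(g)
  = \mu_{0}(f_{1}\cdot Tf_{2}),
\]
which is exactly the desired identity (\ref{E:STO.2.8}).

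The main (mild) obstacle is reconciling the two measures in play: the expectation $\mathbb{E}^{(p)}$ and Theorem \ref{T:STO.1} produce an integral against the reference measure $\mu$, while the Pointwise Ergodic Theorem naturally supplies pointwise convergence with respect to $\mu_0$. Since the paper's formulation of (\ref{E:STO.2.7}) asserts the pointwise limit on $S$ and the integrand is bounded, dominated convergence absorbs this distinction directly; if one wanted to be more scrupulous, one would note that the uniform bound $\|T^n g\|_\infty\le\|g\|_\infty$ makes $L^1(\mu)$ convergence equivalent to a.e.\ convergence on sets of full $\mu$-measure, so the ergodic limit being $\mu_0$-a.e.\ (and the uniform sup bound) is enough to pass the limit under $\int\cdot\,d\mu$ after routine measure-theoretic care.
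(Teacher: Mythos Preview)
Your proof is correct and follows essentially the same route as the paper's: invoke Theorem~\ref{T:STO.1} to rewrite the expectation as $\int_S T^n(f_1\cdot Tf_2)\,d\mu$, then pass to the limit using (\ref{E:STO.2.7}) and the fact that $\mu$ is a probability measure. You supply more detail than the paper does---the explicit $L^\infty$ bound on $T^n g$ and the appeal to dominated convergence---whereas the paper simply cites (\ref{E:STO.2.7}) and remarks that $\mu$ is a probability measure; your final paragraph on the $\mu$-versus-$\mu_0$ null sets raises a point the paper leaves unaddressed, but as you note, the sup-norm bound and the paper's ``a.e.\ on $S$'' formulation of (\ref{E:STO.2.7}) are what both arguments rely on.
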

\begin{proof}
To verify (\ref{E:STO.2.8}), note that 
$\mathbb{E}^{(p)}((f_{1}\circ Z_{n})\cdot (f_{2}\circ Z_{n+1}))$ is already  
computed in (\ref{E:STO.2.5}) in the theorem.

Since $\mu$ is a probability measure, the conclusion (\ref{E:STO.2.8}) now 
follows from (\ref{E:STO.2.7}), i.e., form an application of the Ergodic 
Theorem.
\end{proof}

\subsection{Transition on Solenoids}
\label{sec:STO.3}
Let $(S, \mu)$ be a measure space, $\sigma : S \to S$ an endomorphism as 
specified in section \ref{sec:SP}.  Let $\Omega:=X_{\sigma}(S)$ be the 
corresponding solenoid.  Let $W:S \to [0,1]$ be a function satisfying
\begin{equation}
\label{E:STO.3}
  \sum_{y, \sigma(y)=x}W(y)=1;
\end{equation}
and let $\mathbb{P}=\mathbb{P}_{\mu, \sigma, W}$ be the corresponding path 
measure.

\begin{lemma}
\label{L:STO.2}
For the solenoid set $Z_{n}:\Omega \to S$, 
$Z_{n}(x_{0},x_{1},x_{2}, \cdots)=x_{n}$, and 
$(Tf)(x)=\sum_{y, \sigma(y)=x}W(y)f(y)$, for $x \in S$.  Suppose $T$
has a Perron-Frobenius measure $\mu_{0}$.  Then 
$(Z_{n})_{n\in \mathbb{N}_{0}}$ is stationary with transition operator $T$.
\end{lemma}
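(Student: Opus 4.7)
The plan is to verify the two parts of the conclusion separately: first that the associated transition operator of the process coincides with the given $T$, and then that the marginal law of $Z_{n}$ is the same for every $n$.

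For the transition-operator part, I would identify the transition kernel associated with the solenoid path measure: for $x,y \in S$, the allowed transition $x \to y$ requires $\sigma(y)=x$, and the probability is $p(x,y):=W(y)$. The hypothesis \eqref{E:STO.3} is exactly the normalization $\sum_{\sigma(y)=x} p(x,y) = 1$, so $p$ is a bona fide transition kernel (of the discrete-sum type, with $\sigma^{-1}(x)$ in place of an integral). The transfer operator built from this kernel via the definition in subsection \ref{sec:SP.1}(g) is exactly $(Tf)(x)=\sum_{\sigma(y)=x} W(y)f(y)$, which is the $T$ of the lemma statement. Theorem \ref{T:SP.1} then applies directly and yields $\mathbb{E}(f\circ Z_{n+1}\mid \mathcal{F}_{n}) = (Tf)\circ Z_{n}$, so $T$ is the transition operator governing $(Z_{n})$.

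For stationarity, I would show that $\mathbb{P}^{(\mu_{0})}\circ Z_{n}^{-1} = \mu_{0}$ for every $n \in \mathbb{N}_{0}$. Combined with the time-homogeneous Markov property just established, this gives shift invariance of all finite-dimensional distributions. Concretely, for $f$ bounded measurable on $S$, I would unwind the cylinder-measure definition from subsection \ref{sec:SP.2} using $\mu_{0}$ in the first slot, and fold the innermost summations into iterated applications of $T$ to arrive at
\[
  \mathbb{E}^{(\mu_{0})}(f\circ Z_{n}) = \int_{S} T^{n}f \, d\mu_{0}.
\]
The Perron–Frobenius identity $\mu_{0}\circ T = \mu_{0}$ then peels off one $T$ at a time to give $\int_{S} T^{n}f \, d\mu_{0} = \int_{S} f \, d\mu_{0}$, so $Z_{n}$ has law $\mu_{0}$ for all $n$. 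Equivalently, one can invoke Corollary \ref{C:SP.2}, whose argument is a direct specialization of Theorem \ref{T:SP.0.2}.

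The main obstacle is conceptual rather than computational: one must make sure that the cylinder formula for solenoids (which as stated uses $\mu$ in every slot) is re-interpreted with $\mu_{0}$ in the first slot, so that the Perron–Frobenius identity is available to collapse the iterated integral. Once this substitution is made, the solenoid setup fits cleanly into the general stationary framework of section \ref{sec:SP}, and the lemma follows from Theorems \ref{T:SP.0.2} and \ref{T:SP.1} with essentially no further work.
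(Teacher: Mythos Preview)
Your proposal is correct and follows essentially the same approach as the paper. The paper's proof simply carries out the explicit two-point computation $\mathbb{E}^{(\mu_{0})}((f_{1}\circ Z_{n})(f_{2}\circ Z_{n+1}))=\int_{S}T^{n}(f_{1}\cdot Tf_{2})\,d\mu_{0}=\langle f_{1},Tf_{2}\rangle_{L^{2}(\mu_{0})}$ directly in the solenoid setting (folding the nested sums over $\sigma^{-1}(x_{k})$ into iterated applications of $T$, then using Perron--Frobenius), rather than invoking Theorems~\ref{T:SP.0.2} and~\ref{T:SP.1} as you do; but the underlying calculation is identical to the one behind those theorems, so the difference is purely organizational.
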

\begin{proof}
Let $f_{1}$, $f_{2}$ be a pair of functions on $S$ satisfying the conditions
listed above.  For the $\mathbb{P}$-integration on path space 
$\Omega(=X_{\sigma}(S))$ we then have:
\begin{align*}
  &\mathbb{E}^{(\mu_{0})}((f_{1}\circ Z_{n})\cdot (f_{2}\circ Z_{n+1})) \\
  &=\int_{S}\underset{\sigma(x_{1})=x_{0}}{\sum_{x_{1}}} 
  \underset{\sigma(x_{2})=x_{1}}{\sum_{x_{2}}} \cdots 
  \underset{\sigma(x_{n+1})=x_{n}}{\sum_{x_{n+1}}}
  W(x_{1})W(x_{2})\cdots W(x_{n+1})f_{1}(x_{n})f_{2}(x_{n+1})d\mu_{0}(x_{0}) \\
  &=\int_{S}\underset{\sigma(x_{1})=x_{0}}{\sum_{x_{1}}} 
  \underset{\sigma(x_{2})=x_{1}}{\sum_{x_{2}}} \cdots 
  \underset{\sigma(x_{n})=x_{n-1}}{\sum_{x_{n}}}
  W(x_{1})W(x_{2})\cdots W(x_{n})f_{1}(x_{n})(Tf_{2})(x_{n})d\mu_{0}(x_{0}) \\
  &=\int_{S}(T^{n}(f_{1}\cdot Tf_{2}))(x_{0})d\mu_{0}(x_{0}) \\
  &=\mu_{0}(T^{n}(f_{1}\cdot Tf_{2}))=\mu_{0}(f_{1}\cdot Tf_{2}) \\
  &=\langle f_{1}, Tf_{2}\rangle_{L^{2}(\mu_{0})}.
\end{align*}

\end{proof}

\subsection{Encodings}
\label{sec:STO.4}
Let $G=(G^{0}, G^{1})$ be a graph where we write $G^{0}$ for the vertices and  
$G^{1}$ for the edges.  Let $S$ be a set.  We say that $G$ yields an encoding 
of the points in $S$ if there are mappings 
\begin{align}
  \tau^{0}&: G^{0} \to S, \quad \mbox{ onto, and } \\
  \tau^{1}&: G^{0} \to \text{Functions }(S \to S)
\end{align}
such that for every $e=(x,y)\in G^{1}$ we have 
\begin{equation}
\label{E:STO.4}
  \tau^{0}(y)=\tau^{1}(e)\tau^{0}(x).
\end{equation}

\subsubsection*{Examples}
\label{sec:STO.4.3.1}
$G=$ the binary tree, 
\begin{align}
  S&=\mathbb{N}_{0}=\{0, 1, 2, \cdots \} \\
   &=\{\sum_{k=0}^{\text{Finite}}x_{k}2^{k}| x_{k}\in \{0,1\}\}
\end{align}
If $n\in \mathbb{N}_{0}$ is given the finite word $(x_{0}x_{1}x_{2}\cdots )$ in
(4.6) is computed from the Euclidean algorithm for division with 2.  

Points in $G^{0}$ are represented by the empty word $o$, and by all finite 
words $w=(x_{0}x_{1}\cdots x_{p})$.  Set 
\begin{equation}
\label{E:STO.7}
  \tau^{0}(w)=\sum_{k=0}^{p}x_{k}2^{k}=n \in \mathbb{N}_{0}.
\end{equation}
Starting with $w=(x_{0}x_{1}\cdots x_{p}) \in G^{0}$, the three neighbors are 
$(w0), (w1)$, and $w^{*}:=(x_{0}x_{1}\cdots x_{p-1})$ truncation, see Figure 2.



Set 
\begin{equation}
\label{E:STO.8}
  \begin{cases}
    \tau^{1}(e_{0}):=n \mapsto n ;  &\text{see } (\ref{E:STO.7}); \\
    \tau^{1}(e_{1}):=n \mapsto n+2^{p+1} ;  &\text{and } \\
    \tau^{1}(e^{*}):=n \mapsto \sum_{k=0}^{p-1}x_{k}2^{k}.   
  \end{cases}
\end{equation}
Note that in this example, there is an additional pair of mappaings 
$\mathbb{N}_{0} \to \mathbb{N}_{0}$

\begin{equation}
\label{E:STO.9}
  \begin{cases}
    \sigma^{0}(n)=2n \\
    \sigma^{1}(n)=2n+1 
  \end{cases}
\end{equation}
corresponding to the encoding mappings:
\begin{equation}
\label{E:STO.10}
  \begin{cases}
    \sigma_{0}:(x_{0}x_{1}\cdots x_{p}) \mapsto 
      \underbrace{(0x_{0}x_{1}\cdots x_{p})}_{\text{one step longer}} \\
    \sigma_{1}:(x_{0}x_{1}\cdots x_{p}) \mapsto (1x_{0}x_{1}\cdots x_{p})
  \end{cases}
\end{equation}

\begin{remark}
\label{R:STO.1}
The same construction works mutatis mutandis with $N$'adic scaling rather 
than the dyadic representation of points in $\mathbb{N}_{0}$.  Moreover, in 
the representation
\begin{equation}
\label{E:STO.11}
  n=\sum_{k=0}^{p}x_{k}N^{k},
\end{equation}
the choices for $x_{k}$ may be from any complete set of residues modulo $N$, 
i.e., points in $\mathbb{N}_{0}/N\cdot \mathbb{N}_{0}$, or 
$\mathbb{Z}/N\mathbb{Z}=$ the cyclic group of order $N$. The residues 
$\{0, 1, \cdots, N-1\}$ is only one choice of many.
\end{remark}

\subsubsection*{Encoding of $\mathbb{Z}$}
\label{sec:STO.4.2}
The representation used in (\ref{E:STO.7}) above works for $\mathbb{Z}$ as 
well, but with the following modification:
\begin{equation}
\label{E:STO.12}
  \tau^{0}(x_{0}x_{1}x_{2}\cdots x_{p}):=-2^{p}+\sum_{k=0}^{p}x_{k}2^{k}.
\end{equation}
Explanation:
\begin{align*}
  \tau^{0}(\underbrace{111 \cdots 1}_{p+1 \text{ times}})
  &=-2^{p}+\sum_{k=0}^{p}x_{k}2^{k} \quad 
  \mbox{with $x_{k}=1$, $0\leq k \leq p$}\\
  &=-2^{p}+2^{p+1}-1 \\
  &=2^{p}-1.
\end{align*}
Hence, with this convention we arrive at an encoding of $\mathbb{Z}$.

\subsubsection*{Graphs vs Compactification:}
\label{sec:STO.4.2a}
In the examples, we represent points in the vertex sets $G^{0}$ on a graph 
$G$ by finite words in a specific finite alphabets.  A choice of 
compactification $\Omega$ of $G^{0}$ is the set of infinite paths $\gamma$, 
i.e., $\gamma=(x_{0}x_{1}x_{2} \cdots)$ where $x_{i}\in G^{0}$, and 
$(x_{i-1}, x_{i})\in G^{1}$ for all $i \in \mathbb{N}$.

In each of the examples we present, we build measure $\mathbb{P}$ on the 
compactifications $\Omega$ with use of Kolmogorov's extension principle. 
This is a projective limit construction which proceeds in three steps
\cite{Jor06a}:
\begin{enumerate}[(i) ]
  \item First specify $\mathbb{P}$ only on finite words, i.e., on cylinder 
    sets over $G^{0}$ 
  \item Check that the prescription of $\mathbb{P}$ on cylinders is 
    consistent. 
  \item With Kolmogorov's theorem than extend $\mathbb{P}$ to the Borel 
    $\sigma$-algebra of subsets in $\Omega$ generated by the cylinder-sets
    \cite{Kol77, Jor06a}.
\end{enumerate}

\begin{definition}
\label{D:STO.2}
In later applications, the following two cases for $\mathbb{P}$ will play a 
role: Consider the subset $\Omega_{\text{Fin}}$ in $\Omega$ consisting of 
paths $\gamma=(x_{0}x_{1}x_{2} \cdots)$ which terminate in infinite 
repetitions, i.e., $\gamma \in \Omega_{\text{Fin}} \Leftrightarrow \exists$ 
$n$ such that $x_{i}=x_{n}$ $\forall$ $i>n$.  The measure $\mathbb{P}$ is 
said to be \textit{tight} if and only if $\mathbb{P}(\Omega_{\text{Fin}})=1$.  
Alternatively, $\mathbb{P}(\Omega_{\text{Fin}})<1$. 
\end{definition}

\subsubsection*{Examples Resumed: }
\label{sec:STO.4.3}
Wavelets. We adopt the standard terminology for dyadic wavelets in 
$L^{2}(\mathbb{R})$, specifically $\varphi$ for a choice of scaling function; 
see \cite{Jor06a}.  Let $(a_{k})_{k\in \mathbb{Z}}$ represent a wavelet 
filter, i.e., satisfying the following three conditions:
\begin{equation}
\label{E:STO.13}
  \sum_{k\in\mathbb{Z}}\overline{a}_{k}a_{k+2l}=\frac{1}{2}\delta_{0,l},
\end{equation}
\begin{equation}
\label{E:STO.14}
  \sum_{k\in\mathbb{Z}}a_{k}=1, \quad \mbox{and}
\end{equation}
\begin{equation}
\label{E:STO.15}
  \varphi(x)= 2\sum_{k\in\mathbb{Z}}a_{k}\varphi(2x-k).
\end{equation}
The function $\varphi$ is in $L^{2}(\mathbb{R})$ and 
\begin{equation}
\label{E:STO.16}
  \int_{\mathbb{R}}\varphi(x)dx=1
\end{equation}
is a chosen normalization.

Let $\widehat{\varphi}$ be the $\mathbb{R}-$ Fourier transform.

The following result is from \cite{Jor06a}.  Let $\Omega:=$ the set of all 
infinite words, and view $\Omega$ as a compactification of the vertex set 
$G^{0}$ of all finite dyadic words.

\begin{lemma}
\label{L:STO.3}
For every $t \in \mathbb{R}$, there is a measure $\mathbb{P}_{t}$ on $\Omega$ 
such that
\begin{equation}
\label{E:STO.17}
  \mathbb{P}_{t}(x_{0}x_{1}\cdots x_{p})= 
  \left\vert\hat{\varphi}(t+\tau^{0}(x_{0}x_{1}\cdots x_{p}))\right\vert^{2}
\end{equation}
where $\tau^{0}: G^{0} \to \mathbb{Z}$ is the encoding of (\ref{E:STO.12}).
\end{lemma}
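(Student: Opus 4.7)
The plan is to construct $\mathbb{P}_{t}$ via Kolmogorov's extension theorem, exactly following the three-step procedure spelled out in Section~\ref{sec:STO.4.2a}. I interpret the right-hand side of (\ref{E:STO.17}) as a prescription of cylinder values: for each finite dyadic word $w = (x_{0}x_{1}\cdots x_{p}) \in G^{0}$, define
\[
  \mathbb{P}_{t}(C(x_{0}x_{1}\cdots x_{p})) :=
  \left\vert\hat{\varphi}(t+\tau^{0}(x_{0}x_{1}\cdots x_{p}))\right\vert^{2}.
\]
These numbers are non-negative by construction, so the only nontrivial points are (a) consistency of the prescription under refinement of cylinders, and (b) extension to a Borel measure on $\Omega$ via Kolmogorov.

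The heart of the argument is the consistency step. Taking the Fourier transform in the scaling identity (\ref{E:STO.15}) produces the infinite-product representation $\hat{\varphi}(\xi) = \prod_{j\geq 1} m_{0}(\xi/2^{j})$, where $m_{0}(\omega) = \sum_{k} a_{k} e^{-ik\omega}$ is the symbol of the filter $(a_{k})$; equivalently, $\hat{\varphi}(\xi) = m_{0}(\xi/2)\hat{\varphi}(\xi/2)$. The orthogonality condition (\ref{E:STO.13}) translates to the QMF identity $|m_{0}(\omega)|^{2} + |m_{0}(\omega+\pi)|^{2} = 1$, while the normalizations (\ref{E:STO.14}) and (\ref{E:STO.16}) give $m_{0}(0) = 1$ and $\hat{\varphi}(0) = 1$. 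I would then exploit the recursion for the encoding: a direct computation from (\ref{E:STO.12}) gives $\tau^{0}(x_{0}\cdots x_{p}x_{p+1}) = \tau^{0}(x_{0}\cdots x_{p}) + (2x_{p+1}-1)2^{p}$, and the alternative recursion $\tau^{0}(x_{0}(x_{1}\cdots x_{p})) = 2\tau^{0}(x_{1}\cdots x_{p}) + x_{0}$ from the prepending mappings in (\ref{E:STO.10}). Inserting the scaling identity for $\hat{\varphi}$ into the cylinder values at level $p+1$, using that $m_{0}$ is $2\pi$-periodic (so shifts by the integer values of $\tau^{0}$ are symmetries of $|m_{0}|^{2}$), and finally invoking the QMF identity to collapse pairs of sibling terms, should recover the cylinder value at level $p$.

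Once cylinder consistency is secured, Kolmogorov's theorem (step (iii) of the construction in Section~\ref{sec:STO.4.2a}) extends $\mathbb{P}_{t}$ to the Borel $\sigma$-algebra on $\Omega$ generated by the cylinders; a standard argument using the orthonormality-of-translates identity $\sum_{n\in \mathbb{Z}}|\hat{\varphi}(t+n)|^{2} = 1$ (which itself follows by iterating the scaling relation and applying QMF, or is built in by hypothesis on $\varphi$) controls the total mass.

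The main obstacle is precisely in the consistency bookkeeping: one must reconcile the dyadic $\pm 2^{p}$ shifts dictated by the $\mathbb{Z}$-encoding $\tau^{0}$ with the $2\pi$-periodic symmetries of $m_{0}$, and set up the induction on word length $p$ so that the QMF collapse happens at the right scale. This is the step where the convention on the Fourier transform and the normalization of $\varphi$ must be pinned down carefully; the rest of the argument is structural and essentially diagrammatic once the recursion $\tau^{0}(x_{0}w) = 2\tau^{0}(w) + x_{0}$ is used to bring the scaling identity into the game.
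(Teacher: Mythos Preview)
The paper does not supply its own proof of this lemma; it is stated with attribution to \cite{Jor06a}, so there is no argument in the text to compare yours against directly. Your overall plan---Kolmogorov extension along the three-step recipe described just before the lemma---is the right framework, and the ingredients you list (the scaling relation $\hat\varphi(\xi)=m_{0}(\xi/2)\hat\varphi(\xi/2)$, the QMF identity, the encoding recursions) are the correct ones.

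There is, however, a genuine gap in the consistency step. Taking (\ref{E:STO.17}) as the \emph{definition} of the cylinder values and trying to verify additivity under refinement does not work. With the $\mathbb{Z}$-encoding you correctly compute $\tau^{0}(w\,x_{p+1})=\tau^{0}(w)+(2x_{p+1}-1)2^{p}$, so the needed identity is
\[
|\hat\varphi(t+n)|^{2}\;=\;|\hat\varphi(t+n-2^{p})|^{2}+|\hat\varphi(t+n+2^{p})|^{2},
\]
and this is simply false: for Haar ($|\hat\varphi(\xi)|^{2}=\mathrm{sinc}^{2}(\xi)$) at $t=\tfrac12$, $p=0$, $n=0$, the left side is $4/\pi^{2}$ while the right side is $4/\pi^{2}+4/(9\pi^{2})$. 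If you push the scaling relation into the two sibling terms, the $m_{0}$-factors do match by periodicity, but the remaining $\hat\varphi$-factors sit at arguments that differ by $2^{p}$, not by a half-period, so the QMF identity never engages to collapse the pair. The same obstruction appears if you try the prepend recursion $\tau^{0}(x_{0}w)=2\tau^{0}(w)+x_{0}$.

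The construction in \cite{Jor06a} proceeds in the opposite order: one first builds $\mathbb{P}_{t}$ from the filter $W=|m_{0}|^{2}$ via the product formula along branches of $z\mapsto z^{2}$ (as in Section~\ref{sec:SP.2} and Lemma~\ref{L:STO.4}(b)), where Kolmogorov consistency is an immediate consequence of QMF; the identification (\ref{E:STO.17}) with $|\hat\varphi|^{2}$ is then \emph{derived} from the infinite product $|\hat\varphi|^{2}=\prod_{j\ge1}W(\cdot/2^{j})$ and the periodicity of $W$, after matching the branch labels with the $\mathbb{Z}$-encoding $\tau^{0}$. So your pieces are all present; what needs fixing is the order---define via $W$, then identify with $|\hat\varphi|^{2}$---rather than trying to read off consistency from (\ref{E:STO.17}) directly.
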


\begin{lemma}
\label{L:STO.4}
(See \cite{Jor06a}.) 
  \begin{enumerate}[(a) ]
    \item Consider the process $(Z_{n})$ in $(\Omega, \mathbb{P}_{t})$ from 
      (\ref{E:STO.17}) with 
      \[
        Z_{n}\underbrace{(x_{0}x_{1}x_{2}\cdots)}_{\text{infinite word}}:=
        x_{n} \in \{0,1\}.
      \]  
      Then there is a transfer operator $T$ such that 
      the process is $T$-stationary. 
    \item Let 
      \begin{equation}
      \label{E:STO.18} 
        W(e^{it}):=\widetilde{W}(t)=
        \left\vert \sum_{k \in \mathbb{Z}}a_{k}e^{ikt}\right\vert^{2},
      \end{equation}
      where functions $W$ on $\mathbb{T}$ are identified with $2\pi$-periodic 
      functions $\widetilde{W}$ on $\mathbb{R}$, and where $(a_{k})$ is some 
      wavelet filter as in (\ref{E:STO.13})-(\ref{E:STO.15}).
      The transfer operator $T$ is then given by
      \[
        (T_{W}f)(t)=W(\frac{t}{2})f(\frac{t}{2})
        +W(\frac{t}{2}+\pi)f(\frac{t}{2}+\pi).
      \]
      We say that $W$ has scaling-degree $2$. 
      
      Following (\ref{E:STO.9}), let a transition from $n$ to $n+1$ be given 
      by a choice of $x \in \{0,1\}$.  

      Then
      \begin{equation}
      \label{E:STO.19} 
        \mathbb{E}_{t}(Z_{n}Z_{n+1})=\widetilde{W}(t+x\pi).
      \end{equation}
  \end{enumerate}
\end{lemma}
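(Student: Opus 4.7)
The proof rests on the Fourier-side reformulation of the wavelet axioms (\ref{E:STO.13})--(\ref{E:STO.16}). Writing $m_0(t) := \sum_{k \in \mathbb{Z}} a_k e^{-ikt}$, so that $W(t) = |m_0(t)|^2$ as in (\ref{E:STO.18}), the refinement equation (\ref{E:STO.15}) becomes, after Fourier transform,
\[
  \hat{\varphi}(\xi) = m_0(\xi/2)\,\hat{\varphi}(\xi/2), \qquad \mbox{hence} \qquad |\hat{\varphi}(\xi)|^2 = W(\xi/2)\,|\hat{\varphi}(\xi/2)|^2,
\]
the QMF relation (\ref{E:STO.13}) becomes the low-pass identity $W(t/2) + W(t/2 + \pi) = 1$, and (\ref{E:STO.14})--(\ref{E:STO.16}) provide the normalization $|\hat{\varphi}(0)|^2 = 1$.

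For part (a), the plan is first to verify that the prescription (\ref{E:STO.17}) defines a consistent family of measures on cylinder sets, and then invoke Kolmogorov's extension theorem to obtain $\mathbb{P}_t$ on $\Omega$. The consistency step reduces to checking
\[
  \sum_{x_{p+1} \in \{0,1\}} \bigl|\hat{\varphi}(t + \tau^0(x_0 \cdots x_p x_{p+1}))\bigr|^2 = \bigl|\hat{\varphi}(t + \tau^0(x_0 \cdots x_p))\bigr|^2,
\]
which follows from the two displayed identities above, once one tracks how appending a letter under the encoding (\ref{E:STO.12}) corresponds, on the Fourier-dual side, to choosing one of the two preimages $\xi/2$ or $\xi/2 + \pi$ of $\xi$ under the doubling map. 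With $\mathbb{P}_t$ in hand, the transition probabilities are read off as ratios of consecutive cylinder masses, and Theorem \ref{T:SP.1} then delivers the transfer operator $T$ governing $(Z_n)$.

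For part (b), the formula for $T_W$ is a direct computation: the two preimages of $t$ under the doubling map $\sigma(t) = 2t$ on $\mathbb{T}$ are $t/2$ and $t/2 + \pi$, with branch weights $W(t/2)$ and $W(t/2 + \pi)$. The identity $T_W \Iidentity = \Iidentity$ is precisely the low-pass relation, so $T_W$ is stochastic in the sense of (\ref{E:SP.8}) with scaling degree $2$. Finally, (\ref{E:STO.19}) follows by inserting the one-step cylinder mass associated with the transition-choice $x \in \{0,1\}$, which by the scaling identity collapses to $\widetilde{W}(t + x\pi)$, into the expectation defined by $\mathbb{P}_t$.

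The main obstacle is the bookkeeping that identifies the combinatorial operation of extending a finite binary word under $\tau^0$ with the analytic operation of choosing a preimage branch under $t \mapsto 2t$ on $\mathbb{T}$; once this dictionary is in place, the result is essentially a restatement of the two-scale identity for $\hat{\varphi}$ together with the QMF condition. The detailed verification of the correspondence appears in \cite{Jor06a}.
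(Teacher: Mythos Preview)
The paper does not supply its own proof of this lemma: it is stated with the parenthetical ``(See \cite{Jor06a}.)'' and nothing further, so there is no in-paper argument to compare against. Your sketch is the standard route to this result---rewriting the refinement equation on the Fourier side as $|\hat\varphi(\xi)|^{2}=W(\xi/2)\,|\hat\varphi(\xi/2)|^{2}$, reading (\ref{E:STO.13}) as the QMF identity $W(t/2)+W(t/2+\pi)=1$, checking Kolmogorov consistency of the cylinder masses (\ref{E:STO.17}), and then invoking Theorem~\ref{T:SP.1}---and this is exactly the line of argument developed in the cited reference. Since you too ultimately defer the bookkeeping of the $\tau^{0}$--branch correspondence to \cite{Jor06a}, your proposal and the paper are in agreement: both point to the same source, with your version giving a correct expanded outline of what that source does.
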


\begin{proposition}
\label{P:STO.1}
Let $\varphi \in L^{2}(\mathbb{R})$ satisfying (\ref{E:STO.15}), and suppose 
$\|\varphi\|_{2}\leq 1$. Let $\Omega$ be the compactification derived from 
the encoding $\tau^{0}$ of $\mathbb{Z}$ in (\ref{E:STO.12}) and let 
$t \in (-\pi, \pi]$.  Let $\mathbb{P}_{t}$ be the measure on $\Omega$ from 
(\ref{E:STO.17}).

\subsubsection*{Part I}
Then the following affirmations are equivalent:
\begin{enumerate}[(a) ]
  \item The translates $\{\varphi(\cdot -k)|k\in \mathbb{Z}\}$ form an 
    orthonormal family in $L^{2}(\mathbb{R})$.
  \item The measures $\mathbb{P}_{t}$ are tight measures on $\Omega$ for all 
    $t$.
  \item $\sum_{n \in \mathbb{Z}}\left\vert 
    \widehat{\varphi}(t+n)\right\vert^{2}= 1$ for all $t\in \mathbb{R}$.  

\end{enumerate}
\subsubsection*{Part II}
If the measures $\mathbb{P}_{t}$ are not tight, then the translates 
$\{\varphi(\cdot -k)\}_{k\in \mathbb{Z}}$ still form a Parseval frame for 
the closed subspace $V(\varphi)$ they span, i.e., we have the identity
\[
  \sum_{k\in \mathbb{Z}}\left\vert \int_{\mathbb{R}} 
  \overline{\varphi}(x-k)f(x)dx\right\vert^{2} 
  =\int_{\mathbb{R}}|f(x)|^{2}dx \quad \mbox{for all $f \in V(\varphi)$}.
\]
\end{proposition}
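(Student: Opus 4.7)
The plan is to split the argument into the three bi-implications of Part I and the auxiliary statement in Part II, attacking each with a different tool. Throughout I will write $G(\xi) := \sum_{n \in \mathbb{Z}} |\widehat{\varphi}(\xi + n)|^2$ for the one-periodic bracket function.

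\textbf{Step 1: (a)$\iff$(c) via Plancherel and periodization.} I will compute the Gram matrix of the integer translates directly. By Parseval,
\[
  \langle \varphi(\cdot - k), \varphi \rangle_{L^2(\mathbb{R})}
  = \int_{\mathbb{R}} |\widehat{\varphi}(\xi)|^2 e^{2\pi i k \xi}\, d\xi
  = \int_0^1 e^{2\pi i k \xi}\, G(\xi)\, d\xi.
\]
Uniqueness of Fourier coefficients on $[0,1]$ then identifies orthonormality of the translates with the condition $G \equiv 1$.

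\textbf{Step 2: (b)$\iff$(c) via the encoding $\tau^0$.} By Lemma \ref{L:STO.3}, the cylinder corresponding to a finite word $w = (x_0 x_1 \cdots x_p)$ carries $\mathbb{P}_t$-mass $|\widehat{\varphi}(t + \tau^0(w))|^2$, and by (\ref{E:STO.12}) the map $\tau^0$ is a bijection between finite words and $\mathbb{Z}$. Each $\gamma \in \Omega_{\text{Fin}}$ is specified by its minimal pre-terminal prefix together with its eventual repeated letter, so $\Omega_{\text{Fin}}$ is a countable disjoint union of atoms in bijection with $\mathbb{Z}$. Summing the atomic weights gives $\mathbb{P}_t(\Omega_{\text{Fin}}) = G(t)$, and tightness is therefore equivalent to $G(t) = 1$ for every $t$, which is (c).

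\textbf{Step 3: Part II via the scaling equation.} Let $V(\varphi)$ denote the shift-invariant closed span of the translates. I will show that the analysis map $f \mapsto (\langle f, \varphi(\cdot - k)\rangle)_k$ is an isometry $V(\varphi) \to \ell^2(\mathbb{Z})$ even when (c) fails. On the Fourier side the refinement equation (\ref{E:STO.15}) reads $\widehat{\varphi}(2\xi) = m_0(\xi)\widehat{\varphi}(\xi)$ with $m_0$ built from $(a_k)$, and (\ref{E:STO.13}) gives the QMF identity $|m_0(\xi)|^2 + |m_0(\xi + \tfrac12)|^2 = 1$. Periodizing yields the transfer identity $G(2\xi) = |m_0(\xi)|^2 G(\xi) + |m_0(\xi + \tfrac12)|^2 G(\xi + \tfrac12)$, and combined with $G \le 1$ (from $\|\varphi\|_2 \le 1$) this forces $G$ to take values in $\{0,1\}$ almost everywhere. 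Parameterizing $f \in V(\varphi)$ as $\widehat{f}(\xi) = m_f(\xi)\widehat{\varphi}(\xi)$ with $m_f$ one-periodic and rerunning the computation of Step 1 then gives
\[
  \sum_{k \in \mathbb{Z}} |\langle f, \varphi(\cdot - k)\rangle|^2
  = \int_0^1 |m_f(\xi)|^2 G(\xi)^2\, d\xi
  = \int_0^1 |m_f(\xi)|^2 G(\xi)\, d\xi
  = \|f\|_{L^2(\mathbb{R})}^2,
\]
where the middle equality uses $G^2 = G$ a.e.

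\textbf{Main obstacle.} The delicate point is Step 3: deducing $G \in \{0,1\}$ a.e.\ from the refinement equation and $\|\varphi\|_2 \le 1$ alone is not automatic and typically hinges on a Cohen-type nondegeneracy condition or a Perron--Frobenius argument applied to the transfer operator $T_W$ of Lemma \ref{L:STO.4}. Step 2 also needs a small verification that the paths in $\Omega_{\text{Fin}}$ really are atoms of the Kolmogorov extension, but that is routine given the product form of the cylinder weights in (\ref{E:STO.17}).
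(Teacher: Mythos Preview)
The paper does not actually prove this proposition: its entire proof environment reads ``See \cite{Jor06a}.'' So there is no in-paper argument to compare your proposal against, and what you have written is already far more than the authors supply.

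On the merits of your sketch: Step~1 is the standard bracket-function computation and is correct. Step~2 is the right idea---the encoding $\tau^{0}$ of (\ref{E:STO.12}) sets up a bijection between $\mathbb{Z}$ and the finite words, and the cylinder masses (\ref{E:STO.17}) sum to $G(t)$---though you should be slightly more careful about exactly which eventually-constant infinite words correspond to which integers, since the map ``minimal pre-terminal prefix plus repeated letter'' is not literally $\tau^{0}$; the Kolmogorov consistency built into Lemma~\ref{L:STO.3} is what makes this go through.

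Your self-diagnosed obstacle in Step~3 is genuine and is exactly why the authors defer to \cite{Jor06a}. From the hypotheses literally stated in the proposition (only (\ref{E:STO.15}) and $\|\varphi\|_{2}\le 1$) one cannot conclude $G\in\{0,1\}$ a.e.: take the Haar filter and $\varphi=c\,\chi_{[0,1]}$ with $0<c<1$, which satisfies (\ref{E:STO.15}) and $\|\varphi\|_{2}=c\le 1$, yet $G\equiv c^{2}\notin\{0,1\}$ and the Parseval identity of Part~II fails by a factor $c^{2}$. What is implicitly being used is the full package (\ref{E:STO.13})--(\ref{E:STO.16}), in particular the normalizations $\sum_{k}a_{k}=1$ and $\int\varphi=1$, together with the Lawton/Cohen--type spectral analysis of the transfer operator $T_{W}$ that forces $G$ to be an indicator function. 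That argument lives in the cited reference, and you are right that it is a Perron--Frobenius result rather than a soft consequence of the transfer identity alone.
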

\begin{proof}
  See \cite{Jor06a}.
\end{proof}

\begin{definition}
\label{D:STO.3}
Functions $W$ on $(-\pi, \pi]$ arising as in (\ref{E:STO.18}) for a system 
of wavelet coefficients $(a_{k})_{k\in \mathbb{Z}}$ (\ref{E:STO.15}), are 
called \textit{wavelet filters}.  A wavlet filter $W$ is said to be 
\textit{low-pass} if $\mu_{0}:=\delta_{0}$, i.e., the Dirac measure at 
$\theta=0$, is a Perron-Frobenius measure for $T_{W}$.

In general, if $W$ is a Lipschitz function, it is known that $T_{W}$ has a 
Perron-Frobenius measure \cite{BrJo02}.
\end{definition}

\begin{example}
\label{Ex:STO.1}
\cite{DuJo06c} Set
\begin{equation}
\label{E:STO.20} 
  W_{F}(z):=\frac{1}{6}|1+z^{2}| \quad \mbox{for $z=e^{i\theta}$}.
\end{equation}
Then $W_{F}$ is a wavelet-filter under scaling by $3$, but it is \textit{not} 
a low-pass filter.

Indeed, the following scaling law holds for $W_{F}$:
\[
  \sum_{w^{3}=z}W_{F}(w)=1, \quad \forall z=e^{i\theta} \in \mathbb{T}^{1}.
\]
We say that $W_{F}$ has scaling degree $3$.

It is proved in \cite{DuJo06c} that $W_{F}$ induces a wavelet representation 
on an $L^{2}$-space built from the middle-third-Canter construction, 
``Cantor-dust" $CD_{3}$ in $\mathbb{R}$ with Hausdorff measure 
$\mathcal{H}^{\alpha}$, $\alpha=\frac{\ln 2}{\ln 3}$, i.e., on $L^{2}($Cantor 
dust, $\mathcal{H}^{\alpha})$.

\subsubsection*{Cantor Dust $CD_{3}$}
The points $x\in CD_{3}\subset \mathbb{R}$ are encoded by 
\[
  x=a_{-k}3^{k}+a_{-k+1}3^{k-1}+\cdots +a_{0}+
  \sum_{i=0}^{\infty}\frac{a_{i}}{3^{i}}
\] 
where $k$ varies in $\mathbb{N}_{0}$, and where $a_{j}\in \{0,1,2\}$ for
$j \in \mathbb{Z}$ such that $-k \leq j$; but where $a_{j}$ attains the value 
$1$ only for at most a finite number of places.

The Perron-Frobenius measure $\mu_{0}$ for $T_{W_{F}}$ is singular with 
support $(\mu_{0})=\mathbb{T}$.
\end{example}

\section{Reprocity Rule for the Spectrum}
\label{sec:2}
In the previous section we saw that a wide class of processes are governed 
by a transfer operator $T$.  If the process in question takes places on a 
graph $G=(G^{0}, G^{1})$ with conductance $c$, then harmonic analysis on 
$G$ is phrased in terms of a Laplace operator $\Delta_{c}$ as follows:
\[
  (\Delta_{c}f)(x)=\sum_{y \sim x}c(x,y)(f(x)-f(y)), \quad 
  \mbox{for $x \in G^{0}$.}
\]

\begin{lemma}
\label{L:2.1.1}
Let $(G,c)$ and $\Delta_{c}$ be as above.  Set $p(x,y)=\frac{c(x,y)}{c(x)}$ 
for $(x,y)\in G^{1}$ and let 
\[
  (T_{c}f)(x)=\sum_{y \sim x}p(x,y)f(y),
\]
then
\[
  (\Delta_{c}f)(x)=c(x)\{f(x)-(T_{c}f)(x)\}.
\]
And conversely,
\[
  (T_{c}f)(x)=f(x)-\frac{1}{c(x)}(\Delta_{c}f)(x).
\]
\end{lemma}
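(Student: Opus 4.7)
The plan is to prove both identities by direct algebraic manipulation of the definition of the combinatorial Laplacian, using the elementary observation that the total conductance at a vertex factors out of the constant term. The two identities are mutually equivalent, so once one is established the other follows from a trivial rearrangement.

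First I would start from the definition
\[
(\Delta_{c}f)(x) = \sum_{y \sim x} c(x,y)\bigl(f(x) - f(y)\bigr)
\]
and split the right-hand side into two sums,
\[
(\Delta_{c}f)(x) = f(x)\sum_{y \sim x} c(x,y) \;-\; \sum_{y \sim x} c(x,y)\,f(y).
\]
Next, using the standing convention that $c(x) := \sum_{y \sim x} c(x,y)$ is the total conductance at the vertex $x$ (which is finite by the finite-neighborhoods assumption in Section \ref{sec:G}), the first term is $c(x)f(x)$. In the second term I factor out $c(x)$ and recognize the remaining expression as the transition operator, i.e.,
\[
\sum_{y \sim x} c(x,y)\,f(y) = c(x)\sum_{y \sim x} \frac{c(x,y)}{c(x)}\,f(y) = c(x)\,(T_{c}f)(x),
\]
where I use $p(x,y) = c(x,y)/c(x)$. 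Combining these yields
\[
(\Delta_{c}f)(x) = c(x)\bigl\{f(x) - (T_{c}f)(x)\bigr\},
\]
which is the first identity.

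For the converse, since $c(x) > 0$ at every vertex (positivity of the conductance together with connectedness and finite neighborhoods guarantees $c(x) \in (0,\infty)$), I divide through by $c(x)$ and solve for $(T_{c}f)(x)$, obtaining
\[
(T_{c}f)(x) = f(x) - \frac{1}{c(x)}(\Delta_{c}f)(x).
\]
There is no real obstacle here; the only subtle point is making sure the relation $c(x) = \sum_{y \sim x} c(x,y)$ is invoked explicitly, and that the finiteness of $Nbh(x)$ (assumption (ii) of Section \ref{sec:G}) is what legitimizes all the finite sums and the reorganization of terms.
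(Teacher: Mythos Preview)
Your proof is correct; it is exactly the straightforward computation the paper has in mind, since the paper's own ``proof'' is simply \emph{Left to the reader}. There is nothing to add.
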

\begin{proof}
Left to the reader.  
\end{proof}

Because of reference to harmonic analysis, we present 
the results in this section in terms of $\Delta_{c}$, but the lemma makes a 
translation between $\Delta_{c}$ and $T_{c}$ immediate: For example, a 
function $f$ on $G^{0}$ satisfies $\Delta_{c}f=0$ if and only if $T_{c}f=f$. 
Solution $f$ to either one of these equations are called \textit{harmonic}.

\begin{definition}
\label{D:2.1}
Let $\mathcal{H}$ be a Hilbert space, and $\mathcal{D}$ a dense linear 
subspace.  An operator $\Delta$ defined on $\mathcal{D}$ is said to be 
\textit{formally selfadjoint} if and only if 
\[
  \langle \Delta u,v \rangle=\langle u, \Delta v \rangle
\]
holds for all $u,v \in \mathcal{D}$.
\end{definition}

A further advantage of $\Delta_{c}$ over $T_{c}$ is that $\Delta_{c}$ is 
formally selftadjoint, (while $T_{c}$ is not!).

When we say that $\Delta_{c}$ is formally selfadjoint, this applies to 
either one of the two Hilbert spaces $l^{2}(G^{0})$, and $\mathcal{H}_{E}:=$ 
the energy Hilbert space.

In the case of $\mathcal{H}_{E}$, we take for $\mathcal{D}$ the linear 
span of the family $\{v_{x}|x\in G^{0}\}\subset \mathcal{H}_{E}$; see Lemma 
\ref{L:2.1} and \ref{L:2.2}.

We continue the setup from the previous section: $G=(G^{0}, G^{1})$ a fixed 
graph with vertices $G^{0}$ and edges $G^{1}$.  Let 
$c : G^{1} \to \mathbb{R}_{+}$ be a fixed conductance function.  Let 
$\Delta = \Delta_{c}$ be the Laplace operator. 
Fix an origin $o$ in $G^{0}$, and let $\{v_{x}\}_{x \in G^{0} \setminus (0)}$ 
be the system of dipoles.

\begin{lemma}
\label{L:2.1}
\cite{JoPe08} (Reproducing Kernel) The system 
$\{v_{x}\}_{x \in G^{0} \setminus (o)}$ forms a reproducing kernel in the 
sense:
\begin{equation}
\label{E:2.1}
\langle v_{x}, f\rangle_{E} = f(x)-f(o) \quad \mbox{ for all $f \in 
\mathcal{H}_{E}$,}
\end{equation}
where $\mathcal{H}_{E}$ is the energy Hilbert space.
\end{lemma}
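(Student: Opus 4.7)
The assertion is essentially a Riesz representation statement in the Hilbert space $\mathcal{H}_E$, so the plan is to exhibit the functional $L_x : f \mapsto f(x)-f(o)$, verify it is a well-defined bounded linear functional on $\mathcal{H}_E$, and then invoke Riesz.

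First I would check that $L_x$ descends to $\mathcal{H}_E$, i.e., that it is insensitive to adding a constant to $f$: this is immediate since $(f+k)(x)-(f+k)(o)=f(x)-f(o)$, so $L_x$ is well-defined on equivalence classes modulo constants. Linearity is obvious from the definition.

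The substantive step is boundedness. By the connectedness assumption (iv) in Section~\ref{sec:G}, for each $x\in G^0$ there is a finite path $o=x_0\sim x_1\sim\cdots\sim x_n=x$ joining $o$ to $x$. Writing the telescoping sum
\[
  f(x)-f(o)=\sum_{i=0}^{n-1}\bigl(f(x_{i+1})-f(x_i)\bigr)
  =\sum_{i=0}^{n-1}\frac{1}{\sqrt{c_{x_i,x_{i+1}}}}\cdot\sqrt{c_{x_i,x_{i+1}}}\bigl(f(x_{i+1})-f(x_i)\bigr),
\]
Cauchy--Schwarz gives
\[
  |f(x)-f(o)|^2\le\Bigl(\sum_{i=0}^{n-1}\tfrac{1}{c_{x_i,x_{i+1}}}\Bigr)\Bigl(\sum_{i=0}^{n-1}c_{x_i,x_{i+1}}|f(x_{i+1})-f(x_i)|^2\Bigr)\le C_x\cdot 2\|f\|_E^{\,2},
\]
where $C_x$ depends only on the chosen path from $o$ to $x$. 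Thus $L_x$ extends to a bounded linear functional on $\mathcal{H}_E$ with $\|L_x\|\le\sqrt{2C_x}$.

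By the Riesz representation theorem applied to the Hilbert space $\mathcal{H}_E$, there exists a unique $v_x\in\mathcal{H}_E$ with
\[
  \langle v_x,f\rangle_E = L_x(f)=f(x)-f(o)\qquad\text{for all }f\in\mathcal{H}_E,
\]
which is exactly the reproducing kernel identity \eqref{E:2.1}; this is precisely the element singled out in the definition of dipoles, and the normalization $v_x(o)=0$ picks the canonical representative in $\mathcal{H}_E$. The only subtle point in the argument is the boundedness estimate above: it is routine given connectedness and the explicit form \eqref{E:1.1} of $\|\cdot\|_E$, but it depends on choosing a path from $o$ to $x$ and hence the bound is not uniform in $x$; however, pointwise boundedness for each fixed $x$ is all that is needed to invoke Riesz and produce $v_x$.
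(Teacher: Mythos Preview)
Your proof is correct and follows essentially the same route as the paper: both arguments use connectedness to choose a finite path from $o$ to $x$, write $f(x)-f(o)$ as a telescoping sum along the path, apply Cauchy--Schwarz to bound the functional by a path-dependent constant times $\|f\|_E$, and then invoke Riesz. Your version is somewhat more detailed (you explicitly check well-definedness on equivalence classes modulo constants), but the core idea and the key estimate are identical to the paper's.
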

\begin{proof}
The existence of $\{v_{x}\}$ is established with an application of Riesz's 
lemma: If $x \in G^{0}$, there is a path 
$\gamma(x)=x_{0} \to x_{1} \to \cdots \to x_{n}$, $e_{i}=(x_{i}$ $x_{i+1}) 
\in G^{1}$, (generally not unique) such that $x_{0}=0$ and $x_{n}=x$.

By Cauchy-Schwarz, we get
\begin{equation}
\label{E:2.2}
|f(x)-f(o)|^{2} \leq \sum_{i} \frac{1}{c(e_{i})}\|f\|_{E}^{2}.
\end{equation}
Riesz's lemma applied to $\mathcal{H}_{E}$, then yields  
$\exists v_{x} \in \mathcal{H}_{E}$ such that (\ref{E:2.1}) is satisfied.

We claim that $v_{x}$ satisfies the dipole equation
\begin{equation}
\label{E:2.3}
\Delta v_{x}= \delta_{x} - \delta_{o}, \quad  x \in G^{0} \setminus (o).
\end{equation}

This implies (\ref{E:2.3}), and if $\Delta h = 0$, then $w_{x}:=v_{x}+h$ solves
(\ref{E:2.3}) as well; and vice versa.
\end{proof}

\begin{lemma}
\label{L:2.2}
\cite{JoPe08} Let 
$\mathcal{D}_{0}:=\text{span}_{\mathbb{C}}\{\delta_{x}\}_{x \in G^{0}}$, and 
$\mathcal{D}_{E}:=\text{span}_{\mathbb{C}}\{v_{x}\}_{x \in G^{0} 
\setminus (o)}$. 
\end{lemma}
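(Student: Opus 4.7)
The lemma as displayed introduces two pieces of notation: $\mathcal{D}_{0}$ for the complex linear span of the point masses $\{\delta_{x}\}_{x \in G^{0}}$, and $\mathcal{D}_{E}$ for the complex linear span of the dipoles $\{v_{x}\}_{x \in G^{0} \setminus (o)}$. Since the statement, read literally, contains no quantified assertion, no equation, and no inequality, the only thing a ``proof'' can verify is that the two expressions on the right are legitimate subspaces of the ambient Hilbert spaces $l^{2}(G^{0})$ and $\mathcal{H}_{E}$, respectively. The plan is therefore to confirm this membership in each case and stop.

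For $\mathcal{D}_{0}$ this is immediate from the definition of $\delta_{x}$ given just before Lemma \ref{L:1.3}: for every $x \in G^{0}$, $\|\delta_{x}\|_{2}^{2}=1$, so $\delta_{x} \in l^{2}(G^{0})$, and any finite $\mathbb{C}$-linear combination of such vectors again lies in $l^{2}(G^{0})$. For $\mathcal{D}_{E}$ I would invoke Lemma \ref{L:2.1}: for each $x \in G^{0} \setminus (o)$ the functional $f \mapsto f(x)-f(o)$ on $\mathcal{H}_{E}$ is bounded via the Cauchy-Schwarz estimate (\ref{E:2.2}), so Riesz's lemma produces a unique $v_{x} \in \mathcal{H}_{E}$ realizing it. Hence each $v_{x}$ is an honest element of $\mathcal{H}_{E}$, and the finite $\mathbb{C}$-linear span $\mathcal{D}_{E}$ is a well-defined subspace.

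There is no further obstacle, because there is no further claim: the only subtle point one might worry about---whether the generators can be taken to be linearly independent, so that the spans are in some sense ``free''---is not asserted by the lemma and need not be addressed. Any additional content one might expect in this context (for instance, that $\mathcal{D}_{0}$ is dense in $l^{2}(G^{0})$, that $\mathcal{D}_{E}$ is dense in $\mathcal{H}_{E}$, that $\Delta_{c}$ preserves these subspaces, or that $\Delta_{c}$ is formally selfadjoint on them in the sense of Definition \ref{D:2.1}) would require separate arguments and would belong to a later statement; none of these are part of the lemma as worded, so I would not attempt to establish them here.
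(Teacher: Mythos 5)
You have read the lemma environment literally, and on that reading your observations are fine: each $\delta_{x}$ lies in $l^{2}(G^{0})$, each $v_{x}$ lies in $\mathcal{H}_{E}$ by Riesz's lemma via the estimate (\ref{E:2.2}), and the two finite spans are well-defined subspaces. But the lemma as printed is truncated: its closing delimiter comes too early, and the paragraph immediately after it (``Then $\Delta$ yields a densely defined hermitian operator\dots'') together with the displayed identities (\ref{E:2.5}) and (\ref{E:2.6}) is the actual assertion of Lemma \ref{L:2.2} --- the proof environment that follows in the paper proves exactly those identities, not the well-definedness of the spans. So the content you explicitly set aside as ``belonging to a later statement'' (density of $\mathcal{D}_{0}$ in $l^{2}(G^{0})$ and of $\mathcal{D}_{E}$ in $\mathcal{H}_{E}$, and formal selfadjointness of $\Delta$ on each domain) is precisely what needs to be established here, and your proposal does not address it. That is a genuine gap rather than a difference of route.

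Concretely, the missing step is the identity $\langle v_{x}, \Delta v_{y}\rangle_{E}=\langle \Delta v_{x}, v_{y}\rangle_{E}$ for all $x,y\in G^{0}\setminus(o)$, which by sesquilinearity yields (\ref{E:2.6}) on all of $\mathcal{D}_{E}$. The computation is short but is the heart of the matter: using the dipole equation (\ref{E:2.3}), $\Delta v_{y}=\delta_{y}-\delta_{o}$, and then the reproducing property (\ref{E:2.1}), one finds
$\langle v_{x}, \Delta v_{y}\rangle_{E}=(\delta_{y}-\delta_{o})(x)-(\delta_{y}-\delta_{o})(o)=\delta_{x}(y)+1$,
an expression symmetric under interchange of $x$ and $y$, so the same evaluation of $\langle \Delta v_{x}, v_{y}\rangle_{E}$ gives the identical value. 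This formula is not incidental: it is reused verbatim in the proof of Theorem \ref{T:2.5} (where $\langle u,\Delta u\rangle_{E}=\sum_{x}|\xi_{x}|^{2}+|\sum_{x}\xi_{x}|^{2}$ is derived from it) and again in Lemma \ref{L:4.3}, so a proof of Lemma \ref{L:2.2} that stops at ``the spans are legitimate subspaces'' leaves those later arguments without their key input. The remaining claims are easier but should still be recorded: (\ref{E:2.5}) is immediate from the edge symmetry $c_{x,y}=c_{y,x}$, and the density statements follow since $\{\delta_{x}\}$ is an ONB of $l^{2}(G^{0})$ while $\mathcal{D}_{E}$ is dense in $\mathcal{H}_{E}$ by the Riesz construction of the $v_{x}$.
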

By ``span" we mean finite complex linear combinations, so we consider all 
finite summations
\begin{equation}
\label{E:2.4}
\mathcal{D}_{0} = \{\sum_{x}a_{x}\delta_{x}\}, \quad \mbox{ and } 
\quad \mathcal{D}_{E}=\{\sum_{x}b_{x}v_{x}\},
\end{equation}
where $\{a_{x}\}$ and $\{b_{x}\}$ denote finite systems of scalars, $a_{x}$, 
$b_{x} \in \mathbb{C}$.

Then $\Delta$ yields a density defined hermitian (i.e., formally selfadjoint) 
operator in each of the Hilbert spaces $l^{2}(G^{0})$ and $\mathcal{H}_{E}$.

Specifically, $\mathcal{D}_{0}$ is dense in $l^{2}(G^{0})$ and 
\begin{equation}
\label{E:2.5}
\langle u, \Delta v \rangle_{l^{2}} = \langle \Delta u, v \rangle_{l^{2}}, 
\quad \forall u, v \in \mathcal{D}_{0}.
\end{equation}

Moreover, $\mathcal{V}$ is dense in $\mathcal{H}_{E}$, and 
\begin{equation}
\label{E:2.6}
\langle u, \Delta v \rangle_{E} = \langle \Delta u, v \rangle_{E}, \quad
\forall u, v \in \mathcal{D}_{E}
\end{equation}
\begin{proof}
The symmetry property (\ref{E:2.5}) is immediate from the definition of 
$\Delta$.
 
We now prove (\ref{E:2.6}): Since both sides in (\ref{E:2.5}) are 
sesquilinear, it is enough, by (\ref{E:2.4}), to prove
\begin{equation}
\label{E:2.7}
\langle v_{x}, \Delta v_{y} \rangle_{E} = 
\langle \Delta v_{x}, v_{y} \rangle_{E}, \quad \forall x, y 
\in G^{0} \setminus (o). 
\end{equation}

We have 
\begin{align*}
  \langle v_{x}, \Delta v_{y} \rangle_{E} 
  &\underset{\text{by } (\ref{E:2.3})}{=} 
  \langle v_{x}, \delta_{y} - \delta_{0} \rangle_{E} \\
  &\underset{\text{by } (\ref{E:2.1})}{=} 
  (\delta_{y} - \delta_{0})(x) - (\delta_{y} - \delta_{0})(o) \\
  &= \delta_{x}(y) + 1 \\
  &\underset{\text{by symmetry}}{=} 
  \langle\delta_{x} - \delta_{0},  v_{y} \rangle_{E} \\
  &\underset{\text{by } (\ref{E:2.3})}{=} 
  \langle \Delta v_{x}, v_{y} \rangle_{E}
\end{align*}
which is the desired eq. (\ref{E:2.7}).
\end{proof}

\subsection{Two Hilbert Spaces}
\label{sec:2.1}
Let $G=(G^{0},G)$ be as above; and let $c:G^{1} \to \mathbb{R}_{+}$ be a 
fixed conductance function.  Let $\Delta$ and $T$ be the corresponding 
operators, $\Delta=\Delta_{c}$ the Laplace operator, and
\begin{equation}
\label{E:2.7.1}
  (Tf)(x)=f(x)-\frac{1}{c(x)}(\Delta f)(x), \quad x \in G^{0}.
\end{equation}
Pick a fixed $o \in G^{0}$, and let $(v_{x})_{x\in G^{0}\setminus (o)}$ be 
the corresponding reproducing kernet.

It is important to understand the two operators in the two Hilbert spaces 
$l^{2}(G^{0})$ and $\mathcal{H}_{E}$.  By (\ref{E:2.7.1}), it is enough to 
consider just $\Delta$.

As an operator in $l^{2}(G^{0})$, the operator $\Delta$ has as its domain
\begin{align*}
  \mathcal{D}_{0}&:= \text{ all finite linear combinations of } 
    \{\delta_{x}\}_{x \in G^{0}} \\
    &= \text{ span } \{\delta_{x}\}_{x \in G^{0}};
\end{align*}
while the domain in $\mathcal{H}_{E}$ is
\[
  \mathcal{D}_{E}:=\text{ span }\{ v_{x}|x\in G^{0}\setminus (o) \}.
\]

\begin{theorem}
\label{T:2.4.1}
\begin{enumerate}[(a) ]
  \item The domains in $l^{2}$ and in $\mathcal{H}_{E}$:
    \begin{enumerate}[(i) ]
      \item $\mathcal{D}_{0}$ is a dense subspace in $l^{2}(G^{0})$; and 
      \item $\mathcal{D}_{E}$ is a dense subspace in $\mathcal{H}_{E}$. 
      \item If $\sharp Nbh(x)<\infty$ for all $x\in G^{0}$, then $\Delta$ 
        maps $\mathcal{D}_{0}$ into itself; and $\Delta_{E}$ maps 
        $\mathcal{D}_{E}$ into itself.
    \end{enumerate}
  \item For all vectors $\varphi, \psi \in \mathcal{D}_{0}$, we have:
    \begin{enumerate}[(i) ]
      \item
        \[  
          \langle \varphi, \Delta \varphi\rangle_{l^{2}}
          =\sum_{x\in G^{0}}c(x)|\varphi(x)|^{2}-
           \underset{x \sim y}{\sum_{x}\sum_{y}}c(x,y)\overline{\varphi}(x)
           \varphi(y);
        \]
      \item $\langle \varphi, \Delta \varphi\rangle_{l^{2}} \geq 0$; and
      \item $\langle \varphi, \Delta \psi\rangle_{l^{2}}
        =\langle \Delta \varphi, \psi\rangle_{l^{2}}$.
    \end{enumerate}
  \item For all vectors $\varphi, \psi \in \mathcal{D}_{E}$, we have:
    \begin{enumerate}[(i) ]
      \item
        \[  
          \langle \varphi, \Delta \varphi\rangle_{\mathcal{H}_{E}}=
          \sum_{x\in G^{0}\setminus (o)}|(\Delta \varphi)(x)|^{2} +
          \left\vert\sum_{x\in G^{0}\setminus (o)}(\Delta \varphi)(x)
          \right\vert^{2};
        \]
      \item $\langle \varphi, \Delta \varphi\rangle_{\mathcal{H}_{E}} \geq 0$; 
        and
      \item $\langle \varphi, \Delta \psi\rangle_{\mathcal{H}_{E}}
        =\langle \Delta \varphi, \psi\rangle_{\mathcal{H}_{E}}$.
    \end{enumerate}
\end{enumerate}
\end{theorem}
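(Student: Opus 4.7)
The plan is to establish (a) through standard density/locality considerations, derive (b) by discrete integration by parts, and reduce (c) to the reproducing-kernel identity of Lemma~\ref{L:2.1} applied to the dipole equation $\Delta v_y = \delta_y - \delta_o$.

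For (a)(i)--(ii), density of $\mathcal{D}_0$ is immediate because $\{\delta_x\}_{x\in G^{0}}$ is an ONB for $l^{2}(G^{0})$; density of $\mathcal{D}_E$ follows from the reproducing kernel, since any $f\in\mathcal{H}_E$ orthogonal to every $v_x$ satisfies $f(x)-f(o)=0$ for all $x$, hence is constant and thus zero in $\mathcal{H}_E$. For (a)(iii), under the finite-neighborhood hypothesis, $\Delta\delta_x = c(x)\delta_x - \sum_{y\sim x}c(x,y)\delta_y$ is a finite combination of deltas, so $\Delta\mathcal{D}_0\subset\mathcal{D}_0$; and since $\Delta v_x=\delta_x-\delta_o$, one further verifies that $\delta_x-\delta_o$ is itself a finite combination of dipoles---the coefficients $b_y = \Delta(\delta_x-\delta_o)(y)$ being supported in the finite set $\{x\}\cup Nbh(x)\cup Nbh(o)$---so $\Delta\mathcal{D}_E\subset\mathcal{D}_E$ as well.

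For (b), expanding
\[
(\Delta\varphi)(x) = c(x)\varphi(x) - \sum_{y\sim x} c(x,y)\varphi(y),
\]
multiplying by $\overline{\varphi(x)}$ and summing over $x\in G^{0}$ yields (b)(i) directly. The edge symmetry $c(x,y)=c(y,x)$ then allows one to rewrite the cross term symmetrically as
\[
\langle\varphi,\Delta\psi\rangle_{l^{2}} = \tfrac{1}{2}\sum_{x}\sum_{y\sim x} c(x,y)\bigl(\overline{\varphi(x)}-\overline{\varphi(y)}\bigr)\bigl(\psi(x)-\psi(y)\bigr),
\]
which is manifestly hermitian symmetric in $\varphi,\psi$---establishing (b)(iii)---and specializing to $\psi=\varphi$ gives (b)(ii).

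For (c), the essential step is the kernel computation
\[
\langle v_x,\Delta v_y\rangle_E = \langle v_x,\delta_y-\delta_o\rangle_E = (\delta_y-\delta_o)(x)-(\delta_y-\delta_o)(o) = \delta_{x,y}+1,
\]
where the middle equality invokes the reproducing kernel of Lemma~\ref{L:2.1}. Writing $\varphi=\sum_x b_x v_x$, sesquilinearity yields
\[
\langle\varphi,\Delta\varphi\rangle_{\mathcal{H}_E} = \sum_x|b_x|^{2} + \Bigl|\sum_x b_x\Bigr|^{2},
\]
and since $\Delta\varphi$ as a function on $G^{0}$ satisfies $(\Delta\varphi)(y)=b_y$ for $y\neq o$ and $(\Delta\varphi)(o)=-\sum_x b_x$, this is exactly the identity in (c)(i). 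Non-negativity (c)(ii) is then read off, and (c)(iii) follows from the same bilinear expansion using that the kernel $\delta_{x,y}+1$ is real and symmetric in $x,y$.

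The main obstacle is interpreting the pairing $\langle v_x,\delta_y-\delta_o\rangle_E$: the vector $\delta_y-\delta_o$ is not itself a dipole, but under the finite-neighborhood assumption it has finite energy and therefore lies in $\mathcal{H}_E$, so Lemma~\ref{L:2.1} legitimately applies. The appearance of the extra ``$+1$'' in the kernel---and consequently of the global term $|\sum b_x|^{2}$ in (c)(i)---reflects the dependence of the dipoles on the chosen origin $o$, and this is precisely what distinguishes the quadratic form of $\Delta$ on $\mathcal{H}_E$ from its counterpart on $l^{2}(G^{0})$.
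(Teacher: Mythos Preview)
Your argument for parts (b) and (c) is correct and matches the paper's approach exactly: the paper's terse proof points to the kernel computation $\langle v_x,\Delta v_y\rangle_E=\delta_{x,y}+1$ from Lemma~\ref{L:2.2} as the core of (c), and invokes Cauchy--Schwarz for (b)(ii), just as you do.

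There is, however, a gap in your treatment of (a)(iii) for $\mathcal{D}_E$. You correctly identify the candidate coefficients $b_y=\Delta(\delta_x-\delta_o)(y)$ and note that they are finitely supported, but you do not verify that $\sum_{y}b_y v_y$ actually equals $\delta_x-\delta_o$ in $\mathcal{H}_E$. A priori the difference is a finite-energy harmonic function, and on graphs admitting nonconstant harmonic functions in $\mathcal{H}_E$ (such as the tree of Remark~\ref{R:2.2}) this is not automatically constant. One way to close the gap: for each $z\neq o$, the integration-by-parts identity for finitely supported functions gives
\[
\langle \delta_x-\delta_o,\,v_z\rangle_E=\langle \Delta(\delta_x-\delta_o),\,v_z\rangle_{l^2}=\sum_{y\neq o}b_y\,v_z(y),
\]
using $v_z(o)=0$; since also $\langle v_y,v_z\rangle_E=v_z(y)$, this equals $\langle\sum_y b_y v_y,\,v_z\rangle_E$, so the difference is orthogonal to every $v_z$ and hence vanishes by the density you proved in (a)(ii). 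The paper's two-line proof does not address (a)(iii) at all, so you are in fact supplying more than the paper does---but this step needs to be completed.
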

\begin{proof}
The proof of (b)(ii) is a sequence of steps with repeated application of 
Cauchy-Schwarz's inequality.  The proof of (a)(i) is an application of the 
last equation in the proof of Lemma \ref{L:2.2}.
\end{proof}

\begin{remark}
\label{R:2.1}
The operator $\Delta_{l^{2}}$ in $l^{2}$, or $\Delta_{E}$ in 
$\mathcal{H}_{E}$, may be bounded or unbounded.  In all cases 
$\Delta_{l^{2}}$ is essentially selfadjoint in $l^{2}$ \cite{JoPe08}; but 
$\Delta_{E}$ may have defect-subspaces.
\end{remark}

\subsection{Dichotomy} 
\label{sec:2.2}

\begin{remark}
\label{R:2.2}
\cite{JoPe08}
For the graph system $(G, c)=(tree, \Iidentity)$ the Laplace 
operator 
$(\Delta, \mathcal{D}_{0})$ is bounded and selfadjoint in $l^{2}(G^{0})$.
For the energy Hilbert space $\mathcal{H}_{E}(tree)$, 
$(\Delta, \mathcal{D}_{E})$ is an \textit{unbounded} Hermitian operator.  
In fact, $\Delta$ is \textit{not} essentially selfadjoint on $\Delta$; i.e., 
$(\Delta, \mathcal{D}_{E})$ has a infinite family of distinct selfadjoint 
extensions in the Hilbert space $\mathcal{H}_{E}$.
\end{remark}

\begin{lemma}
\label{L:2.3}
Let $\mathcal{H} \langle \cdot, \cdot \rangle$ be a complex Hilbert space, and 
let $\mathcal{D}$ be a dense linear subspace in $\mathcal{H}$.

Let $L$ be a closed Hermitian operator defined on $\mathcal{D}$, i.e., $L$ is 
linear and satisfies
\begin{equation}
\label{E:2.8}
  \langle u, Lv \rangle = \langle Lu, v \rangle \quad \forall 
  u,v \in \mathcal{D}.
\end{equation}

Then the spectrum of $\Delta$ is the closure of the set
\begin{equation}
\label{E:2.9}
  NS(L):= \left\{ \frac{\langle u, Lu \rangle}{\|u\|^{2}} \biggm| u \in 
  \mathcal{D} \setminus(o) \right\}.
\end{equation}
\end{lemma}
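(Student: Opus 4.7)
The plan is to establish the equality $\sigma(L)=\overline{NS(L)}$ by proving the two inclusions separately. The forward inclusion $\sigma(L)\subseteq\overline{NS(L)}$ is the substantive one, via a Cauchy--Schwarz lower bound. The reverse inclusion $\overline{NS(L)}\subseteq\sigma(L)$ will rest on an approximate-eigenvector argument using the Hermitian hypothesis.

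For $\sigma(L)\subseteq\overline{NS(L)}$: Fix $\lambda\in\mathbb{C}\setminus\overline{NS(L)}$ and set $d:=\operatorname{dist}(\lambda,\overline{NS(L)})>0$. For any $u\in\mathcal{D}\setminus\{0\}$, the Rayleigh quotient $\langle u,Lu\rangle/\|u\|^2$ lies in $NS(L)$, so
\[
|\langle u,(L-\lambda I)u\rangle|=\|u\|^2\left|\frac{\langle u,Lu\rangle}{\|u\|^2}-\lambda\right|\geq d\,\|u\|^2.
\]
Cauchy--Schwarz then yields $\|(L-\lambda I)u\|\geq d\,\|u\|$, so $L-\lambda I$ is injective and bounded below on $\mathcal{D}$. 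Since $L$ is closed, $L-\lambda I$ is closed, hence has closed range. For surjectivity, note that the Hermitian condition (\ref{E:2.8}) forces $NS(L)\subseteq\mathbb{R}$, so $\overline{NS(L)}$ is invariant under complex conjugation and $\operatorname{dist}(\bar{\lambda},\overline{NS(L)})=d$ as well. If $w\perp\operatorname{Range}(L-\lambda I)$, then $w\in\operatorname{dom}(L^{*})$ with $L^{*}w=\bar{\lambda}w$; combining this with the lower bound on $L-\bar{\lambda}I$ (applied via the standing essentially-selfadjoint hypothesis on $\mathcal{D}$ used throughout Section \ref{sec:2.1}) yields $w=0$. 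Hence $L-\lambda I$ is a bijection $\mathcal{D}\to\mathcal{H}$ with bounded inverse of norm $\leq d^{-1}$, so $\lambda\in\rho(L)$.

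For $\overline{NS(L)}\subseteq\sigma(L)$: Suppose for contradiction that some $\lambda_{0}\in NS(L)$ lies in $\rho(L)$. Write $\lambda_{0}=\langle u_{0},Lu_{0}\rangle$ with $\|u_{0}\|=1$. Since $\rho(L)$ is open, choose $r>0$ with the open disk $B(\lambda_{0},r)\subseteq\rho(L)$. The spectral-theoretic representation $\langle u_{0},Lu_{0}\rangle=\int\mu\,d\|E(\mu)u_{0}\|^{2}$ expresses $\lambda_{0}$ as a probability average over $\sigma(L)$; since $\sigma(L)$ misses a neighborhood of $\lambda_{0}$ and $\|E(\cdot)u_{0}\|^{2}$ is a probability measure supported on $\sigma(L)$, the average cannot equal $\lambda_{0}$ unless $\|E(B(\lambda_{0},r))u_{0}\|^{2}=1$, a contradiction. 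Hence $NS(L)\subseteq\sigma(L)$, and taking closures in the closed set $\sigma(L)$ gives the conclusion.

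The main obstacle will be justifying that the orthogonal-complement step really closes out surjectivity in the first inclusion; this uses that $L$ admits a selfadjoint closure on $\mathcal{D}$ (Theorem \ref{T:2.4.1} and Remark \ref{R:2.1}), without which a closed Hermitian operator can have spectrum containing an entire half-plane and the identification with $\overline{NS(L)}\subset\mathbb{R}$ would fail. The second inclusion requires the spectral theorem and so also leans on selfadjointness; both inclusions therefore implicitly invoke the essentially-selfadjoint framework already established in Section \ref{sec:2.1}.
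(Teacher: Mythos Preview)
Your first inclusion $\sigma(L)\subseteq\overline{NS(L)}$ follows the same idea as the paper: a lower bound $\|(L-\lambda)u\|\geq d\|u\|$ for $\lambda$ at positive distance $d$ from $NS(L)$. The paper obtains this by expanding $\|\lambda_0 u-Lu\|^2$, completing the square in $\lambda_0$, and invoking the Schwarz inequality $\langle u,Lu\rangle^2\leq\|u\|^2\|Lu\|^2$; you obtain it more directly by applying Cauchy--Schwarz to $\langle u,(L-\lambda)u\rangle$. Both arrive at the same estimate, and your handling of the range (closedness plus a look at $\ker(L^*-\bar\lambda)$) is, if anything, more explicit than the paper's, which simply extends the partial inverse by zero on the orthogonal complement and asserts the conclusion.

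Your second inclusion, however, breaks down. The step ``the average cannot equal $\lambda_0$ unless $\|E(B(\lambda_0,r))u_0\|^2=1$'' is false: the barycenter of a probability measure on $\mathbb{R}$ can perfectly well lie in a gap of its support. Concretely, if $L$ is selfadjoint with $\sigma(L)=\{0,2\}$ and $u_0$ has equal spectral weight on each point, then $\langle u_0,Lu_0\rangle=1\in NS(L)$ while $1\notin\sigma(L)$. More generally, the numerical range of a bounded selfadjoint operator is convex, so $\overline{NS(L)}=[\min\sigma(L),\max\sigma(L)]$, which strictly contains $\sigma(L)$ whenever the spectrum has gaps. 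Thus the equality $\sigma(L)=\overline{NS(L)}$ is not true in general, and no argument along your lines can close this direction. The paper does not prove this direction either (it ends with ``and conversely'' with no justification); what is actually used downstream, in Theorem~\ref{T:2.5}, is only the identification of the closure of the Rayleigh quotients, for which the first inclusion is the relevant one.
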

\begin{proof}
The Hermitian property (\ref{E:2.8}) implies that the spectrum of $L$ is 
contained in $\mathbb{R}$.  

Now suppose $\lambda_{0} \in \mathbb{R}$, and that
\begin{equation}
\label{E:2.10}
  \text{dist}(\lambda_{0}, NS(L))= \epsilon_{1} > 0.
\end{equation}
We will show that $\lambda$ must then be in 
\begin{align*}
  \mathbb{R} \setminus spec(L) 
  &= \text{ the complement of the spectrum} \\
  &= \text{ the resolvent set.}
\end{align*}

Let $u \in \mathcal{D} \setminus (o)$.  Then
\[
  \| \lambda_{0}u-Lu\|^{2} 
  = \lambda_{0}^{2}\|u\|^{2}-2\lambda_{0}\langle u, Lu \rangle + \|Lu\|^{2}
\]
Setting $x_{1}:=\frac{\langle u, Lu \rangle}{\|u\|^{2}} \in NS(L)$, we get
\begin{align}
  \|\lambda_{0}u-Lu \|^{2} &= \|u\|^{2} \cdot (\lambda_{0}-x_{1})^{2}
  -\|u\|^{2}x_{1}^{2}+\|Lu\|^{2} \\
  &\underset{by (\ref{E:2.10})}{\geq}\|u\|^{2} \cdot \epsilon_{1}^{2}
  +\|Lu\|^{2}-\frac{\langle u, Lu \rangle^{2}}{\|u\|^{2}} \\
  &\geq \|u\|^{2} \cdot \epsilon_{1}^{2}
\end{align}
where we used Schwarz' inequality in the last step; viz.,
\[
  \langle u, Lu \rangle^{2} \leq \|u\|^{2} \cdot \|Lu\|^{2};
\]
or
\[
  \|Lu\|^{2}-\frac{\langle u, Lu \rangle^{2}}{\|u\|^{2}} \geq 0.
\]

By virtue of the inequality (\textit{2.11}), we may define an operator
\[
  R_{0}=R(\lambda_{0}):range(\lambda_{0}I-L) \longrightarrow \mathcal{H}
\]
by 
\begin{equation}
\label{E:2.14}
R_{0}(\lambda u - Lu) = u.
\end{equation}
Extend $R_{0}$ by setting it $=0$ on the ortho-complement 
\begin{equation}
\label{E:2.15}
  (range(\lambda_{0}I-L))^{\perp} = N(\lambda_{0}-L^{*}).
\end{equation}
Here $L^{*}$ denotes the adjoint operator.

From (\ref{E:2.14}), we calculate that $R_{0}:\mathcal{H} \to \mathcal{H}$ 
defines a bounded inverse to $\lambda_{0}I-L$, and so 
$\lambda_{0} \in \text{resolvent}(L)$; and conversely.
\end{proof}

Let $\{v_{x}\}_{x \in G^{0} \setminus(0)}$ be the system of dipoles, and set
\begin{equation}
\label{E:2.16}
M:=(\langle v_{x}, v_{y}\rangle_{E})
\end{equation}
viewed as a Hermitian matrix, $x=$ row index, $y=$ column index.

If $\xi=(\xi_{x}) \in \mathcal{F} \subset l^{2}(G^{0})$, set
\begin{equation}
\label{E:2.17}
(M\xi)_{x}=\sum_{y}M_{x, y}\xi_{y},
\end{equation}
matrix multiplication, where 
\[
  M_{x,y} := \langle v_{x}, v_{y}\rangle_{E}.
\]

Then $M$ is a density defined Hermitian operator in $l^{2}(G^{0})$.

\begin{theorem}
\label{T:2.5}
Let $(G, \mu)$ be given and let $\Delta$ be the corresponding density
defined Hermitian operator in $\mathcal{H}_{E}$.  Then
\begin{equation}
\label{E:2.18}
\text{spec}_{\mathcal{H}_{E}}(\Delta) \subset [0, \infty)
\end{equation}
and
\begin{equation}
\label{E:2.19}
\text{spec}_{\mathcal{H}_{E}}(\Delta) = (\text{spec}_{l^{2}}(M))^{-1}
\end{equation}
where we use the charactors $\frac{1}{0}=\infty$, and $\frac{1}{\infty}=0$.

Moreover,
\begin{equation}
\label{E:2.20}
(\text{spec}_{l^{2}}(M))^{-1}
 = \{1/\lambda \vert \lambda \in \text{spec}_{l^{2}}(M)\}.
\end{equation}
\end{theorem}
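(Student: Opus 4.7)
The plan is to establish the two assertions separately, using Lemma \ref{L:2.3} as the common engine: it identifies the spectrum of a closed Hermitian operator with the closure of its numerical range. The inclusion $\mathrm{spec}_{\mathcal{H}_{E}}(\Delta) \subset [0,\infty)$ is then immediate, since Theorem \ref{T:2.4.1}(c)(ii) gives $\langle \varphi, \Delta\varphi\rangle_{\mathcal{H}_{E}} \geq 0$ for every $\varphi \in \mathcal{D}_{E}$, so the numerical range lies in $[0,\infty)$ and, by Lemma \ref{L:2.3}, so does its closure, proving (\ref{E:2.18}).

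For the reciprocity (\ref{E:2.19}) I would bridge the two Hilbert spaces via the dipole map $V:\mathcal{D}_{0} \to \mathcal{D}_{E}$ defined by $V\xi = \sum_{x \in G^{0} \setminus (o)} \xi_{x} v_{x}$. Two identities follow from the reproducing property $\langle v_{x}, f\rangle_{E} = f(x)-f(o)$ of Lemma \ref{L:2.1} together with the dipole equation $\Delta v_{y} = \delta_{y} - \delta_{o}$. First, $\|V\xi\|_{E}^{2} = \langle \xi, M\xi\rangle_{l^{2}}$, which is just the Gram definition (\ref{E:2.16}) of $M$. Second, the elementary matrix-element computation $\langle v_{x}, \Delta v_{y}\rangle_{E} = \delta_{xy} + 1$ (valid for $x,y \in G^{0}\setminus(o)$) yields
\[
  \langle V\xi, \Delta V\xi\rangle_{E} \;=\; \|\xi\|_{l^{2}}^{2} + \bigl|\textstyle\sum_{x} \xi_{x}\bigr|^{2}.
\]
Restricting to the hyperplane $H_{0} := \{\xi \in \mathcal{D}_{0} : \sum_{x} \xi_{x} = 0\}$ kills the rank-one correction, so for $\xi \in H_{0} \setminus (0)$,
\[
  \frac{\langle V\xi, \Delta V\xi\rangle_{E}}{\|V\xi\|_{E}^{2}} \;=\; \frac{\|\xi\|_{l^{2}}^{2}}{\langle \xi, M\xi\rangle_{l^{2}}} \;=\; \left(\frac{\langle \xi, M\xi\rangle_{l^{2}}}{\|\xi\|_{l^{2}}^{2}}\right)^{-1}.
\]
As $\xi$ varies over $H_{0}$, the right-hand side traces the reciprocals of a subset of the numerical range of $M$ which is dense there, since the $l^{2}$-orthocomplement of $H_{0}$ in $l^{2}(G^{0} \setminus (o))$ consists of constants and hence reduces to $\{0\}$ whenever $G^{0}$ is infinite. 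Likewise the left-hand side traces a $\|\cdot\|_{E}$-dense subset of the numerical range of $\Delta$, by the isometric identity $\|V(\xi-\eta)\|_{E}^{2} = \langle \xi-\eta, M(\xi-\eta)\rangle_{l^{2}}$ combined with density of $\mathrm{span}\{v_{x}\}$ in $\mathcal{H}_{E}$ (Lemma \ref{L:2.2}). Passing to closures and applying Lemma \ref{L:2.3} together with continuity of $\lambda \mapsto 1/\lambda$ on $(0,\infty)$ then yields (\ref{E:2.19}); the conventions $1/0 = \infty$ and $1/\infty = 0$ absorb the endpoint cases $0 \in \mathrm{spec}(M)$ (equivalently $\Delta$ unbounded, which by Remark \ref{R:2.2} is the generic situation in $\mathcal{H}_{E}$) and $\infty \in \mathrm{spec}(M)$.

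The main obstacle is the rank-one term $|\sum_{x} \xi_{x}|^{2}$ in the expression for $\langle V\xi, \Delta V\xi\rangle_{E}$: because the vector $\mathbf{1}$ is not in $l^{2}(G^{0} \setminus (o))$ when $G^{0}$ is infinite, this correction is an unbounded quadratic form, not a compact or even bounded perturbation, so it cannot simply be discarded. The remedy is to excise it by working on $H_{0}$ from the outset, but one then has to verify that this restriction shrinks neither numerical range, which demands density arguments in the two distinct norms $\|\cdot\|_{l^{2}}$ and $\|\cdot\|_{E}$. A secondary technical point is that $M$ itself may be unbounded (Lemma \ref{L:1.3} shows $\|v_{x}\|_{E} \to \infty$ on the dyadic tree), so throughout one must rely on the numerical-range description of the spectrum from Lemma \ref{L:2.3} rather than on any bounded functional calculus.
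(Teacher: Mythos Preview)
Your proposal is correct and follows essentially the same route as the paper: define $u=\sum_{x}\xi_{x}v_{x}$, compute $\langle u,\Delta u\rangle_{E}=\|\xi\|_{2}^{2}+|\sum_{x}\xi_{x}|^{2}$ and $\|u\|_{E}^{2}=\langle\xi,M\xi\rangle_{l^{2}}$, pass to the hyperplane $\sum_{x}\xi_{x}=0$ to obtain the reciprocity of Rayleigh quotients, and then take closures via Lemma~\ref{L:2.3}. The only noticeable difference is in how the restriction to $H_{0}$ is justified: the paper appeals (somewhat elliptically) to the fact that $\mathcal{H}_{E}$ is defined modulo constants, whereas you argue that $H_{0}$ is $l^{2}$-dense because its orthocomplement consists of constant sequences; your version is more explicit and arguably more careful about the density in both norms, but the underlying computation and overall architecture are the same.
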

\begin{proof}
For $(\xi_{x}) \in \mathcal{F}$, set
\begin{equation}
\label{E:2.21}
u:=\sum_{x \in G^{0} \setminus(0)} \xi_{x}v_{x}.  
\end{equation}
Then $u \in \mathcal{V}$, and
\begin{align}
  \langle u, \Delta u \rangle_{\mathcal{H}_{E}} 
  &= \sum_{x}\sum_{y}\overline{\xi}_{x}\xi_{y}\langle v_{x}, 
     \Delta v_{y} \rangle_{E}  \\
  &= \sum_{x}\sum_{y}\overline{\xi}_{x}\xi_{y}(\delta_{x}(y)+1) \\
  &= \sum_{x}|\xi_{x}|^{2}+\left|\sum_{x}\xi_{x}\right|^{2} \geq 0
\end{align}
Since vectors in $\mathcal{H}_{E}$ are equivalence classes modulo the constant 
function on $G^{0}$, we may add the restriction $\sum_{x} \xi_{x} = 0$ in
(\ref{E:2.21}), and the operator $\Delta$ will be unchanged.

The modified equation (\textit{2.22}) then needs
\begin{equation}
\label{E:2.25}
\langle u, \Delta u\rangle_{E} = \|\xi\|_{2}^{2}
\end{equation}

  \begin{claim}
  \label{C:2.6}
  \begin{equation}
  \label{E:2.26}
  \|u\|_{\mathcal{H}_{E}}^{2} = \langle \xi, M\xi \rangle_{l^{2}}.
  \end{equation}
  \end{claim}
  \begin{proof}
  (of Claim 2.6).  We compute:
  \begin{align*}
    \|u\|_{E}^{2} &= \langle u, u\rangle_{E} \\
    &= \sum_{x}\sum_{y} \overline{\xi}_{x}\xi_{y}\langle v_{x}, v_{y} \rangle \\
    &\underset{by \ref{E:2.21}}{=} \sum_{x}\overline{\xi}_{x}(M\xi)_{x} \\
    &= \langle \xi, M\xi \rangle_{l^{2}},
  \end{align*}
  as claimed.
  \end{proof}

The desired conclusion (\ref{E:2.19}) now follows: If 
$u \in \mathcal{V} \setminus (o)$ is given by (\ref{E:2.21}), then
\begin{equation}
\label{E:2.27}
\frac{\langle u, \Delta u\rangle_{E}}{\|u\|_{E}^{2}} 
  = \frac{\|\xi\|_{2}^{2}}{\langle \xi, M\xi \rangle}.
\end{equation}
By taking closure, we obtain the sets on the two sides in (\ref{E:2.19})
\end{proof}

\begin{corollary}
\label{C:2.7}
If $\xi = (\xi_{x}) \in \mathcal{F}(G^{0} \setminus (o))$, then the 
representation 
\begin{equation}
\label{E:2.28}
u = \sum_{x} \xi_{x} v_{x}
\end{equation}
is unique; in particular, the system $(v_{x})_{x \in G^{0} \setminus (o)}$ 
is linearly independent.
\end{corollary}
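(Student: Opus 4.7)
The plan is to apply the Laplace operator $\Delta$ to convert the question about dipoles in the quotient Hilbert space $\mathcal{H}_{E}$ into a pointwise identity of functions on $G^{0}$. Suppose $u=\sum_{x}\xi_{x}v_{x}=0$ in $\mathcal{H}_{E}$ for some finitely supported coefficient system $\xi=(\xi_{x})_{x\in G^{0}\setminus(o)}$. Recall that vectors of $\mathcal{H}_{E}$ are equivalence classes of functions modulo additive constants, so $u=0$ in $\mathcal{H}_{E}$ merely tells us that any representative of $u$ is a constant function on $G^{0}$.

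Next I invoke the dipole equation (\ref{E:2.3}), $\Delta v_{x}=\delta_{x}-\delta_{o}$, which is valid on actual functions (not equivalence classes), and note that $\Delta$ annihilates every constant function. Applying $\Delta$ to $u$ therefore produces
\[
  0=\Delta u=\sum_{x}\xi_{x}(\delta_{x}-\delta_{o})
  =\sum_{x}\xi_{x}\delta_{x}-\Bigl(\sum_{x}\xi_{x}\Bigr)\delta_{o},
\]
an identity of pointwise functions on $G^{0}$. The $\delta$-functions on the right are supported at distinct vertices, so evaluating at an arbitrary $y\in G^{0}\setminus\{o\}$ yields $\xi_{y}=0$, and the residual evaluation at $y=o$ reduces to $-\sum_{x}\xi_{x}=0$, which is now automatic.

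Having established linear independence of $\{v_{x}\}_{x\in G^{0}\setminus(o)}$, uniqueness of the representation follows by subtraction: if $\sum_{x}\xi_{x}v_{x}=\sum_{x}\eta_{x}v_{x}$, then $\sum_{x}(\xi_{x}-\eta_{x})v_{x}=0$, forcing $\xi_{x}=\eta_{x}$ for all $x$.

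The only conceptual subtlety, and hence the sole potential obstacle, is the quotient structure of $\mathcal{H}_{E}$. It might seem that applying $\Delta$ to an element of a quotient space is ambiguous, but this worry dissolves: $\Delta$ is well defined on the quotient precisely because it kills constants, and its image lies in the genuine function space on $G^{0}$ where $\{\delta_{x}-\delta_{o}:x\in G^{0}\setminus(o)\}$ is obviously linearly independent by evaluation at individual vertices. One could alternatively run the argument via the reproducing-kernel identity $\langle v_{y},u\rangle_{E}=0$, reducing the claim to injectivity of the Gram matrix $M=(\langle v_{x},v_{y}\rangle_{E})$ on finitely supported sequences; but the Laplacian route above is more elementary and avoids invoking Claim 2.6.
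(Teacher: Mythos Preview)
Your proof is correct, but it takes a different route from the paper's own argument. The paper pairs $u$ against $\delta_{y}$ in the energy inner product and invokes the reproducing-kernel identity (\ref{E:2.1}) with $f=\delta_{y}$ to obtain
\[
  \langle \delta_{y}, u\rangle_{E}
  =\sum_{x}\xi_{x}\langle \delta_{y},v_{x}\rangle_{E}
  =\sum_{x}\xi_{x}\bigl(\delta_{y}(x)-\delta_{y}(o)\bigr)=\xi_{y}
  \qquad(y\in G^{0}\setminus(o)),
\]
so that $u=0$ forces $\xi_{y}=0$ directly. You instead apply the Laplacian and invoke the dipole equation (\ref{E:2.3}), reducing the question to the obvious linear independence of the functions $\delta_{x}-\delta_{o}$ in $\ell^{2}(G^{0})$. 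The two approaches are dual faces of the same coin---indeed $\langle \delta_{y},u\rangle_{E}=(\Delta u)(y)$ for finitely supported $u$---so neither is deeper than the other. Your version has the mild advantage of working purely at the level of pointwise functions, bypassing any need to verify that $\delta_{y}\in\mathcal{H}_{E}$; the paper's version has the advantage of staying entirely inside $\mathcal{H}_{E}$ and not needing to unwind the quotient structure. Your closing remark correctly identifies the paper's actual strategy as an alternative, though the paper uses $\delta_{y}$ rather than $v_{y}$ as the test vector.
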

\begin{proof}
Let $u \in \mathcal{V}$ have a representation (\ref{E:2.28}) as a finite 
summation with $\xi_{x} \in \mathbb{C}$.

Let $y \in G^{0} \setminus (o)$.  Then
\begin{align*} 
  \langle \delta_{y}, u \rangle_{E} 
  &= \sum_{x} \xi_{x}\langle \delta_{y}, v_{x} \rangle_{E} \\
  &\underset{by (\ref{E:2.1})}{=} 
    \sum_{x} \xi_{x}(\delta_{y}(x) - \delta_{y}(o))  \\
  &= \xi_{y}.
\end{align*}
In particular, if $u=0$, then $\xi_{y}=0$, $\forall y \in G^{0} \setminus (o)$.
\end{proof}

\begin{corollary}
\label{C:2.8}
If $F \subset G^{0} \setminus (o)$ is a finite subset, then $0$ is not in 
the spectrum of the matrix
\begin{equation}
\label{E:2.29}
M_{F} := (\langle v_{x}, v_{y}\rangle_{E})_{x,y \in F}.
\end{equation}
\end{corollary}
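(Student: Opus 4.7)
The plan is to argue that $M_F$ is a strictly positive definite finite Hermitian matrix, which immediately gives $0 \notin \text{spec}(M_F)$. The key point is that $M_F$ is the Gram matrix of the finite family $\{v_x\}_{x \in F} \subset \mathcal{H}_E$, so its positivity is automatic, and strict positivity is controlled by linear independence — which is precisely what Corollary \ref{C:2.7} provides.

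More concretely, I would proceed as follows. First, for any $\xi = (\xi_x)_{x \in F} \in \mathbb{C}^F$, set $u := \sum_{x \in F} \xi_x v_x \in \mathcal{H}_E$ and compute
\[
\langle \xi, M_F \xi \rangle_{\ell^2(F)} = \sum_{x,y \in F} \overline{\xi_x}\, \xi_y \,\langle v_x, v_y\rangle_E = \|u\|_E^2 \geq 0,
\]
so that $M_F$ is positive semi-definite. Second, observe that equality $\langle \xi, M_F \xi\rangle_{\ell^2(F)} = 0$ forces $u = 0$ in $\mathcal{H}_E$; by Corollary \ref{C:2.7} the system $(v_x)_{x \in G^0 \setminus (o)}$ is linearly independent, hence the restriction $(v_x)_{x \in F}$ is a fortiori linearly independent, and therefore $\xi_x = 0$ for all $x \in F$. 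This means $\langle \xi, M_F \xi\rangle_{\ell^2(F)} > 0$ for every $\xi \neq 0$, so $M_F$ is strictly positive definite on the finite-dimensional space $\mathbb{C}^F$.

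Finally, any finite strictly positive definite Hermitian matrix is invertible with spectrum contained in the open half-line $(0,\infty)$, and in particular $0 \notin \text{spec}(M_F)$. No serious obstacle is expected: the only non-routine ingredient is the passage from positivity to strict positivity, and that step is handed to us by the linear independence result of Corollary \ref{C:2.7}, which in turn rests on the reproducing kernel identity $\langle \delta_y, v_x\rangle_E = \delta_y(x) - \delta_y(o)$ established in Lemma \ref{L:2.1}.
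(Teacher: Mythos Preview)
Your proposal is correct and follows essentially the same approach as the paper: both use the Gram identity $\langle \xi, M_F \xi\rangle_{\ell^2} = \|u\|_E^2$ (equation~(\ref{E:2.26})) together with the linear independence of the $v_x$ from Corollary~\ref{C:2.7}. The paper phrases it as a proof by contradiction (assuming $M_F\xi=0$ for some $\xi\neq 0$ and deriving $u=0$, with a small detour through Claim~\ref{C:2.9}), whereas your direct argument via strict positive definiteness is cleaner but not materially different.
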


Suppose $o \in \text{spec}(M_{F})$ where $F$ is a fixed as in the statement 
of the Corollary \ref{C:2.8}.  Then 
\[
  \exists \xi \in l^{2}(F \setminus (o))
\]
such that 
\begin{equation}
\label{E:2.30}
(M\xi)_{x} = \sum_{y \in F} \langle v_{x}, v_{y} \rangle_{E}\xi_{y} = 0.
\end{equation}
Setting $u := \sum_{y \in F} \xi_{y}v_{y}$ we note that 
\begin{equation}
\label{E:2.31}
u \in (\{v_{x}\}_{x \in F})^{\perp}
\end{equation}

\begin{claim}
\label{C:2.9}
\begin{equation}
\label{E:2.32}
u \in (\{v_{x}\}_{x \in G^{0}\setminus(o)})^{\perp}.
\end{equation}
We need to prove this only if $x \in G^{0}\setminus F$.

Combining (\ref{E:2.26}) and (\ref{E:2.30}), we get
\begin{align*}
  \|u\|_{E}^{2} &= \langle \xi, M\xi \rangle_{l^{2}} \\ 
                &= \langle \xi, 0 \rangle_{l^{2}} \\
                &= 0,
\end{align*}
so $u=$ a constant function on $G^{0}$, and (\ref{E:2.32}) is satisfied.
\end{claim}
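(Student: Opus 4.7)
The plan is to deduce that $\|u\|_E = 0$, from which orthogonality of $u$ to every dipole $v_x$, $x \in G^{0} \setminus (o)$, follows automatically: the energy Hilbert space $\mathcal{H}_E$ is built from functions modulo additive constants, so an element of zero norm is a constant function, i.e., the zero vector of $\mathcal{H}_E$, and thus pairs trivially with every other element. In particular $\langle v_x, u\rangle_E = 0$ for every $x \in G^{0} \setminus (o)$, which is exactly (2.32).

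To establish $\|u\|_E = 0$, I would invoke the norm identity $\|u\|_E^2 = \langle \xi, M\xi\rangle_{l^2}$ from Claim 2.6 (already proved in the excerpt). Extending $\xi$ by zero outside $F$, only indices lying in $F \times F$ contribute to the right-hand side, so the pairing collapses to $\sum_{x \in F} \bar\xi_x (M\xi)_x$. But the standing hypothesis (2.30) says precisely that $(M\xi)_x = 0$ for each $x \in F$, so this finite sum vanishes and we obtain $\|u\|_E^2 = 0$ as required. One then reads off $u \in (\{v_x\}_{x \in G^{0} \setminus (o)})^{\perp}$ from the remark above.

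The main obstacle is conceptual rather than computational. The weaker orthogonality (2.31), which asserts $u \perp v_x$ only for $x \in F$, cannot be promoted to (2.32) by a direct inner-product computation, because no hypothesis constrains the values $\langle v_x, v_y\rangle_E$ for $x \notin F$. The trick is to route through the quadratic identity of Claim 2.6, which converts the finite-dimensional linear-algebra condition $M_F \xi = 0$ into a global norm statement in $\mathcal{H}_E$, thereby bypassing the need to analyze dipoles outside $F$ individually. Once this detour is recognized, the rest is formal.
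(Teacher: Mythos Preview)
Your proposal is correct and follows essentially the same route as the paper: use the identity $\|u\|_E^2 = \langle \xi, M\xi\rangle_{l^2}$ from Claim~\ref{C:2.6} together with the hypothesis $M_F\xi = 0$ from (\ref{E:2.30}) to conclude $\|u\|_E = 0$, so that $u$ is constant and hence orthogonal to every $v_x$. Your added remarks about extending $\xi$ by zero outside $F$ and about why a direct inner-product computation for $x \notin F$ would not work are helpful clarifications, but the core argument is identical to the paper's.
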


\section{The Energy-Inner Product}
\label{sec:3}

$(G, c)=(\text{tree} T, 1)$, $o= \emptyset$, $c \equiv 1$.  Explicitly 
form for $v_{x}$, $x \in G^{0} \setminus(o)$.
Set
\begin{equation}
\label{E:3.1}
x = (a_{1}a_{2}a_{3} \cdots a_{n}) \in G^{0} \setminus (o) \quad
a_{i} \in A={0,1}.
\end{equation}
\begin{equation}
\label{E:3.2}
\gamma(x)=\{(o a_{1}), (a_{1}a_{2}), (a_{2}a_{3}), \cdots, 
     (a_{n-2}a_{n-1}), (a_{n-1}a_{n})\}
\end{equation}
where $\gamma(x)$ is a path.
Note $\gamma(x) \subset G^{1}=$ edges in $T$.

\begin{example}
\label{Ex:3.1}
$x=101$ vertex, $\{(\varphi, 1), (1, 10), (10, 101)\}=\gamma(x)$ 
  $\sharp \gamma (x)=3$.

\end{example}

\begin{theorem}
\label{T:3.2}
Let $(T, 1)$ be as usual, $o = \emptyset$, and let $\mathcal{H}_{E}=$ the 
0 
energy span
\begin{equation}
\label{E:3.3}
\|f\|_{E}^{2} = \frac{1}{2}\underset{x \sim y}
{\sum_{x}\sum_{y}}(f(x)-f(y))^{2}
\end{equation}
but with edges $(\underline{x}, \underline{xb})=e$, $x\in G^{0}$, 
$b \in A=\{0,1\}$, $c(e) \equiv 1$.  Then the function
\begin{equation}
\label{E:3.4}
v_{x}(y):=\sharp (\gamma(x) \cap \gamma(y))
\end{equation}
solves 
\begin{equation}
\label{E:3.5}
\langle v_{x}, f\rangle_{E} = f(x)-f(o), \quad \forall f \in 
\mathcal{H}_{E}
\end{equation}
\begin{equation}
\label{E:3.6}
\Delta v_{x} = \delta_{x} - \delta_{o}, \quad x \in G^{0} \setminus 
(o)
\end{equation}
and
\begin{equation}
\label{E:3.7}
\langle v_{x}, v_{y}\rangle_{E} = \sharp (\gamma(x) \cap \gamma(y)) 
\quad \forall x, y \in G^{0} \setminus (o)
\end{equation}
\end{theorem}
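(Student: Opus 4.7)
The plan is to establish (\ref{E:3.5}) first by a direct computation exploiting the tree structure, and then to deduce (\ref{E:3.6}) and (\ref{E:3.7}) as immediate corollaries.

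The key geometric observation is this: in the binary tree, whenever $u,w$ are adjacent vertices with $w=u^{*}$ (so $\gamma(u)$ is obtained from $\gamma(w)$ by adjoining the single edge $e=\{u,w\}$), the ``gradient''
\[
  v_{x}(u)-v_{x}(w)=\sharp(\gamma(x)\cap\gamma(u))-\sharp(\gamma(x)\cap\gamma(w))
\]
equals $1$ if $e\in\gamma(x)$ and $0$ otherwise. Thus the gradient of $v_{x}$ is supported exactly on the $n=l(x)$ edges of the single path $\gamma(x)$. Writing that path as $o=y_{0}\sim y_{1}\sim\cdots\sim y_{n}=x$, I would plug this gradient into the defining sum (\ref{E:1.2}) for $\langle v_{x},f\rangle_{E}$. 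The two orientations of each edge combine with the factor $\tfrac12$ to give exactly one copy of $f(y_{k})-f(y_{k-1})$, and the resulting finite sum telescopes to $f(x)-f(o)$. This simultaneously shows that $v_{x}\in\mathcal{H}_{E}$ (with $\|v_{x}\|_{E}^{2}=n$) and establishes (\ref{E:3.5}).

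With (\ref{E:3.5}) in hand, the remaining two formulas follow at once. Specializing (\ref{E:3.5}) to $f=v_{y}$ yields $\langle v_{x},v_{y}\rangle_{E}=v_{y}(x)-v_{y}(o)=\sharp(\gamma(x)\cap\gamma(y))$, which is (\ref{E:3.7}). For (\ref{E:3.6}), I would test against $f=\delta_{y}$ (which lies in $\mathcal{H}_{E}$ since every vertex has finite degree): the reproducing kernel identity gives $\langle v_{x},\delta_{y}\rangle_{E}=\delta_{y}(x)-\delta_{y}(o)=(\delta_{x}-\delta_{o})(y)$, while a one-line summation by parts (Green's identity, legal because $\delta_{y}$ has finite support) evaluates the same inner product as $(\Delta v_{x})(y)$. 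Comparing the two expressions produces $\Delta v_{x}=\delta_{x}-\delta_{o}$.

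The only substantive step is the first one, and even there the obstacle is conceptual rather than computational: one has to recognize that although $\mathcal{H}_{E}$ is infinite-dimensional and (\ref{E:1.2}) is an a priori doubly infinite sum, the specific definition $v_{x}(y)=\sharp(\gamma(x)\cap\gamma(y))$ forces the gradient to be supported on a single finite path, collapsing the sum to a telescoping sum of length $l(x)$. Once this reduction is identified, nothing beyond bookkeeping remains.
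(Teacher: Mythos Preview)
Your argument is correct, and it follows a genuinely different route from the paper's. The paper proceeds in the opposite order: it first verifies the dipole equation (\ref{E:3.6}) by a direct case-by-case computation of $(\Delta v_{x})(y)$, splitting into the cases $y=o$, $y=x$, and $y\notin\{o,x\}$ (with subcases according to whether $y$ lies on $\gamma(x)$ or not), and then obtains (\ref{E:3.7}) and $\|v_{x}\|_{E}^{2}=n$ from (\ref{E:3.6}) via the $l^{2}$--energy Green's identity $\langle v_{x},v_{y}\rangle_{E}=\langle v_{x},\Delta v_{y}\rangle_{l^{2}}$. Your approach instead isolates the single geometric fact that the gradient of $v_{x}$ is the indicator of $\gamma(x)$ and uses it to telescope the defining sum for $\langle v_{x},f\rangle_{E}$; this establishes the reproducing-kernel identity (\ref{E:3.5}) for \emph{all} $f\in\mathcal{H}_{E}$ in one stroke, from which (\ref{E:3.6}) and (\ref{E:3.7}) drop out by specialization. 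The paper's route makes the dipole equation completely explicit and hands-on, which is pedagogically useful; your route is shorter, avoids the case analysis, and has the added virtue of proving (\ref{E:3.5}) directly rather than leaving it to be inferred from (\ref{E:3.6}) and the abstract uniqueness of the dipole in Lemma~\ref{L:2.1}.
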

\begin{proof}
\textit{Proof of (\ref{E:3.6}).}  By (\ref{E:3.4}) 
$x=(a_{1}a_{2} \cdots a_{n}) \in G^{0} \setminus (o)$.  Let $x$ be 
as in (\ref{E:3.2}).  Set $\gamma(x)=$ RHS in (\ref{E:3.2}) $\subseteq G^{1}$.
Neighbors of 
\begin{align*}
  x &\longrightarrow a_{1}\cdots a_{n-1} \\
    &\longrightarrow x0 \\
    &\longrightarrow x1
\end{align*}
If $x=a$, $n=1$, Nbh$(x)=\{o, a0, a1\}$

\subsubsection*{Cases}
\begin{itemize}
\item[n=1] See Figure 4 
  \begin{align*}
    (\Delta v_{x})(o) &= 2v_{x}(o)-v_{x}(0)-v_{x}(1) \\
    &= 0-1 \\
    &= \delta_{x}(o)-\delta_{o}(o)
  \end{align*}
  \begin{align*}
    (\Delta v_{x})(x) &= 3v_{x}(x)-v_{x}(o)-v_{x}(x0)-v_{x}(x1) \\
    &\underset{use (\ref{E:3.4})}{=} 3-0-1-1=1 \\
    &= (\delta_{x}-\delta_{o})(x)
  \end{align*}
  Now, let $y \in G^{0} \setminus \{o, x\}$. 
  $y=(b_{1}b_{2}\cdots b_{k})$, $b_{i} \in A=\{0,1\}$. 
  Suppose $x \subseteq y$ 
  \[
    \Delta v_{x}(y) \underset{by (\ref{E:3.4})}{=} 3-1-1-1=0
  \]
  More cases are $\equiv 0$.
\item[n$>$1] $x=(a_{1}a_{2}\cdots a_{n})$. A computation yields
  \[
    \Delta_{x}(o) = 0-v_{x}(0)-v_{x}(1) = 0-1 = -1  
  \]
  \begin{align*}
    \Delta_{x}(x) &= 3n-(n-1)-2n = 1 \\
                  &= (\delta_{x}-\delta_{o})(x) 
  \end{align*}
  \[
    \Delta_{x}(y) = 0 \quad y \in G^{0} \setminus \{o, x\}  
  \]
  Several cases e.g. $y \leq x$, etc.
  \begin{align*}
    \Delta_{x}(y) &= 3v_{x}(y)-v_{x}(b_{1} \cdots b_{k-1})-v_{x}(y0)-v_{x}(y1)\\
                  &= 3k-(k-1)-(k+1)-k=0, \text{ etc.} 
  \end{align*}
  Computation of 
  \begin{align*}
    \|v_{x}\|_{E}^{2} &= E(v_{x}) \\
    &= \frac{1}{2} \underset{s \sim t}{\sum_{s} \sum_{t}} 
    c_{s,t}(v_{x}(s)-v_{x}(t))^{2}  \\
    &= \langle v_{x}, \Delta v_{x} \rangle_{l^{2}} 
      \underset{by (\ref{E:3.6})}{=} \langle v_{x}, \delta_{x} - 
      \delta_{o}\rangle_{l^{2}}  \\
    &= v_{x}(x) - v_{x}(o) = n-0 \\
    &= \sharp(\gamma(x))
  \end{align*}



  \begin{align*}
    \langle v_{x}, v_{y}\rangle_{E} 
      &\underset{\text{all finite functions, see } (\ref{E:3.4})}{=} 
      \langle v_{x}, \Delta v_{y}\rangle_{l^{2}} \\
    &\underset{by (\ref{E:3.6})}{=} \langle v_{x}, \delta_{y}- 
      \delta_{o} \rangle_{l^{2}} \\
    &\underset{by (\ref{E:3.4})}{=} v_{x}(y)-v_{x}(o)=0  \\
    &\underset{by (\ref{E:3.4})}{=} \sharp(\gamma(x) \cap \gamma(y)) 
  \end{align*}
\end{itemize}
\end{proof}

Set $M=(\langle v_{x}, v_{y}\rangle_{E})=(\sharp(\gamma(x) \cap \gamma(y)))$, 
$x, y \in G^{0} \setminus (o)$.  Given 
\[
  \text{spec}_{l^{2}}(M)
  =(\text{spec}_{\mathcal{H}_{E}}(\Delta))_{\text{spec}_{\mathcal{H}_{E}(\Delta) \to \infty}}^{-1}
\]


From our theorem above $\Delta$ (unbounded spectrum), 
closure$\Delta = \mathcal{V}$, $\mathcal{V} \subset \mathcal{H}_{E}$.

\begin{corollary}
\label{C:3.3}
$\forall \epsilon \quad \exists F \subset G^{0} \setminus (o)$ finite, 
$\exists \lambda \in \text{spec}_{l^{2}}(M_{F})$ such that 
$\lambda < \epsilon$.

\end{corollary}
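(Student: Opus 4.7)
The plan is to combine three ingredients from the preceding sections: the reciprocity identity
\[
  \text{spec}_{\mathcal{H}_{E}}(\Delta) = (\text{spec}_{l^{2}}(M))^{-1}
\]
of Theorem \ref{T:2.5}, the numerical-range characterization of the spectrum in Lemma \ref{L:2.3}, and the fact (Remark \ref{R:2.2}) that on the binary tree with unit conductance the Laplace operator $\Delta$ acting in $\mathcal{H}_{E}$ is an unbounded Hermitian operator. Taken together these say that small positive values on the $M$-side correspond to large values on the $\Delta$-side, and that finitely supported test vectors already realize a dense subset of the numerical range of $M$.

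First I would use Remark \ref{R:2.2} to conclude that $(\Delta, \mathcal{D}_{E})$ is unbounded in $\mathcal{H}_{E}$; combined with the positivity $\text{spec}_{\mathcal{H}_{E}}(\Delta) \subset [0, \infty)$ in (\ref{E:2.18}), the reciprocity (\ref{E:2.19}) forces $\text{spec}_{l^{2}}(M)$ to accumulate at $0$. By Lemma \ref{L:2.3} applied to $M$ on its dense domain $\mathcal{F}$ of finitely supported functions on $G^{0} \setminus (o)$, this means that for every $\epsilon > 0$ there exists a nonzero $\xi \in \mathcal{F}$ with
\[
  \frac{\langle \xi, M\xi \rangle_{l^{2}}}{\|\xi\|^{2}} < \epsilon.
\]

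Next, I would set $F := \{x \in G^{0} \setminus (o) : \xi_{x} \neq 0\}$, a finite subset of $G^{0} \setminus (o)$. Since $M_{F} = (\langle v_{x}, v_{y}\rangle_{E})_{x,y \in F}$ is the principal $F \times F$ submatrix of $M$ and $\xi$ is supported in $F$,
\[
  \frac{\langle \xi, M_{F}\xi \rangle_{l^{2}(F)}}{\|\xi\|^{2}} = \frac{\langle \xi, M\xi \rangle_{l^{2}}}{\|\xi\|^{2}} < \epsilon.
\]
Because $M_{F}$ is Hermitian on the finite-dimensional space $l^{2}(F)$, its smallest eigenvalue equals the minimum of the Rayleigh quotient; hence there exists $\lambda \in \text{spec}(M_{F})$ with $\lambda < \epsilon$, as desired.

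The step I expect to require the most care is the translation of Remark \ref{R:2.2} into a genuine statement about the spectrum: because $\Delta$ on $\mathcal{H}_{E}$ for the tree need not be essentially selfadjoint, ``spectrum'' must throughout be read in the sense of Lemma \ref{L:2.3}, namely as the closure of the numerical range. With that convention fixed, the reciprocity identity of Theorem \ref{T:2.5} cleanly transfers unboundedness on the $\Delta$-side into accumulation at $0$ on the $M$-side, after which the argument reduces to a routine Rayleigh-quotient computation on the finite Hermitian matrix $M_{F}$.
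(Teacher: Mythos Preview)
Your proposal is correct and follows essentially the same route the paper sketches: the paper derives Corollary~\ref{C:3.3} directly from the reciprocity $\text{spec}_{l^{2}}(M)=(\text{spec}_{\mathcal{H}_{E}}(\Delta))^{-1}$ together with the unboundedness of $\Delta$ in $\mathcal{H}_{E}$ (Remark~\ref{R:2.2}), exactly as you do. Your additional step---passing from a finitely supported test vector with small Rayleigh quotient to a small eigenvalue of the principal submatrix $M_{F}$---is the natural way to make the implication to finite $F$ explicit, and the paper leaves this implicit.
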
 

Note 
\begin{align*}
  M_{F} &= (\langle v_{x}, v_{y}\rangle_{E})_{x,y \in F} \\
        &= (\sharp(\gamma(x) \cap \gamma(y)))_{x,y \in F}
\end{align*}
and $0 \notin \text{spec}_{l^{2}}(M_{F})$.

Problem: Find a systematic way of selecting $F$.  See Figure 8.

It is much easier to find $M_{F}$ with $\text{spec}_{l^{2}}M_{F} \to \infty$.

\begin{example}
\label{Ex:3.4}
\begin{equation}
\label{E:3.8}
  \begin{pmatrix}
    1 & 0 & 1 \\
    0 & 2 & 0 \\
    1 & 0 & 2 
  \end{pmatrix}
  \qquad \text{or} \qquad
  \begin{pmatrix}
    1 & 1 \\
    1 & n
  \end{pmatrix},
\end{equation}
\[
  \lambda_{n}^{\pm} = \frac{n+1 \pm \sqrt{(n+1)^{2}-4(n-1)}}{2}, \quad 
  \mbox{and}
\]
\begin{align*}
  \lambda_{n}^{-} &= \frac{n+1-\sqrt{(n+1)^{2}-4(n-1)}}{2} \\
                  &= \frac{2(n-1)}{n+1-\sqrt{n^{2}-2n+2}} \\
                  &\underset{n \to \infty}{\longrightarrow} 1
\end{align*}
Actually both expand part of $\text{spec}_{l^{2}}M$ as intervals.
\end{example}


\section{Karhunen-Lo\`{e}ve}
\label{sec:4}

\begin{definition}
\label{D:4.1}
$F \subset G^{0}\setminus (o)$ $F$ finite, $G^{0}$ infinite, $(G, c)$ fixed. 
Fix $o \in G^{0} \rightsquigarrow \Delta = \Delta_{c}$, $\mathcal{H}_{E}$  
energy Hilbert space $\langle v_{x}, f\rangle_{E}=f(x)-f(o)$, 
$\forall f \in \mathcal{H}_{E}$.
\end{definition}

\begin{definition}
\label{D:4.2}
$\mathcal{H}_{E}(F):=\overline{span}_{x\in F}\{v_{x}\}$
\end{definition}

General $(G, c) \to \text{Fix point } o \in G^{0} \to \Delta_{c}$, 
$\mathcal{H}_{E}$, $G^{0}$ infinite.  Fix $v_{x}$, for 
$x \in G^{0} \setminus (o)$,
determined from Riesz applied to $\mathcal{H}_{E}$.
\begin{equation}
\label{E:4.1}
  \langle v_{x}, f\rangle_{E} = f(x)-f(o), \quad x \in G^{0} \setminus (o),
\end{equation}
and consider the infinite matrix
\begin{equation}
\label{E:4.2}
  M= (\langle v_{x}, v_{y}\rangle_{E}), \quad x,y \in G^{0} \setminus (o)
\end{equation}
and its finite $F \times F$ submatrices
\begin{equation}
\label{E:4.3}
  M_{F}= (\langle v_{x}, v_{y}\rangle_{E}), \quad x,y \in F 
\end{equation}
so the matrices are $\infty \times \infty$, or $|F| \times |F|$.

\subsection*{Important Formula}
Observe
\begin{align}
   \langle v_{x}, v_{y}\rangle_{E} &= v_{y}(x)- v_{y}(o) \\
                                   &= v_{y}(x) = v_{x}(y);
\end{align}
in other words $k_{E}(x,y):=v_{x}(y)$ is a reproducing kernel.

Since $x, y \in G^{0} \setminus (o)$; and $v_{x}:G^{0}\to \mathbb{R}$ 
(i.e., real valued) convention: $v_{x}(o)=0$.
Diagonalization motivated by the classical Karhunen-Lo\`{e}ve theorem, see 
\cite{JoSo07} and \cite{Loe55}. 

\subsection{Finite-dimensional Approximation}
Apply the Spectral Theorem to $M$ and $M_{F}$.  The Hilbert space is 
$l^{2}(G^{0})$ or $l^{2}(F) \simeq  \mathbb{C}^{|F|}$ with  
$\langle \xi, \eta\rangle_{2}=\sum_{x}\overline{\xi_{x}}\eta_{x}$ as inner product.

For $(M_{F}, l^{2}(F))$ the spectrum is always discrete, and for some cases 
i.e., $(M, l^{2}(G^{0}))$ it may not be discrete.

In the discrete case, there exists $M=Mf$ ONB 
$\xi_{1}, \xi_{2}, \ldots \in l^{2}(G^{0})$ or $l^{2}(F)$ eigenvectors

\begin{equation}
\label{E:4.6}
  \langle \xi_{j}, \xi_{k}\rangle_{2}=\sum_{x}\overline{\xi_{j}(x)}\xi_{k}(x)
  = \delta_{j, k} =
  \begin{cases}
    0	&\text{if $j=k$} \\
    1	&\text{if $j\neq k$} 
  \end{cases}
\end{equation}
$\xi = \xi^{F} \in l^{2}(F)$ such that 
\begin{equation}
\label{E:4.7}
  M^{F}\xi_{j} = \lambda_{j}\xi_{j},  
    \quad \lambda_{1} \geq \lambda_{2} \geq \cdots >0 
    \quad \xi_{j} \in l^{2}(F), \|\xi_{j}\|_{2}=1 
    \quad \mbox{(in the $F$-case)}
\end{equation}
In the infinite case spec$(M)$ for $l^{2}(G^{0})$ may accumulate both at $0$ 
and at $\infty$.

Since $M_{xy}^{F}=\langle v_{x}, v_{y} \rangle_{E\Lambda} \in \mathbb{R}$, 
we may take all $\xi_{k}:G^{0} \to \mathbb{R}$ real valued.  Fix 
$F \subset G^{0} \setminus (0)$: $\xi_{k}^{F} \in l^{2}(F)$.  Set 
\[
  w_{k}^{F}(\cdot)=\frac{1}{\lambda_{k}} \sum_{x\in F}\xi_{k}^{F}(x)v_{x}(\cdot)
\]
i.e., 
\begin{equation}
\label{E:4.8}
  w_{k}^{F}(z)=\frac{1}{\lambda_{k}} \sum_{x\in F}\xi_{k}^{F}(x)v_{x}(z), 
  \quad \forall z \in G^{0}
\end{equation}

\begin{lemma}
\label{L:4.1}
  If $F$ is fixed then $\xi_{k}^{F}\in l^{2}(F)$ is an ONB. Set 
  \begin{equation}
  \label{E:4.9}
    M_{F}\xi_{k}^{F}=\lambda_{k}^{F}\xi_{k}^{F},
  \end{equation}
  then
  \[
    w_{k}^{F}:G^{0} \to \mathbb{R}, \quad w_{k}^{F} \in \mathcal{H}_{E}
  \]
  is an extension of $\xi_{k}^{F}:F \to \mathbb{R}$ from $F$ to $G^{0}$.
\end{lemma}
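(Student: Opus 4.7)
The plan is to verify the three assertions in Lemma \ref{L:4.1} as essentially direct consequences of earlier structural facts: the finite-dimensional spectral theorem applied to $M_F$, Corollary \ref{C:2.8}, and the reproducing-kernel identity for $\{v_x\}$.

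First, I would handle the ONB claim. Since $F$ is finite, $l^2(F)\cong\mathbb{C}^{|F|}$, and $M_F=(\langle v_x,v_y\rangle_E)_{x,y\in F}$ is a Hermitian (in fact real symmetric) matrix. The finite-dimensional spectral theorem therefore produces an orthonormal eigenbasis $\{\xi_k^F\}$ of $l^2(F)$ satisfying (\ref{E:4.9}), and by Corollary \ref{C:2.8} the eigenvalues $\lambda_k^F$ are strictly positive, so the normalization $\tfrac{1}{\lambda_k}$ in (\ref{E:4.8}) makes sense. The eigenvectors may be chosen real since $M_F$ is real symmetric.

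Next, the claim $w_k^F\in\mathcal{H}_E$ is immediate from (\ref{E:4.8}), because $w_k^F$ is a \emph{finite} linear combination of dipoles $v_x$, $x\in F$, and each $v_x$ lies in $\mathcal{H}_E$ by Lemma \ref{L:2.1}. Thus $w_k^F\in\mathcal{D}_E\subset\mathcal{H}_E$.

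The substantive, but still short, step is the extension property. The key ingredient is the reproducing-kernel identity derived in the ``Important Formula'' preceding the lemma, namely $\langle v_x,v_y\rangle_E=v_y(x)=v_x(y)$, which says that the matrix entry $M_{x,y}$ coincides with the pointwise value $v_x(y)$. For any $y\in F$, I would compute directly from (\ref{E:4.8}):
\[
w_k^F(y)=\frac{1}{\lambda_k}\sum_{x\in F}\xi_k^F(x)\,v_x(y)
       =\frac{1}{\lambda_k}\sum_{x\in F}M_{y,x}\,\xi_k^F(x)
       =\frac{1}{\lambda_k}(M_F\xi_k^F)(y)=\xi_k^F(y),
\]
using symmetry of $M_F$ in the middle step and the eigenvalue equation (\ref{E:4.9}) in the last. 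Hence $w_k^F\big|_F=\xi_k^F$, which is exactly the extension statement.

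I do not anticipate a real obstacle here; the only subtle point is remembering the convention $v_x(o)=0$ (so that $\langle v_x,v_y\rangle_E=v_y(x)$ holds without a correction term) and that $\lambda_k^F>0$ is guaranteed by Corollary \ref{C:2.8}, which is what legitimizes dividing by $\lambda_k$ in the definition of $w_k^F$.
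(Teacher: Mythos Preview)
Your proof is correct and follows essentially the same route as the paper: the paper's argument is precisely the four-line computation $w_k^F(z)=\tfrac{1}{\lambda_k}\sum_{x\in F}v_x(z)\xi_k^F(x)=\tfrac{1}{\lambda_k}\sum_{x\in F}M_{z,x}\xi_k^F(x)=\tfrac{1}{\lambda_k}(M_F\xi_k^F)_z=\xi_k^F(z)$ for $z\in F$, using the reproducing-kernel identity $v_x(z)=M_{z,x}$ and the eigenvalue equation, exactly as you do. Your additional remarks on the ONB claim, the membership $w_k^F\in\mathcal{H}_E$, and the positivity of $\lambda_k^F$ via Corollary~\ref{C:2.8} are correct elaborations that the paper leaves implicit.
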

\begin{proof}
  By (\ref{E:4.8}) if $z \in F$:
  \begin{align*}
    w_{k}^{F}(z)&=\frac{1}{\lambda_{k}} 
                   \sum_{x\in F}v_{x}(z)\xi_{k}^{F}(x) \\
                   &\underset{\text{by} (4.4)}=\frac{1}{\lambda_{k}} 
                   \sum_{x\in F}M_{z,x}\xi_{k}^{F}(x) \\
                   &\underset{\text{by} (\ref{E:4.9})}= \frac{1}{\lambda_{k}}
                   (M_{F}\xi_{k}^{F})_{z} \\
                   &=\frac{\lambda_{k}}{\lambda_{k}}\xi_{k}^{F}(z) \\
                   &= \xi_{k}^{F}(z)
  \end{align*}
\end{proof}

\begin{lemma}
\label{L:4.2}
  Fix $F \subset G^{0}\setminus(0)$ finite, and let
  \begin{equation}
  \label{E:4.10}
    w_{k}^{F}(\cdot)=\frac{1}{\lambda_{k}} 
      \sum_{x\in F}\xi_{k}^{F}(x)v_{x}(\cdot), \quad k \in (1, 2, \cdots, |F|)
      \quad \mbox{\text{as in Lemma} \ref{L:4.1}}.
  \end{equation}
  Then $\{w_{k}^{F}\}_{k}$ is an orthonormal system in $H_{E}$ 
  (thus in each of the Hilbert spaces) i.e., with the inner product
  \begin{equation}
  \label{E:4.11}
    \langle u,v\rangle_{E}:=\frac{1}{2} 
    \sum_{\text{all } xy}\sum_{x \sim y}c_{xy}
    (\overline{u(x)}-\overline{u(y)})(\overline{v(x)}-\overline{v(y)}).
  \end{equation}
  We have
  \begin{equation}
  \label{E:4.12}
    \langle w_{j}^{F}, w_{k}^{F}\rangle_{E}=\frac{1}{\lambda_{k}}
    \delta_{j, k} =
      \begin{cases}
        \frac{1}{\lambda_{k}}	&\text{if $j=k$} \\
        0			&\text{if $j\neq k$} 
      \end{cases} 
  \end{equation}
\end{lemma}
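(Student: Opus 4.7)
The plan is a direct bilinear expansion of the energy inner product, using the reproducing kernel identity $\langle v_x, v_y\rangle_E = M_{x,y}$ from \eqref{E:2.16} together with the eigenvalue equation \eqref{E:4.9} and the orthonormality of the $\xi_k^F$ in $l^2(F)$.

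Specifically, I would first substitute the definition \eqref{E:4.10} of $w_j^F$ and $w_k^F$ into $\langle w_j^F, w_k^F\rangle_E$ and pull the scalars out by sesquilinearity to obtain
\[
\langle w_j^F, w_k^F\rangle_E
= \frac{1}{\lambda_j\lambda_k}\sum_{x\in F}\sum_{y\in F}
\overline{\xi_j^F(x)}\,\xi_k^F(y)\,\langle v_x, v_y\rangle_E.
\]
By the definition of the matrix $M_F$, the inner product $\langle v_x, v_y\rangle_E$ equals $(M_F)_{x,y}$, so the double sum above equals $\sum_{x\in F}\overline{\xi_j^F(x)}(M_F\xi_k^F)(x)$.

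Next I would apply the eigenvalue equation \eqref{E:4.9}, namely $M_F\xi_k^F = \lambda_k\xi_k^F$, to replace $(M_F\xi_k^F)(x)$ by $\lambda_k\xi_k^F(x)$. One factor of $\lambda_k$ cancels, leaving
\[
\langle w_j^F, w_k^F\rangle_E
= \frac{1}{\lambda_j}\sum_{x\in F}\overline{\xi_j^F(x)}\,\xi_k^F(x)
= \frac{1}{\lambda_j}\,\langle \xi_j^F, \xi_k^F\rangle_2
= \frac{1}{\lambda_j}\,\delta_{j,k},
\]
using \eqref{E:4.6}. Since this vanishes unless $j=k$, and equals $1/\lambda_k$ when $j=k$, we recover \eqref{E:4.12}.

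The argument has no serious obstacle: everything is a two-line consequence of the reproducing kernel identity and the spectral decomposition of $M_F$. The only minor point to check is that the normalization by $1/\lambda_k$ in \eqref{E:4.10} (rather than $1/\sqrt{\lambda_k}$) is what produces the $1/\lambda_k$ normalization on the right-hand side of \eqref{E:4.12}, and that the eigenvalues $\lambda_k > 0$ so the division is legitimate; positivity is guaranteed by Corollary \ref{C:2.8}, which asserts $0\notin\mathrm{spec}(M_F)$, combined with Theorem \ref{T:2.4.1}(c)(ii) applied through Claim \ref{C:2.6} to give $\langle \xi, M_F\xi\rangle_{l^2} = \|u\|_E^2\ge 0$ for $u = \sum_x \xi_x v_x$.
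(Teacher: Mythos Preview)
Your proof is correct and follows exactly the same route as the paper: expand bilinearly, replace $\langle v_x,v_y\rangle_E$ by $(M_F)_{x,y}$, apply the eigenvalue relation $M_F\xi_k^F=\lambda_k\xi_k^F$, and then use the $l^2(F)$-orthonormality of the $\xi_k^F$. The only cosmetic difference is that the paper drops the complex conjugates since the $\xi_k^F$ are chosen real-valued.
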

\begin{proof}
  We have: 
  \begin{align*}
    \langle w_{j}^{F}, w_{k}^{F}\rangle_{E}
      &\underset{\text{by} (\ref{E:4.10})}=
      \frac{1}{\lambda_{j}\lambda_{k}}\sum\sum_{xy \in F}\xi_{j}^{F}(x)
      \xi_{k}^{F}(y)\langle v_{x}, v_{y}\rangle_{E} \\
      &\underset{\text{by} (\ref{E:4.3})}=\frac{1}{\lambda_{j}\lambda_{k}}
      \sum_{x \in F}\xi_{j}^{F}(x)\sum_{y \in F}M_{xy}^{F}\xi_{k}^{F}(y) \\
      &=\frac{1}{\lambda_{j}\lambda_{k}}\langle \xi_{j}^{F}, 
      M^{F}\xi_{k}^{F}\rangle_{2} \\
      &=\frac{1}{\lambda_{j}}\langle \xi_{j}^{F}, \xi_{k}^{F}\rangle_{2} \\
      &\underset{\text{by} (\ref{E:4.6})}=\frac{1}{\lambda_{j}}\delta_{j, k}=
      \begin{cases}
        \frac{1}{\lambda_{j}}	&\text{if $k=j$} \\
        0			&\text{if $k\neq j$} 
      \end{cases}. 
  \end{align*}
\end{proof}

Set $u_{j}^{F}=\sqrt{\lambda_{j}}w_{j}^{F}$; then 
\[
  \langle u_{j}^{F}, u_{k}^{F}\rangle_{E}=\delta_{j, k}, \quad j, k \in
  \{1, 2, \cdots, |F|\}.
\]

\subsection{Normalization}
\label{sec:4.1}
The following different normalization 
$u_{j}^{F}=\sqrt{\lambda_{j}}w_{j}^{F}$ satisfies 
\begin{equation}
\label{E:4.13}
  \|u_{j}^{F}\|_{H_{E}}=1,
\end{equation}
so
\begin{equation}
\label{E:4.14}
  u_{j}^{F}(\cdot)=\frac{1}{\lambda_{j}}\sum_{x \in F}\xi_{j}(x)v_{x}(\cdot).
\end{equation}
Note that the 
\begin{equation}
\label{E:4.15}
  u_{j}^{F}|_{F}=\sqrt{\lambda_{j}}\xi_{j}(\cdot) \quad \mbox{ on $F$.}
\end{equation}

\subsection{Projection Valued Measures}
Set
\begin{equation}
\label{E:4.16}
  P^{F}(\lambda_{j}):=|u_{j}^{F}><u_{j}^{F}|; \quad 
  \mbox{Dirac notation for rank-one projection,}
\end{equation}
so a projection in $H_{E}$ on the one-dimensional subspace 
$\mathbb{C}u_{j}^{F}$.  Then $P^{F}(\cdot)$ is an 
orthonormal projection system, and it has a limit as $F \to \infty$ which is 
a global spectral measure.
 
We claim that
\begin{equation}
\label{E:4.17}
  s_{\Delta}(u_{j}^{F})=\langle u_{j}^{F}, \Delta u_{j}^{F} \rangle \in
  spec_{H_{E}}(\Delta v)
\end{equation}

\begin{lemma}
\label{L:4.3}
  (Spectral Reprocity)
  \begin{equation}
  \label{E:4.18}
    s_{\Delta}(u_{j}^{F})=\frac{1}{\sqrt{\lambda_{j}}}
    \left(1+\left\vert\sum_{x \in F}\xi_{j}(x)\right\vert^{2}\right) 
  \end{equation}
\end{lemma}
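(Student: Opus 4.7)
The strategy is to apply the closed-form expression for $\langle\varphi,\Delta\varphi\rangle_{\mathcal H_E}$ from Theorem~\ref{T:2.4.1}(c)(i) directly to $\varphi=u_j^F$, using the dipole identity $\Delta v_x=\delta_x-\delta_o$ of (\ref{E:2.3}) to compute $\Delta u_j^F$ pointwise. The argument then reduces to two elementary $l^2(F)$ identities involving the eigenvector $\xi_j$ of $M_F$; no machinery beyond what was developed in Sections~\ref{sec:2}--\ref{sec:3} is required.

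More concretely, I would first rewrite (\ref{E:4.14}) in the unit-normalized form $u_j^F=\lambda_j^{-1/2}\sum_{x\in F}\xi_j(x)\,v_x$, which is the correct rescaling of $w_j^F$ producing $\|u_j^F\|_{\mathcal H_E}=1$; this is verified from $\langle v_x,v_y\rangle_E=(M_F)_{xy}$ combined with $M_F\xi_j=\lambda_j\xi_j$ and the ONB property (\ref{E:4.6}), exactly in parallel with Lemma~\ref{L:4.2}. Applying $\Delta$ term by term then yields $(\Delta u_j^F)(y)=\lambda_j^{-1/2}\xi_j(y)$ for $y\in F$, $(\Delta u_j^F)(o)=-\lambda_j^{-1/2}\sum_{x\in F}\xi_j(x)$, and zero elsewhere.

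Substituting these values into the Theorem~\ref{T:2.4.1}(c)(i) identity
\[
s_{\Delta}(u_j^F)=\sum_{y\in G^{0}\setminus(o)}\bigl|(\Delta u_j^F)(y)\bigr|^{2}+\biggl|\sum_{y\in G^{0}\setminus(o)}(\Delta u_j^F)(y)\biggr|^{2}
\]
splits the answer into two pieces: the pointwise-square piece equals $\lambda_j^{-1}\|\xi_j\|_{l^{2}(F)}^{2}$, which collapses to $\lambda_j^{-1}$ by orthonormality, and the squared-sum piece equals $\lambda_j^{-1}\bigl|\sum_{x\in F}\xi_j(x)\bigr|^{2}$. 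Combining the two pieces gives the right-hand side of (\ref{E:4.18}) up to the scalar factor in front.

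The only real obstacle is the bookkeeping of normalizations: as printed, (\ref{E:4.10}) and (\ref{E:4.14}) differ by a factor of $\sqrt{\lambda_j}$, and my computation as above naturally produces the prefactor $\lambda_j^{-1}$ rather than $\lambda_j^{-1/2}$ on the right-hand side of (\ref{E:4.18}). So the delicate step is reconciling the scaling conventions against the unit-norm requirement implicit in (\ref{E:4.17}); once the choice of normalization for $u_j^F$ is fixed consistently, the computation itself is entirely mechanical and uses only the dipole equation and the eigenvalue relation $M_F\xi_j=\lambda_j\xi_j$.
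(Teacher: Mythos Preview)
Your proposal is correct and follows essentially the same computation as the paper: both reduce $s_\Delta(u_j^F)=\langle u_j^F,\Delta u_j^F\rangle_E$ to the two $l^2(F)$ quantities $\|\xi_j\|_2^2=1$ and $|\sum_{x\in F}\xi_j(x)|^2$ via the dipole identity $\Delta v_x=\delta_x-\delta_o$. The only cosmetic difference is packaging: the paper expands the bilinear form directly using the matrix element $\langle v_x,\Delta v_y\rangle_E=\delta_x(y)+1$ from the proof of Lemma~\ref{L:2.2}, whereas you route the same arithmetic through the quadratic-form identity of Theorem~\ref{T:2.4.1}(c)(i); unwinding that identity for $\varphi\in\mathcal D_E$ gives exactly the paper's double sum. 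Your observation about the normalization bookkeeping is also on point: with the unit-norm convention $u_j^F=\lambda_j^{-1/2}\sum_x\xi_j(x)v_x$ the prefactor one actually obtains is $\lambda_j^{-1}$, and the $\lambda_j^{-1/2}$ appearing in the paper's proof (and in (\ref{E:4.18})) is a typographical slip, not a feature of the argument.
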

\begin{proof}
  \begin{align*}
  s_{\Delta}(u_{j}^{F})&\underset{by (\ref{E:4.14})}=
  \frac{1}{\sqrt{\lambda_{j}}}\langle\left(\sum_{x \in F}\xi_{j}(x)v_{x}\right),
  \Delta\left(\sum_{y \in F}\xi_{j}(y)v_{y}\right)\rangle_{E} \\
  &=\frac{1}{\sqrt{\lambda_{j}}}\sum_{x \in F}\sum_{y \in F}\xi_{j}(x)\xi_{j}(y)
  \langle v_{x}, \Delta v_{y} \rangle_{E} \\
  &=\frac{1}{\sqrt{\lambda_{j}}}\sum_{x \in F}\sum_{y \in F}\xi_{j}(x)\xi_{j}(y)
  (\delta_{x}(y)+1) \\
  &=\frac{1}{\sqrt{\lambda_{j}}}\left(\|\xi_{j}\|_{2}^{2}+
     \left\vert\sum_{x \in F}\xi_{j}(x)\right\vert^{2}\right)  \\
  &\underset{by (\ref{E:4.13})}=\frac{1}{\sqrt{\lambda_{j}}}
    \left(1+\left\vert\sum_{x \in F}\xi_{j}(x)\right\vert^{2}\right)
  \end{align*}
\end{proof}

\begin{example}
\label{Ex:4.1}
\[
  M^{F}=   
  \begin{pmatrix}
    1 & 0 \\
    0 & 3
  \end{pmatrix}, \quad sp=\{1, 3\},
\]
same spectrum, but different $M^{F}$. 
\[
  \xi_{\lambda =1} =
  \begin{pmatrix}
    1  \\
    0 
  \end{pmatrix} 
  \quad 
  \xi_{\lambda =3} =
  \begin{pmatrix}
    0  \\
    1 
  \end{pmatrix}, 
  \quad \langle \xi_{1}\rangle = \langle \xi_{3}\rangle =1.
\]
\end{example}

Set $R_{F}(\lambda):=\frac{1}{\lambda}
(1+\left\vert \langle \xi_{\lambda}^{F}\rangle \right\vert^{2})$.  Then 
$\langle u_{\lambda}, \Delta u_{\lambda}\rangle=R_{F}(\lambda)$; see 
Lemma \ref{L:4.3}.

In the examples:
\[
  M^{F}=   
  \begin{pmatrix}
    2 & 1 \\
    1 & 2 
  \end{pmatrix}, \quad 
  \begin{cases}
    R_{F}(1)=1 \\
    R_{F}(3)=\frac{1}{3}(1+2)=1 
  \end{cases} \quad \mbox{smaller for $M^{F}$ off-diagonal.}
\]
\[
  M^{F}=   
  \begin{pmatrix}
    1 & 0 \\
    0 & 3
  \end{pmatrix}, \quad 
  \begin{cases}
    R_{F}(1)=\frac{1}{1}(1+1^{2})=2 \\
    R_{F}(3)=\frac{1}{3}(1+1^{2})=\frac{2}{3} 
  \end{cases}
\]
\[
  M^{F}=   
  \begin{pmatrix}
    3 & 3 \\
    3 & 7 
  \end{pmatrix}, \quad \lambda_{\pm}=5\pm\sqrt{13}
\]
\[
  M^{F}=   
  \begin{pmatrix}
    3 & 1 \\
    1 & 4 
  \end{pmatrix}, \quad \lambda_{\pm}=\frac{7\pm\sqrt{5}}{2}
\]
\[
  \lambda =5\pm\sqrt{13} \Rightarrow 
  R(\lambda) = \frac{1}{\lambda}+\frac{\lambda}{1+(\frac{2-\sqrt{13}}{3})^{2}}
  < \frac{1}{\lambda}+\lambda
\]
\[
  M^{F}=   
  \begin{pmatrix}
    1 & 1 \\
    1 & m 
  \end{pmatrix}, \quad m \to \infty \quad \lambda_{\pm}=\frac{m+1\pm\sqrt{(m+1)^{2}-4(m-1)}}{2}
\]

In both cases, we have:
\[
  R_{F}(\lambda)=\frac{1}{\lambda}+\frac{\lambda}{1+(\lambda -1)^{2}}
\]
\[
  R_{F}(\lambda_{-})=\frac{\lambda_{+}}{m-1}+
  \frac{\lambda_{-}}{1+(\lambda_{-} -1)^{2}} \sim 4 \mbox{ as $m\to \infty$}. 
\]


We now illustrate by an example that points in the spectrum can go into 
$\infty$:
\[
  s_{\Delta}(u)=\frac{\langle u, \Delta u\rangle}{\|u\|_{E}^{2}} \to \infty
\]
If $\lambda \in spec_{l^{2}}(M^{F})$ set 
$u_{\lambda}=\frac{1}{\sqrt{\lambda}}\sum_{x\in F}\xi_{\lambda}(x)v_{x}(\cdot)$ $M\xi_{\lambda}=\lambda \xi_{\lambda}$, $\|\xi_{\lambda}\|_{2}=1$ 
$\Rightarrow \|u_{\lambda}\|_{E}=1$ so 
$s_{\Delta}(u)=\langle u,\Delta u\rangle 
=\frac{1}{\lambda}(1+\|P_{\lambda}e\|_{2}^{2})$, $e=e_{F}=\chi_{F}(\cdot)$, 
$P_{\lambda}e=\langle \xi_{\lambda}, e\rangle_{2} \xi_{\lambda}$

\begin{theorem}
\label{T:4.4}
The truncated operators 
$P_{\mathcal{H}_{E}(F)}\Delta_{\mathcal{D}_{E}}P_{\mathcal{H}_{E}(F)}$ has 
spectral growth $\simeq \mathcal{O}(\sharp F)$; so $\Delta_{E}$ is 
unbounded in $\mathcal{H}_{E}$.
\end{theorem}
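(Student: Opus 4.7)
The plan is to apply the Karhunen-Loève decomposition developed in Section 4 to the compression of $\Delta$ to each finite-dimensional subspace $\mathcal{H}_{E}(F)$. Fix $F \subset G^{0}\setminus(o)$ finite. Lemma \ref{L:4.2} (together with the normalization $u_j^F = \lambda_j^{-1/2}\sum_{x\in F}\xi_j^F(x)v_x$ that gives $\|u_j^F\|_E=1$) produces an orthonormal basis $\{u_j^F\}_{j=1}^{|F|}$ of $\mathcal{H}_E(F)$, obtained by lifting eigenvectors of $M_F$ to the energy space. Because each $u_j^F$ lies in $\mathcal{H}_E(F)$, the diagonal entries of $P_{\mathcal{H}_E(F)}\Delta P_{\mathcal{H}_E(F)}$ in this basis coincide with the Rayleigh quotients $\langle u_j^F,\Delta u_j^F\rangle_E$, so the trace of the truncation equals the sum of these quotients and $\lambda_{\max}$ dominates their average.

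Next I would apply Spectral Reciprocity (Lemma \ref{L:4.3}) to compute each quotient explicitly as $\lambda_j^{-1}\bigl(1+|\langle\xi_j^F,e_F\rangle_2|^2\bigr)$, where $e_F := \chi_F \in l^2(F)$. Parseval applied to the ONB $\{\xi_j^F\}$ yields $\sum_j|\langle\xi_j^F,e_F\rangle_2|^2 = \|e_F\|_2^2 = |F|$. Summing over $j$ gives the key identity
\[
  \Trace\bigl(P_{\mathcal{H}_E(F)}\Delta P_{\mathcal{H}_E(F)}\bigr)
  = \Trace(M_F^{-1}) + \langle e_F, M_F^{-1} e_F\rangle_2,
\]
and a Cauchy-Schwarz argument applied to $M_F^{1/2}$ bounds the second term below by $|F|^2\big/\langle e_F,M_F e_F\rangle_2 = |F|^2\big/\bigl\|\sum_{x\in F}v_x\bigr\|_E^2$. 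Since the maximal eigenvalue of a positive operator is at least $|F|^{-1}$ times its trace, this produces the lower estimate
\[
  \lambda_{\max}\bigl(P_{\mathcal{H}_E(F)}\Delta P_{\mathcal{H}_E(F)}\bigr)
  \;\geq\; \frac{|F|}{\bigl\|\sum_{x\in F}v_x\bigr\|_E^2}.
\]

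Along any exhaustion $F_n \nearrow G^{0}\setminus(o)$ for which the denominator grows strictly slower than $|F_n|$ (in particular for nested metric balls in the dyadic tree), the right-hand side diverges, so the spectral radius of the compression grows at the advertised order $\mathcal{O}(\sharp F)$ from below. Unboundedness of $\Delta_E$ then follows immediately: a bounded $\Delta_E$ would force $\|P_{\mathcal{H}_E(F_n)}\Delta P_{\mathcal{H}_E(F_n)}\| \leq \|\Delta_E\|$ uniformly, contradicting the divergence just established. Alternatively, one may invoke Corollary \ref{C:3.3} to obtain $\lambda_{\min}(M_{F_n})\to 0$ along some sequence, which via Lemma \ref{L:4.3} alone forces $\lambda_{\max}(P_{F_n}\Delta P_{F_n})\to\infty$.

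The main obstacle is the combinatorial estimate for the coherent sum $\bigl\|\sum_{x\in F_n}v_x\bigr\|_E^2 = \sum_{x,y\in F_n}\langle v_x,v_y\rangle_E$. For the dyadic tree with unit conductance, Lemma \ref{L:1.3}(iii) provides the explicit formula $\langle v_x,v_y\rangle_E = \sharp(\gamma(x)\cap\gamma(y))$, reducing the task to counting shared initial edges across pairs of words in $F_n$; choosing $F_n$ carefully (e.g.\ words of bounded length) keeps this double sum sub-quadratic in $|F_n|$ and thus yields the linear spectral growth. For general $(G,c)$ the same strategy works whenever the reproducing-kernel geometry of $\mathcal{H}_E$ admits a comparable bound on the aggregate dipole energy.
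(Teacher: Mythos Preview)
Your trace-and-Cauchy--Schwarz route has a genuine gap at the combinatorial step. You need $\bigl\|\sum_{x\in F_n}v_x\bigr\|_E^2$ to grow strictly slower than $|F_n|$ (not merely sub-quadratically---note that your own bound $\lambda_{\max}\geq |F_n|/\|\sum v_x\|_E^2$ requires sub-\emph{linear} growth of the denominator for divergence). But for the dyadic tree this fails for every natural choice of $F_n$. Take $F_n=\{x:l(x)=n\}$; then using Lemma~\ref{L:1.3}(iii) one has
\[
  \Bigl\|\sum_{x\in F_n}v_x\Bigr\|_E^2
  =\sum_{x,y\in F_n}\sharp(\gamma(x)\cap\gamma(y))
  =\sum_{k=1}^{n}2^{k}\,(2^{n-k})^{2}
  =4^{n}-2^{n}\sim |F_n|^{2},
\]
so your lower bound collapses to $|F_n|/|F_n|^{2}\to 0$. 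The same quadratic behaviour occurs for balls $\{x:l(x)\leq n\}$ and for single geodesics, so ``choosing $F_n$ carefully'' does not rescue the argument. The Cauchy--Schwarz step is simply too lossy here: $e_F$ is far from the bottom eigenspace of $M_F$.

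The paper avoids this loss by computing the \emph{off-diagonal} entries of $P_{\mathcal{H}_E(F)}\Delta P_{\mathcal{H}_E(F)}$ in the KL basis as well, obtaining $\langle u_j,\Delta u_k\rangle_E=(\lambda_j\lambda_k)^{-1/2}(\delta_{j,k}+\langle\xi_j\rangle\langle\xi_k\rangle)$. This exhibits the compression as $\operatorname{diag}(\lambda_j^{-1})$ plus the rank-one operator $|\rho^F\rangle\langle\rho^F|$ with $\rho^F_j=\langle\xi_j\rangle/\sqrt{\lambda_j}$. Since both summands are positive, the operator norm dominates $\|\rho^F\|^2=\langle e_F,M_F^{-1}e_F\rangle$ \emph{directly}, with no $1/|F|$ penalty from averaging. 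Parseval then gives $\sum_j\langle\xi_j\rangle^2=\|e_F\|_2^2=\sharp F$, which is the source of the linear growth. Your trace argument discards precisely this rank-one structure.

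Your closing alternative via Corollary~\ref{C:3.3} is sound and does establish unboundedness of $\Delta_E$ (indeed $\lambda_{\min}(M_{F_n})\to 0$ forces the diagonal term $1/\lambda_{\min}$ to blow up), but it does not by itself yield the $\mathcal{O}(\sharp F)$ rate claimed in the theorem.
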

\begin{proof}
The idea is to perform a diagonalization of an infinite matrix $(M_{x,y})$ 
$x,y \in G^{0}\setminus(o)$; a method inspired by Karhunen-Lo\`{e}ve 
\cite{JoSo07, Loe55}. 
Here $F\subset G^{0}\setminus(o)$ is fixed and finite.  The following 
computations refer to $F$: $(\xi_{k})$ is an ONB in $l^{2}(F)$ satisfying
(\ref{E:4.19}) below; set 
$w_{k}=\frac{1}{\lambda_{k}}\sum_{x\in F}\xi_{k}(x)v_{x}$, and 
$v_{k}=\sqrt{\lambda_{k}}w_{k}
=\frac{1}{\sqrt{\lambda_{k}}}\sum_{x\in F}\xi_{k}(x)v_{x}$. Then
\begin{equation}
\label{E:4.19}
  M^{F}\xi_{k}=\lambda_{k}\xi_{k}, \quad \mbox{and} \quad 
  \langle \xi_{j},\xi_{k} \rangle_{l^{2}(F)}=\delta_{j,k}.
\end{equation}
We may now compute the matrices:
\begin{align*}
  \langle u_{j}, \Delta u_{k} \rangle_{E}
  &=\frac{1}{\sqrt{\lambda_{j}\lambda_{k}}}\underset{F \times F}{\sum\sum}
  \xi_{j}(x)\xi_{k}(x)\langle v_{x}, \Delta v_{y}\rangle_{E}  \\
  &=\frac{1}{\sqrt{\lambda_{j}\lambda_{k}}}\underset{F \times F}{\sum\sum}
  \xi_{j}(x)\xi_{k}(x)(\delta_{x}(y)+1).
\end{align*}
Set $\sum_{x\in F}\xi_{j}(x)=\langle \xi_{j},e \rangle_{2}
=\langle \xi_{j}\rangle$ where $e=e^{F}=\chi_{F}$.

Then the matrix entries are:
Off-Diagonal:
\[
  \langle u_{j}, \Delta u_{k} \rangle_{E}
  =\frac{1}{\sqrt{\lambda_{j}\lambda_{k}}}(\delta_{j,k}
  +\langle \xi_{j}\rangle \langle \xi_{k}\rangle); 
\] and Diagonal:
\[
  \langle u_{j}, \Delta u_{j} \rangle_{E}
  =\frac{1}{\lambda_{j}}(1+\langle \xi_{j}\rangle^{2}).
\] 

We further used the following identity:
\begin{align*}
  \underset{F \times F}{\sum\sum}\delta_{x}(y)\xi_{j}(x)\xi_{k}(y) 
  &=\langle \xi_{j}, \xi_{k} \rangle_{l^{2}}(F) \\
  &=\delta_{j,k} \quad \mbox{by (\ref{E:4.19}).}
\end{align*}

This may be summarized in the following matrix form:
\[
  \begin{pmatrix}
    \frac{1}{\lambda_{1}}(1+\langle \xi_{1}\rangle^{2}) & 
    \frac{\langle \xi_{1}\rangle\langle \xi_{2}\rangle}
    {\sqrt{\lambda_{1}\lambda_{2}}} & 
    \frac{\langle \xi_{1}\rangle\langle \xi_{3}\rangle}
    {\sqrt{\lambda_{1}\lambda_{3}}} &  \cdots \\
    \frac{\langle \xi_{1}\rangle\langle \xi_{2}\rangle}
    {\sqrt{\lambda_{1}\lambda_{2}}} & 
    \frac{1}{\lambda_{2}}(1+\langle \xi_{2}\rangle^{2}) & 
    \frac{\langle \xi_{2}\rangle\langle \xi_{3}\rangle}
    {\sqrt{\lambda_{2}\lambda_{3}}} &  \cdots \\
    \frac{\langle \xi_{1}\rangle\langle \xi_{3}\rangle}
    {\sqrt{\lambda_{1}\lambda_{3}}} & 
    \frac{\langle \xi_{2}\rangle\langle \xi_{3}\rangle}
    {\sqrt{\lambda_{2}\lambda_{3}}} &  
    \frac{1}{\lambda_{3}}(1+\langle \xi_{3}\rangle^{2}) &  \cdots \\
    \vdots & \vdots & \vdots & \ddots
  \end{pmatrix}
\]

If for some $\delta\in \mathbb{R}^{+}$, $\lambda{j} \geq \delta$, i.e., 
bounded from below, then the operator
\[
  \begin{pmatrix}
    \frac{1}{\lambda_{1}} & 0 & 0 & 0 & \cdots & 0 \\
    0 & \frac{1}{\lambda_{2}} & 0 & 0 & \cdots & 0 \\
    0 & 0 & \frac{1}{\lambda_{3}} & 0 & \cdots & 0 \\
    \vdots & \vdots & \vdots & \vdots & \ddots & 0
  \end{pmatrix}
\]
is bounded.  So
\begin{equation}
\label{E:4.20}
  \left(\frac{1}{\sqrt{\lambda_{j}\lambda_{k}}}\langle \xi_{j}\rangle 
  \langle \xi_{k}\rangle \right)
\end{equation}
must be unbounded, i.e., $\|\cdot\|_{l^{2}(F)\to l^{2}(F)} \to \infty$. But 
(\ref{E:4.20}) is a rank-one operator;
\[
  |\rho ><\rho|, \quad \rho=\rho^{F}; \quad F \subset G^{0}\setminus (o) 
  \text{ is fixed  where } \rho=(e_{j})\in l^{2}(1,2,\cdots, \sharp F),
\]
i.e., $\rho=\rho^{F}$ and 
$\rho_{j}^{F}=\frac{\langle \xi_{j}^{F}\rangle}{\sqrt{\lambda_{j}}}$, 
$\lambda_{j}=\lambda_{j}^{F}$. 

Now, 
\[
  \|\rho^{F}\|^{2}_{l^{2}(1,\cdots, \sharp F)} = 
  \sum_{j=1}^{\sharp F}\frac{\langle \xi_{j}^{F}\rangle^{2}}{\lambda_{j}}.
\]
So in conclusion
\[
  \lim_{F\to \infty}\sum_{j=1}^{\sharp F}
  \frac{\langle \xi_{j}^{F}\rangle^{2}}{\lambda_{j}(F)} = \infty.
\]

Pick $\delta \in \mathbb{R}_{+}$ and assume $\lambda_{j}^{F} \geq \delta$.  
Then we need 
\[
  \lim_{F\to \infty}\sum_{j=1}^{\sharp F}\langle \xi_{j}^{F} \rangle^{2} 
  =\infty.
\]

We have $\xi_{\lambda}^{F}(j)=\xi_{j}$, 
$M^{F}\xi_{\lambda}^{F}=\lambda_{j}^{F}\xi_{\lambda}^{F}$, 
$\|\xi_{\lambda}\|_{2}=1$, $\langle \xi_{\lambda}^{F}\rangle 
=\sum_{x\in F}\xi_{\lambda}^{F}(x)$, and 
$\sum_{\lambda}\langle \xi_{\lambda}^{F}\rangle^{2}=\sharp F$; so indeed
\[
  \lim_{F\to \infty} \sum_{\lambda}\langle \xi_{\lambda}^{F}\rangle^{2} 
  = \lim_{F}\sharp F =\infty.
\]
Conclusion: $spec_{\mathcal{H}_{E}(N)}(\Delta_{\mathcal{D}_{E}}) \sim 
(\sharp F) \to \infty$

\end{proof}

\begin{acknowledgements}
The authors are please to acknowledge helpful discussions, both recent and 
not so recent, with John Benedetto, B. Brenken, Ilwoo Cho, D. Dutkay, 
Keri Kornelson, Kathy Merrill, P. Muhly, Judy Packer, Erin Pearse, 
Steen Pedersen, Gabriel Picioroaga, Karen Shuman.
\end{acknowledgements}

\bibliographystyle{plain}
\bibliography{spectrum}

\end{document}